%% file: main.tex
\documentclass{lmcs}

\usepackage{etoolbox}
\makeatletter
\patchcmd{\@tocline}
  {\hfil\hbox to\@pnumwidth{\@tocpagenum{#7}}\par}
  {\dotfill\hbox to\@pnumwidth{\@tocpagenum{#7}}\par}
  {}{}
\makeatother
\usepackage{subcaption}
\usepackage{calc}
\usepackage[shortlabels]{enumitem}
\setlist{nosep}
\usepackage{booktabs}
\usepackage{longtable}
\usepackage[width=.8\textwidth]{caption}
\usepackage{mathtools}
\allowdisplaybreaks[1]
\counterwithout{equation}{section}
\usepackage{nicefrac}
\usepackage{bm}
\usepackage{proof}
\usepackage{savesym}
\savesymbol{openbox}
\usepackage{amsthm}

\usepackage{thmtools}
\savesymbol{Bbbk}
\usepackage{amssymb}
\usepackage{stmaryrd}
\usepackage{array}
\usepackage{tabularx}
\newcolumntype{Y}{>{\centering\arraybackslash}X}
\newcolumntype{C}[1]{>{\centering\arraybackslash}p{#1}}
\usepackage{multirow}
\usepackage[dvipsnames]{xcolor}

\usepackage[most]{tcolorbox}
\newtcolorbox{figurebox}{
  colback=white,
  enhanced,
  colframe=black,
  boxrule=1pt,
  arc=0pt,
  width=\textwidth
}
\usepackage[
  backend=biber,
  style=numeric,
  sorting=nyt,
  sortcites=true,
  backref=true,
  useprefix=true,
  maxnames=8,
  minnames=7,
  maxalphanames=4,
  minalphanames=3
]{biblatex}
\newbibmacro*{printdoi}{\iffieldundef{doi}{}{\printtext{[DOI: }\href{https://doi.org/\thefield{doi}}{\nolinkurl{\thefield{doi}}}\printtext{]}}}
\newbibmacro*{printeprint}{\iffieldundef{eprint}{}{\printtext{[PDF: }\href{https://arxiv.org/abs/\thefield{eprint}}{\nolinkurl{arXiv:\thefield{eprint}}}\printtext{]}}}
\newbibmacro*{printurl}{\iffieldundef{url}{}{\printtext{[PDF: }\href{\thefield{url}}{\nolinkurl{\thefield{url}}}\printtext{]}}}
\DefineBibliographyStrings{english}{
  backrefpage = {Cited p\adddot},
  backrefpages = {Cited pp\adddot}
}
\DeclareBibliographyDriver{misc}{%
  \printnames{author}
  \newunit\newblock
  \printfield{title}
  \newunit\newblock
  \printfield{howpublished}
  \newunit\newblock
  \printfield{year}
  \newunit\newblock
  \usebibmacro{printdoi}
  \usebibmacro{printeprint}
  \usebibmacro{printurl}
  \newunit\newblock
  \usebibmacro{pageref}
  \finentry
}
\patchcmd{\bibsetup}{\interlinepenalty=5000}{\interlinepenalty=10000}{}{}
\usepackage{xurl}
\usepackage[
  colorlinks=true,
  linktocpage=true,
  linkcolor=NavyBlue,
  citecolor=NavyBlue,
  urlcolor=NavyBlue
]{hyperref}
\usepackage[noabbrev,capitalize]{cleveref}

\usepackage{tikzit}
\usetikzlibrary{arrows.meta}
\input{qc.tikzstyles}
\input{qc.tikzdefs}

\theoremstyle{plain}\newtheorem{definition}[thm]{Definition}
\theoremstyle{plain}\newtheorem{theorem}[thm]{Theorem}
\theoremstyle{plain}\newtheorem{proposition}[thm]{Proposition}
\theoremstyle{plain}\newtheorem{lemma}[thm]{Lemma}
\theoremstyle{plain}\newtheorem{corollary}[thm]{Corollary}
\theoremstyle{plain}\newtheorem{example}[thm]{Example}

\usepackage{macros}

\begin{document}

\title[Controlled PROP: Theory and Applications to Quantum Circuits]{Controlled PROP: \\Theory and Applications to Quantum Circuits}
\author[N.~Delorme]{Noé Delorme}
\author[S.~Perdrix]{Simon Perdrix}
\address{Université de Lorraine, CNRS, INRIA, LORIA, 54000 Nancy, France}
\email{noe.delorme@inria.fr, simon.perdrix@inria.fr}

\begin{abstract}
  \noindent Control is a fundamental concept in quantum and reversible computational models. It enables the conditional application of a transformation to a system, depending on the state of another system. We introduce a general framework for diagrammatic reasoning featuring control as a constructor. To this end, we provide an elementary axiomatization of control functors, extending the standard formalism of PROPs to controlled PROPs. As an application, we show that controlled PROPs facilitate diagrammatic reasoning for quantum circuits. In particular, they admit a simple and complete axiomatization involving at most three qubits, whereas in the standard PROP setting, completeness necessarily requires axioms involving arbitrarily many qubits.
\end{abstract}

\maketitle

\tableofcontents

\input{sections}

\section*{Acknowledgements}

The authors would like to thank Alexandre Clément, Emmanuel Jeandel, Louis Lemonnier, William Schober, and Scott Wesley for fruitful discussions. This work is supported by the Plan France 2030 through the PEPR integrated project EPiQ ANR-22-PETQ-0007 and the HQI platform ANR-22-PNCQ-0002; and by the European Union through the MSCA Staff Exchange project Qcomical HORIZON-MSCA-2023-SE-01. The project is also supported by the Maison du Quantique MaQuEst.

\printbibliography

\appendix

\input{appendices}

\end{document}

%% file: qc.tikzstyles
\tikzstyle{nothing}=[shape=rectangle, minimum height=0.5cm]
\tikzstyle{gate}=[fill=bkg, draw=black, shape=rectangle, minimum height=0.43cm, minimum width=0.43cm, inner sep=0.05cm, thick, tikzit fill=white]
\tikzstyle{doublegate}=[fill=bkg, draw=black, shape=rectangle, minimum height=0.93cm, minimum width=0.43cm, inner sep=0.05cm, thick, tikzit fill=white]
\tikzstyle{bcontrol}=[fill=black, draw=black, shape=circle, scale=0.4, thick]
\tikzstyle{wcontrol}=[fill=bkg, draw=black, shape=circle, scale=0.4, thick, tikzit fill=white]
\tikzstyle{bwcontrol}=[draw=black, shape=circle, scale=0.4, path picture={
    \fill[bkg] (path picture bounding box.center) circle(0.5);
    \fill[black] (path picture bounding box.center) [radius=0.5] -- ++(0:0.5) arc[start angle=0, end angle=180, radius=0.5] -- cycle;
}, rotate=45, thick, tikzit fill={rgb,255: red,128; green,128; blue,128}]
\tikzstyle{ccontrol}=[fill=bkg, draw=black, inner sep=0.1em, shape=rounded rectangle, minimum height=0.2cm, thick, tikzit fill=white, font={\tiny}]
\tikzstyle{sbcontrol}=[fill=black, draw=black, shape=rectangle, thick, minimum width=0.14cm, minimum height=0.14cm, inner sep=0pt, outer sep=0pt]
\tikzstyle{swcontrol}=[fill=white, draw=black, shape=rectangle, thick, minimum width=0.14cm, minimum height=0.14cm, inner sep=0pt, outer sep=0pt]
\tikzstyle{sccontrol}=[fill=bkg, draw=black, inner sep=0.1em, shape=rectangle, minimum height=0.2cm, thick, tikzit fill=white, font={\tiny}]
\tikzstyle{not}=[shape=circle, path picture={ 
    \draw[black](path picture bounding box.north) -- (path picture bounding box.south) (path picture bounding box.west) -- (path picture bounding box.east);
}, draw=black, scale=0.9, thick]
\tikzstyle{phase}=[fill=bkg, draw=black, inner sep=0.15em, shape=rounded rectangle, minimum height=0.43cm, thick, tikzit fill=white]
\tikzstyle{winit}=[fill=bkg, draw=black, shape border rotate=-90, regular polygon, regular polygon sides=3, inner sep=0pt, outer sep=0pt, rounded corners=0.2mm, thick, minimum size=0.35cm, tikzit fill=white, xshift=1mm]
\tikzstyle{wterm}=[fill=bkg, draw=black, shape border rotate=90, regular polygon, regular polygon sides=3, inner sep=0pt, outer sep=0pt, rounded corners=0.2mm, thick, minimum size=0.35cm, tikzit fill=white, xshift=-1mm]
\tikzstyle{binit}=[fill=black, draw=black, shape border rotate=-90, regular polygon, regular polygon sides=3, inner sep=0pt, outer sep=0pt, rounded corners=0.2mm, thick, minimum size=0.35cm, xshift=1mm]
\tikzstyle{bterm}=[fill=black, draw=black, shape border rotate=90, regular polygon, regular polygon sides=3, inner sep=0pt, outer sep=0pt, rounded corners=0.2mm, thick, minimum size=0.35cm, xshift=-1mm]
\tikzstyle{ginit}=[fill={rgb,255: red,128; green,128; blue,128}, draw=black, shape border rotate=-90, regular polygon, regular polygon sides=3, inner sep=0pt, outer sep=0pt, rounded corners=0.2mm, thick, minimum size=0.35cm, xshift=1mm]
\tikzstyle{gterm}=[fill={rgb,255: red,128; green,128; blue,128}, draw=black, shape border rotate=90, regular polygon, regular polygon sides=3, inner sep=0pt, outer sep=0pt, rounded corners=0.2mm, thick, minimum size=0.35cm, xshift=-1mm]
\tikzstyle{discard}=[path picture={
    \draw[black,thick](-0.2cm,0)--(-0.12cm,0);
    \draw[black,thick,line cap=round](-0.13cm,-0.15cm)--(-0.13cm,0.15cm) (-0.07cm,-0.11cm)--(-0.07cm,0.11cm) (-0.01cm,-0.07cm)--(-0.01cm,0.07cm) (0.05cm,-0.03cm)--(0.05cm,0.03cm);
}, minimum width=0.4cm, draw=none, outer sep=0pt, minimum height=0.4cm, xshift=-1mm]
\tikzstyle{Zmeasure}=[fill=bkg, draw=black, shape=rectangle, minimum height=0.43cm, minimum width=0.43cm, inner sep=0pt, thick, path picture={
        \draw[thick, ->] (-0.05cm,-0.215cm) -- (0.10cm,0.15cm);
        \draw (path picture bounding box.south west) arc[start angle=180,end angle=0,radius=0.21cm];
        \node at (-0.09cm,+0.09cm) {\scalebox{0.6}{$\bm{Z}$}};
}, rounded corners=1mm]
\tikzstyle{Xmeasure}=[fill=bkg, draw=black, shape=rectangle, minimum height=0.43cm, minimum width=0.43cm, inner sep=0pt, thick, path picture={
        \draw[thick, ->] (-0.05cm,-0.215cm) -- (0.10cm,0.15cm);
        \draw (path picture bounding box.south west) arc[start angle=180,end angle=0,radius=0.21cm];
        \node at (-0.09cm,+0.09cm) {\scalebox{0.6}{$\bm{X}$}};
}, rounded corners=1mm]
\tikzstyle{bar}=[minimum height=0.43cm, minimum width=0.43cm, path picture={
        \draw[thick, line cap=round] (-0.04cm,-0.08cm) -- (0.04cm,0.08cm);
}]
\tikzstyle{arrow}=[-stealth, thick, draw=black]
\tikzstyle{hookarrow}=[right hook-stealth, thick, draw=black]

\tikzstyle{wire}=[-, thick, draw=black, line cap=round]
\tikzstyle{dashed}=[-, draw=black, dash pattern=on 1.6pt off 1.1pt, thick, line cap=round]
\tikzstyle{border}=[-, fill=bkg, draw=black, thick, line cap=round, tikzit fill=white]
\tikzstyle{gwire}=[-, draw={rgb,255: red,128; green,128; blue,128}, thick, line cap=round]
\tikzstyle{gdashed}=[-, draw={rgb,255: red,128; green,128; blue,128}, dash pattern=on 1.6pt off 1.1pt, thick, line cap=round]
\tikzstyle{register}=[-, double, thick]

%% file: sections.tex
\section{Introduction}\label{sec:intro}

\subsection{Controlled PROPs}
Diagrammatic reasoning has been very successful in recent years across various domains, including quantum computing \cite{coeckeduncan,zxcliffordt,zw,zh,abramskycoecke,completeness}, linear algebra \cite{sobocinski,interactingbialgebras}, control theory \cite{signalflowgraphs,baez}, natural language processing \cite{coeckenlp,zengnlp} and symbolic dynamics \cite{jeandelstrongshift} to name a few. The natural mathematical framework for diagrammatic reasoning is that of symmetric monoidal categories, and more precisely, PROPs. This formalism allows representing processes and how they interact in space-time via sequential and parallel compositions of string diagrams.

Control is a central concept in quantum and reversible computational models. Roughly speaking, a transformation can be conditioned on an additional system such that, depending on its state, the transformation is either applied or not. For instance, controlling a NOT gate yields the well-known CNOT gate, and adding another control leads to the Toffoli gate. Notice that various quantum programming languages \cite{quipper,protoquipper,microsoftqsharp,qiskit,cirq} provide primitives that allow controlling gates or even circuits.

In this paper, we introduce a formalization of control as a constructor, i.e.~as a \emph{control functor} in a PROP: every diagram  $f$ with $n$ inputs and $n$ outputs yields a new diagram $\ctrl(f)$ with $n+1$ inputs and $n+1$ outputs, which represents its controlled version. A \emph{controlled PROP} is then defined as a PROP equipped with a control functor. This gives a versatile language that captures the fundamental properties of control and facilitates reasoning about processes. For instance, if an equation $f=g$ is derivable, so is its controlled version $\ctrl(f)=\ctrl(g)$, whereas the latter can be cumbersome to derive when using the standard PROP formalism. Controlled PROPs rely on three constructors -- sequential composition, parallel composition, and control -- and specify how they interact. In particular, control does not preserve parallel composition, i.e. $\ctrl(f\parop g)\neq \ctrl(f)\parop \ctrl (g)$ since the latter acts on one more wire. However, it distributes naturally, as captured by the axioms of control functors.

We also point out a fundamental property playing a central role when handling both control and reversible operations at the same time. Given an arbitrary operation $f$ and a reversible operation $g$, conjugating $f$ by $g$ gives the compute/uncompute pattern ``apply $g$, then $f$, then the inverse of $g$''. This pattern is widely used in algorithm design, for example for a change of basis, for the preparation and release of ancillas, for the implementation of oracles, and for circuit optimization \cite{protoquipper}. The control of such a pattern can be reduced to the control of $f$ alone. This property is known as the \emph{conjugation law} and yields the notion of \emph{conjugated controlled PROP}. This law is graphically represented as follows.
\begin{gather*}
    \tf{conjugation_left}=\tf{conjugation_right}
\end{gather*}

\subsection{Quantum circuits}
An important motivation for our framework comes from quantum computing, where control is extensively used. Complete axiomatizations for quantum circuits have been established in earlier works using the formalism of PROPs \cite{completeness,extensions,minimality}. Completeness ensures that every valid equivalence between quantum circuits can be derived. These results do not rely on control as a constructor, but point out the pivotal role of controlled operations in diagrammatic reasoning, and motivate the introduction of  controlled PROPs. As our main application, we define \emph{controllable quantum circuits} and give a complete axiomatization for them. In its simplest form (see \cref{fig:axuniintro}), where controllable quantum circuits are seen as a conjugated controlled PROP, it contains four simple axioms \labelcref{eq:p_intro,,eq:pplus_intro,,eq:hh_intro,,eq:bg_intro}, together with the standard Euler decomposition axiom \labelcref{eq:e_intro}. We also provide a complete axiomatization for controllable circuits seen as a controlled PROP (not conjugated), with only a few additional relations corresponding to particular instances of the conjugation law.

\begin{figure}[t!]
    \begin{figurebox}\begin{center}
        \vspace{-1em}
        \hspace{-1em}\begin{subfigure}{2.9cm}
            \begin{equation}\label{eq:p_intro}\tag{$\textup{p}$}
                \tf{p_left}=\tf{p_right}
            \end{equation}
        \end{subfigure}\hspace{2em}
        \begin{subfigure}{4.5cm}
            \begin{equation}\label{eq:pplus_intro}\tag{$\textup{p}+$}
                \tf{pplus_left}=\tf{pplus_right}
            \end{equation}
        \end{subfigure}\hspace{2em}
        \begin{subfigure}{4.6cm}
            \begin{equation}\label{eq:hh_intro}\tag{$\textup{hh}$}
                \tf{hh_left}=\tf{hh_right}
            \end{equation}
        \end{subfigure}

        \vspace{1em}
        \begin{subfigure}{4.7cm}
            \begin{equation}\label{eq:bg_intro}\tag{$\textup{bg}$}
                \tf{bgold_03}=\tf{bgold_01}
            \end{equation}
        \end{subfigure}

        \vspace{1em}
        \hspace{-1em}\begin{subfigure}{11.3cm}
            \begin{equation}\label{eq:e_intro}\tag{$\textup{e}$}
                \tf{eprime_var_left}=\tf{eprime_var_right}
            \end{equation}
        \end{subfigure}
    \end{center}\end{figurebox}
\caption{Complete axiomatization for quantum circuits defined as a conjugated controlled PROP generated by the Hadamard gate $\protect\tf{h}$ and global phase gates $\protect\tf{palpha}$, using the notations $\protect\tf{Zalpha}\defeq\ctrl(\!\protect\tf{palpha}\!)$ and $\protect\tf{not}\defeq\protect\tf{notdefz}$. See \cref{sec:cqc} for details. \label{fig:axuniintro}}
\end{figure}

Remarkably, in the axiomatization we introduce for controllable quantum circuits, the axioms involve a bounded number of qubits, whereas it has been shown in \cite{minimality} that any complete axiomatization for (unitary) quantum circuits, defined as a PROP, requires at least one axiom involving $n$ qubits for any $n\in\N$. Hence, controlled PROPs circumvent the unboundedness issue of PROPs, making them particularly well suited to diagrammatic reasoning.

As another application, we consider controllable quantum circuits with auxiliary qubits, a.k.a. \emph{ancillas}, for which we also introduce a simple complete axiomatization. It illustrates that controlled PROPs are not necessarily made up only of endomorphisms (even if the control functor only applies to endomorphisms). Throughout the paper, we also give several examples illustrating the versatility of the formalism, including classical reversible circuits and qudit systems.

\subsection{Related work}
In graphical languages, control is a central notion that has been used in various contexts, but not as a constructor. For quantum circuits, the completeness procedure is based on a back-and-forth translation with optical circuits \cite{completeness} that extensively relies on controlled operations and their basic algebra. In this context, controlled operations are treated as shorthand notation, as they can be decomposed into generators. 

In the ZX-calculus, controlled diagrams have also been extensively used, in particular for defining normal forms \cite{genericnf,controlzxw,fermionsqubits}. Notice that the expressivity of the language yields a slightly different definition of control, allowing the control of arbitrary diagrams (not necessarily with the same number of inputs and outputs). Again, in this context, controlled operations are treated as shorthand notation. 

Another interesting approach is to extend the quantum circuit model with a new kind of composition that graphically consists of vertically connecting two gates, where the behavior of the resulting circuit is a generalized controlled operation. This allows one, for instance, to represent a CNOT by connecting a Z gate and an X gate \cite{schober}. 

Additionally, various quantum programming languages (especially circuit description languages) allow for control of quantum circuits. In particular, Quipper supports control and a \emph{withcompute} primitive, which amounts to conjugating a circuit with another. These features  have been semantically formalized using controllable categories \cite{protoquipper} which involve control functions (not necessarily functors) and \emph{withcompute} as a constructor, to provide a categorical model for high-level quantum programming languages. 

Another quantum programming language introduced  recently in \cite{qcarejustphase} allows arbitrary unitaries to be generated from some simple global phases combined with a quantum analog of the \emph{if let} primitive, which can be embedded in our formalism (see \cref{ex:qcjustphase}). 

$\sqrt\Pi$ \cite{sqrtpi} and Hadamard-$\Pi$ \cite{hadamardpi} are two powerful quantum programming languages for approximately universal fragments of quantum computing, that are equipped with sound and complete equational theories. These languages differ from our approach in several ways: first, they consider fragments of unitary transformations, second, they are not graphical, and third, they rely on the direct sum as a primitive rather than control functors. Indeed, this promising approach based on rig categories relies on two constructors $\parop$ and $\oplus$, corresponding to the tensor product and the direct sum respectively. These two products provide a means of expressing control and enable powerful decompositions of unitary transformations, which, however, cannot always be interpreted as graphical circuit-like transformations. In contrast, control functors categorify the intuitive representation of control in circuits, in particular in the reversible and quantum cases.

Finally, very recently, a construction to turn PROPs with an identified involution into controlled PROPs demonstrated that a control structure is essentially captured by a rig structure \cite{onerig} (see \cref{ex:onerig}). However, this rig-based approach is currently specific to the binary case, whereas controlled PROPs are well suited for more general settings (see \cref{ex:hilbcontrolledprop}).

\section{PROPs}\label{sec:prop}

Graphical languages are defined formally with the formalism of \emph{PROPs}, which are a particular kind of categories admitting a nice representation as \emph{string diagrams}.

\begin{definition}[PROP]
    A \emph{PROP} is a strict symmetric monoidal category whose objects are the natural numbers with addition as tensor product.
\end{definition}

In a nutshell, a PROP $\cat P$ consists of collections of \emph{morphisms} $\cat P(n,m)$ for any objects $n,m\in \mathbb N$ together with two compositions $\seqop$ and $\parop$ satisfying the \emph{coherence laws} depicted in \cref{fig:coherencelawsprop}. We write $f:n\to m$ when $f\in\cat{P}(n,m)$, where $\dom(f)\defeq n$ and $\cod(f)\defeq m$ are respectively called \emph{domain} and \emph{codomain}. Morphisms can be composed \emph{sequentially} ($f\seqop g:n\to m$, for any $f:n\to p$, $g:p\to m$) and in parallel ($f\parop g:n+p\to m+q$, for any $f:n\to m$, $g:p\to q$). A PROP has specific morphisms: identities $\id{n}:n\to n$ for any $n\in\N$ and swaps $\swap{n,m}: n+m\to m+n$ for any $n,m\in\N$. Morphisms admit a nice graphical representation as string diagrams: sequential compositions are depicted horizontally; parallel compositions are depicted vertically; $\id{n}$ is depicted as $n$ wires in parallel; $\swap{n,m}$ is depicted as an exchange of $n$ and $m$ wires; and a morphism $f:n\to m$ is depicted as a box with $n$ input wires and $m$ output wires. Here are some examples.
\begin{gather*}
    \begin{array}{cccccc}
        \id{0} \hspace{1em}&\hspace{1em} \id{1} \hspace{1em}&\hspace{1em} \swap{1,1} \hspace{1em}&\hspace{1em} f \hspace{1em}&\hspace{1em} f\seqop g \hspace{1em}&\hspace{1em} f\parop g\\
        \tf{empty} \hspace{1em}&\hspace{1em} \tf{id} \hspace{1em}&\hspace{1em} \tf{swap} \hspace{1em}&\hspace{1em} \tf{f} \hspace{1em}&\hspace{1em} \tf{fseqg} \hspace{1em}&\hspace{1em} \tf{fparg}
    \end{array}
\end{gather*}

\begin{figure}[t!]
    \begin{figurebox}\begin{center}
        \vspace{-0.5em}
        \hspace{-1em}\begin{subfigure}{8.4cm}
            \begin{equation}\label{eq:seqid}
                \begin{array}{ccccc}
                    \id{n}\seqop f&=&f&=&f\seqop\id{m} \\[0.3em]
                    \tf{seqid_left}&=&\tf{seqid_mid}&=&\tf{seqid_right}
                \end{array}
            \end{equation}
        \end{subfigure}

        \vspace{1.4em}
        \hspace{-1em}\begin{subfigure}{6.6cm}
            \begin{equation}\label{eq:parid}
                \begin{array}{ccccc}
                    \id{0}\parop f&=&f&=&f\parop\id{0} \\[0.3em]
                    \tf{parid_left}&=&\tf{parid_mid}&=&\tf{parid_right}
                \end{array}
            \end{equation}
        \end{subfigure}\hspace{1em}
        \begin{subfigure}{5.0cm}
            \begin{equation}\label{eq:sum}
                \begin{array}{ccc}
                    \id{n+m}&=&\id{n}\parop\id{m} \\[0.3em]
                    \tf{sum_left}&=&\tf{sum_right}
                \end{array}
            \end{equation}
        \end{subfigure}

        \vspace{1.4em}
        \hspace{-1em}\begin{subfigure}{7.7cm}
            \begin{equation}\label{eq:seqasso}
                \begin{array}{ccc}
                    (f\seqop g)\seqop h&=&f\seqop (g\seqop h) \\[0.3em]
                    \tf{seqasso_left}&=&\tf{seqasso_right}
                \end{array}
            \end{equation}
        \end{subfigure}\hspace{0em}
        \begin{subfigure}{6.0cm}
            \begin{equation}\label{eq:parasso}
                \begin{array}{ccc}
                    (f\parop g)\parop h&=&f\parop (g\parop h) \\[0.3em]
                    \tf{parasso_left}&=&\tf{parasso_right}
                \end{array}
            \end{equation}
        \end{subfigure}

        \vspace{1.4em}
        \hspace{-1em}\begin{subfigure}{8.2cm}
            \begin{equation}\label{eq:inter}
                \begin{array}{ccc}
                    (f_1\seqop g_1)\parop(f_2\seqop g_2)&=&(f_1\parop f_2)\seqop(g_1\parop g_2) \\[0.3em]
                    \tf{inter_left}&=&\tf{inter_right}
                \end{array}
            \end{equation}
        \end{subfigure}

        \vspace{1.4em}
        \hspace{-1em}\begin{subfigure}{6.7cm}
            \begin{equation}\label{eq:symid}
                \begin{array}{ccccc}
                    \swap{n,0}&=&\id{n}&=&\swap{0,n} \\[0.3em]
                    \tf{symid_left}&=&\tf{symid_mid}&=&\tf{symid_right}
                \end{array}
            \end{equation}
        \end{subfigure}

        \vspace{1.4em}
        \hspace{-1em}\begin{subfigure}{5.4cm}
            \begin{equation}\label{eq:inv}
                \begin{array}{ccc}
                    \swap{n,m}\seqop\swap{m,n}&=&\id{n+m} \\[0.3em]
                    \tf{inv_left}&=&\tf{inv_right}
                \end{array}
            \end{equation}
        \end{subfigure}\hspace{1em}
        \begin{subfigure}{6.6cm}
            \begin{equation}\label{eq:nat}
                \begin{array}{ccc}
                    (f\parop g)\seqop\swap{m,q}&=&\swap{n,p}\seqop(g\parop f) \\[0.3em]
                    \tf{nat_left}&=&\tf{nat_right}
                \end{array}
            \end{equation}
        \end{subfigure}

        \vspace{1.4em}
        \hspace{-1em}\begin{subfigure}{7.7cm}
            \begin{equation}\label{eq:symtop}
                \begin{array}{ccc}
                    \swap{n+1,m}&=&(\id{n}\parop\swap{1,m})\seqop(\swap{n,m}\parop\id{1}) \\[0.3em]
                    \tf{symtop_left}&=&\tf{symtop_right}
                \end{array}
            \end{equation}
        \end{subfigure}

        \vspace{1.4em}
        \hspace{-1em}\begin{subfigure}{7.7cm}
            \begin{equation}\label{eq:symbot}
                \begin{array}{ccc}
                    \swap{n,m+1}&=&(\swap{n,m}\parop\id{1})\seqop(\id{m}\parop\swap{n,1}) \\[0.3em]
                    \tf{symbot_left}&=&\tf{symbot_right}
                \end{array}
            \end{equation}
        \end{subfigure}
    \end{center}\end{figurebox}
    \caption{\label{fig:coherencelawsprop}Coherence laws of PROPs.}
\end{figure}

When it is clear from the context, we may write \tf{idn}, \tf{idbar}, or even \tf{id} instead of drawing $n$ wires in parallel to simplify diagrams. Moreover, we denote by $\cat{P}_\textup{endo}$ the sub-PROP of endomorphisms of $\cat{P}$, defined by its collections of morphisms $\cat{P}_\textup{endo}(n,n)\defeq\cat{P}(n,n)$ and $\cat{P}_\textup{endo}(n,m)\defeq\varnothing$ when $m\ne n$.

\begin{example}\label{ex:proplinisouni}
    Given a dimension $d\in\N$, let $\catlinmap{d}$ be the PROP of qudit linear maps, meaning that $\catlinmap{d}(n,m)=\linmap{d^n}{d^m}$ is the collection of linear maps from $n$ qudits to $m$ qudits, and where the parallel composition is the usual tensor product. Moreover, let $\catisomap{d}$ and $\catunimap{d}$ be the sub-PROPs of $\catlinmap{d}$ restricted to isometries\footnote{$f:n\to m$ is an isometry if its adjoint $f^\dagger:m\to n$ is a left inverse, meaning $f\seqop f^\dagger=\id{n}$.} and unitaries\footnote{$f:n\to m$ is a unitary if it is an isometry and $n=m$, which implies $f\seqop f^\dagger=f^\dagger\seqop f=\id{n}$} respectively. For instance, $X\defeq\ketbra{1}{0}+\ketbra{0}{1}$ and $Z\defeq\ketbra{0}{0}-\ketbra{1}{1}$ are morphisms in $\catunimap{d}(1,1)$.
\end{example}

Given two PROPs $\cat{P}$ and $\cat{Q}$, a \emph{functor} $F:\cat{P}\to\cat{Q}$ is a map that assigns to each object $n\in\N$ an object $F(n)\in\N$, and to each morphism $f\in \cat{P}(n,m)$ a morphism $F(f)\in\cat{Q}(F(n),F(m))$ while preserving the compositional structure, meaning $F(\id{n})=\id{F(n)}$ and $F(f\seqop g)=F(f)\seqop F(g)$ whenever $f\seqop g$ is defined in $\cat{P}$. An \emph{identity-on-objects functor} is a functor $F$ satisfying $F(n)=n$ for any $n\in\N$. A \emph{monoidal functor} is a functor that additionally preserves the parallel composition, meaning ${F(0)=0}$, ${F(n+m)=F(n)+F(m)}$, and $F(f\parop g)=F(f)\parop F(g)$. And finally, a \emph{PROP functor} is a monoidal functor $F$ that additionally preserves swaps, meaning $F(\swap{n,m})=\swap{F(n),F(m)}$.

A \emph{$\dagger$-PROP} (or \emph{dagger PROP}) is a PROP $\cat{P}$ equipped with an identity-on-objects involutive and contravariant functor $(\cdot)^\dagger:\cat{P}\to\cat{P}$, meaning that each morphism $f:n\to m$ is mapped to a morphism $f^\dagger:m\to n$ such that $(f^\dagger)^\dagger=f$, $(f\seqop g)^\dagger=g^\dagger\seqop f^\dagger$, $(f\parop g)^\dagger=f^\dagger\parop g^\dagger$, and $\swap{n,m}^\dagger=\swap{m,n}$. Moreover, a \emph{$\dagger$-PROP} is said to be a \emph{unitary PROP} when it contains only endomorphisms (meaning $\cat{P}=\cat{P}_\textup{endo}$) and satisfies $f\seqop f^\dagger=f^\dagger\seqop f=\id{n}$ for any $f\in\cat{P}(n,n)$.

A PROP can be specified by generators and equations. A \emph{signature} $\sig$ is a collection of typed \emph{generators} $g:n\to m$ for $n,m\in\N$. We implicitly include the structural morphisms $\id{n}:n\to n$ and $\swap{n,m}:n+m\to m+n$ in $\sig$. The set of \emph{terms} over $\sig$, denoted $\terms{\sig}$, is inductively generated from these generators using sequential and parallel composition: if $f:n\to p$ and $g:p\to m$ are terms, then $f\seqop g:n\to m$ is a term, and if $f:n\to m$ and $g:p\to q$ are terms, then $f\parop g:n+p\to m+q$ is a term. We write $\terms{\sig}(n,m)$ for the set of terms of type $n \to m$. A \emph{context} $K[\cdot]:n\to m$ is a term containing a single hole $[\cdot]:r\to s$, and $K[t]:n\to m$ denotes the term obtained by substituting a well-typed term $t:r\to s$ into this hole. The set of contexts with hole of type $r\to s$ is denoted $\contexts{\terms{\sig}}{r,s}$. An \emph{axiomatization} $\ax$ for $\terms{\sig}$ is a set of well-typed equations $L=R$, called \emph{axioms}, between terms of the same type. An axiom $L=R$ may therefore be applied locally inside any compatible context, replacing $K[L]$ with $K[R]$. Then, \emph{freely generated PROPs} are defined as follows.

\begin{definition}[freely generated PROP]\label{def:freeprop}
    Given a signature $\sig$ and an axiomatization $\ax$ for $\terms{\sig}$, the \emph{freely generated PROP} $\cat{P}\defeq\cat{PROP}_\sig/\ax$ is the PROP whose collection of morphisms $\cat{P}(n,m)$ is the set of terms $\terms{\sig}(n,m)$ quotiented by the coherence laws \labelcref{eq:seqid,eq:parid,eq:sum,eq:seqasso,eq:parasso,eq:inter,eq:symid,eq:inv,eq:nat,eq:symtop,eq:symbot} together with the axioms of $A$. That is, the derivability relation $\cat{P}\vdash \cdot=\cdot$ is the smallest congruence relation on $\terms{\sig}$ containing the equations $\ax\cup\{\text{\labelcref{eq:seqid,eq:parid,eq:sum,eq:seqasso,eq:parasso,eq:inter,eq:symid,eq:inv,eq:nat,eq:symtop,eq:symbot}}\}$, meaning
    \begin{gather*}
        \infer{\cat{P}\vdash t=t}{t\in\terms{\sig}}\;\raisebox{10pt}{,} \hspace{2em}
        \infer{\cat{P}\vdash t_2=t_1}{\cat{P}\vdash t_1=t_2}\;\raisebox{10pt}{,} \hspace{2em}
        \infer{\cat{P}\vdash t_1=t_3}{\cat{P}\vdash t_1=t_2 & \cat{P}\vdash t_2=t_3}\;\raisebox{10pt}{,} \\[0.3em]
        \infer{\cat{P}\vdash L=R}{(L=R)\in\ax\cup\{\text{\labelcref{eq:seqid,eq:parid,eq:sum,eq:seqasso,eq:parasso,eq:inter,eq:symid,eq:inv,eq:nat,eq:symtop,eq:symbot}}\}}\;\raisebox{10pt}{,} \hspace{2em}
        \infer{\cat{P}\vdash K[t_1]=K[t_2]}{\cat{P}\vdash t_1=t_2 & K[\cdot]\in\contexts{\terms{\sig}}{\dom(t_i),\cod(t_i)}}\;\raisebox{10pt}{.}
    \end{gather*}
\end{definition}

Since the relation $\cat{P}\vdash \cdot=\cdot$ is a congruence, the compositions $\seqop$ and $\parop$ are well-defined on equivalence classes: the classes corresponding to $\id{n}$ and $\swap{n,m}$ are the identity and swap morphisms; and the coherence laws \labelcref{eq:seqid,eq:parid,eq:sum,eq:seqasso,eq:parasso,eq:inter,eq:symid,eq:inv,eq:nat,eq:symtop,eq:symbot} hold in the quotient since they belong to the relation. This ensures that $\cat{P}$ is a PROP.

Then, a graphical language can be defined as a freely generated PROP $\cat{P}\defeq\cat{PROP}_\sig/\ax$ equipped with a \emph{standard interpretation}, that is, an identity-on-objects PROP functor $\interp{\cdot}:\cat{PROP}_\sig\to\cat{S}$ where $\cat{S}$ is the PROP providing the semantics of the language. Hence, every diagram $D\in\cat{P}(n,m)$ represents a morphism $\interp{D}\in\cat{S}(n,m)$. The axiomatization $\ax$, and by extension the graphical language $\cat{P}$ is said to be \emph{sound} when every derivable equation holds under the standard interpretation meaning that $\cat{P}\vdash D_1=D_2$ implies $\interp{D_1}=\interp{D_2}$. Conversely, it is \emph{complete} when every semantic equality can be derived diagrammatically, meaning that $\interp{D_1}=\interp{D_2}$ implies $\cat{P}\vdash D_1=D_2$. And finally, it is \emph{universal} when every morphism of $\cat{S}$ can be represented by a diagram of $\cat{P}$, meaning that for any $s\in\cat{S}(n,m)$ there exists $D\in\cat{P}(n,m)$ such that $\interp{D}=s$. Thus, a sound, complete and universal graphical language provides an exact equational presentation of the semantics it represents.

\section{Controlled PROPs}\label{sec:cprop}

\subsection{Control functors}

In this section, we formalize the notion of control as a constructor by equipping a PROP $\cat{P}$ with a \emph{control functor} $\ctrl$ that satisfies a few additional coherence laws (see \cref{fig:coherencelawscprop}). Control functors are only defined on the sub-PROP of endomorphisms $\cat{P}_\textup{endo}$, i.e. on the morphisms with equal domain and codomain.

\begin{definition}[control functor]\label{def:controlfunctor}
    Given a PROP $\cat{P}$, a \emph{control functor} $\ctrl:\cat{P}_\textup{endo}\to\cat{P}_\textup{endo}$ is a functor mapping any object $n\in\N$ to the object $(n+1)$ and mapping any morphism $f:n\to n$ to the morphism $\ctrl(f):(n+1)\to(n+1)$ while satisfying the additional coherence laws \labelcref{eq:strenght,,eq:controlswap,,eq:swapconjugation} depicted in \cref{fig:coherencelawscprop}.
\end{definition}

\begin{figure}[t!]
    \begin{figurebox}\begin{center}
        \vspace{-0.5em}
        \hspace{-1em}\begin{subfigure}{5.1cm}
            \begin{equation}\label{eq:ctrlid}
                \begin{array}{ccc}
                    \ctrl(\id{n})&=&\id{n+1} \\[0.3em]
                    \tf{ctrlid_left}&=&\tf{ctrlid_right}
                \end{array}
            \end{equation}
        \end{subfigure}\hspace{2em}
        \begin{subfigure}{7.0cm}
            \begin{equation}\label{eq:ctrlseq}
                \begin{array}{ccc}
                    \ctrl(f\seqop g)&=&\ctrl(f)\seqop\ctrl(g) \\[0.3em]
                    \tf{ctrlseq_left}&=&\tf{ctrlseq_right}
                \end{array}
            \end{equation}
        \end{subfigure}

        \vspace{1em}
        \hspace{-1em}\begin{subfigure}{6.7cm}
            \begin{equation}\label{eq:strenght}
                \begin{array}{ccc}
                    \ctrl(f\parop \id{m})&=&\ctrl(f)\parop \id{m} \\[0.3em]
                    \tf{strenght_left}&=&\tf{strenght_right}
                \end{array}
            \end{equation}
        \end{subfigure}

        \vspace{1em}
        \hspace{-1em}\begin{subfigure}{10.6cm}
            \begin{equation}\label{eq:controlswap}
                \begin{array}{ccc}
                    (\swap{1,1}\parop \id{n})\seqop\ctrl(\ctrl(f))&=&\ctrl(\ctrl (f))\seqop(\swap{1,1}\parop \id{n}) \\[0.3em]
                    \tf{ctrlswap_left}&=&\tf{ctrlswap_right}
                \end{array}
            \end{equation}
        \end{subfigure}

        \vspace{1em}
        \hspace{-1em}\begin{subfigure}{11.2cm}
            \begin{equation}\label{eq:swapconjugation}
                \begin{array}{ccc}
                    \ctrl(\tau_{k,n,m,\ell}\seqop f\seqop\tau_{k,m,n,\ell}) &=& \tau_{k+1,n,m,\ell}\seqop \ctrl(f)\seqop\tau_{k+1,m,n,\ell} \\[0.3em]
                    \tf{swapconjugation_right}&=&\tf{swapconjugation_left}
                \end{array}
            \end{equation}
        \end{subfigure}
    \end{center}\end{figurebox}
    \caption{Coherence laws of control functors where $\tau_{k,n,m,\ell}\defeq(\id{k}\parop \swap{n,m}\parop \id{\ell})$. \label{fig:coherencelawscprop}}
\end{figure}

Graphically, functors can be depicted using boxes \cite{functorialboxes}. Thus, $\ctrl (f)$ is depicted as a bullet point on the additional wire connected to the diagram $f$, enclosed within a dotted box. This dotted box is omitted  when the diagram is made of a single morphism or when there are nested controls. For instance,
\begin{gather*}
    \begin{array}{cccc}
        \ctrl (f) \hspace{1em}&\hspace{1em} \ctrl(\ctrl (f)) \hspace{1em}&\hspace{1em} \ctrl (f\seqop g) \hspace{1em}&\hspace{1em} \ctrl (f\parop g) \\
        \tf{ctrlf} ,\hspace{1em}&\hspace{1em} \tf{ctrlctrlfvar}\!=\!\tf{ctrlctrlf} ,\hspace{1em}&\hspace{1em} \tf{ctrlfseqg} ,\hspace{1em}&\hspace{1em} \tf{ctrlfparg}.
    \end{array}
\end{gather*}

We may write $\ctrl^n(f)$ for $n$ nested applications of $\ctrl$ from $f$, which can be inductively defined as $\ctrl^0(f)\defeq f$ and $\ctrl^{n+1}(f)\defeq\ctrl(\ctrl^{n}(f))$. Also, the coherence laws of PROPs (see \cref{fig:coherencelawsprop}) allow simplifying the drawings by stretching the line to depict morphisms that are controlled by non-consecutive wires. For instance, the following diagrams are the same modulo the coherence laws.

\begin{equation*}
    \tf{swapctrlf_01}\defeq\tf{swapctrlf_03}
\end{equation*}

On one hand, \labelcref{eq:ctrlid,eq:ctrlseq} hold by definition of the control as a functor. On the other hand, \labelcref{eq:strenght,,eq:controlswap,,eq:swapconjugation} capture the expected behavior of a control: identities can be pulled out from the control; nested controls commute; and swaps, which are left-invertible by definition, satisfy the conjugation law. However, control functors are not monoidal because $\ctrl(f\parop g)\ne\ctrl(f)\parop\ctrl(g)$. Yet, as illustrated by the following derivation, the expected behavior of the control functor applied to a parallel composition is still captured.
    
\begin{gather*}
    \tf{ctrlpar_01}
    \eqeqref{eq:seqid,eq:inter}\tf{ctrlpar_02}
    \eqeqref{eq:inv,eq:nat}\tf{ctrlpar_03}
    \eqeqref{eq:ctrlseq}\tf{ctrlpar_04} \\[0.5em]
    \eqeqref{eq:swapconjugation}\tf{ctrlpar_05} 
    \eqeqref{eq:strenght}\tf{ctrlpar_06}
    \eqeqref{eq:inv,eq:nat}\tf{ctrlpar_07}
\end{gather*}

\begin{example}\label{ex:hilbcontrolledprop}
    In $\catlinmap{2}$ (see \cref{ex:proplinisouni}), the standard notion of control is to apply a linear map $U\in\linmap{2^n}{2^n}$ on the targeted $n$-qubit system when the control qubit is in the state $\ket{1}$, and to apply the identity $\idmat{2^n}$ when the control qubit is in the state $\ket{0}$. This can be embedded as the following control functor.
    \begin{equation*}
        \ctrl_{\ket{1}}:U\mapsto\ketbra{0}{0}\parop \idmat{2^n}+\ketbra{1}{1}\parop U
    \end{equation*}
    However, this is not the only control functor. For instance, 
    \begin{equation*}
        \ctrl_{\ket{0}}:U\mapsto\ketbra{0}{0}\parop U+\ketbra{1}{1}\parop  \idmat{2^n}
        \hspace{1em}\text{and}\hspace{1em}
        \ctrl_{\ket{-}}:U\mapsto\ketbra{+}{+}\parop  \idmat{2^n}+\ketbra{-}{-}\parop U
    \end{equation*}
    are also control functors in $\catlinmap{2}$. In general, in dimension $d\in\N$, qudit linear maps $\catlinmap{d}$ admit a control functor 
    \begin{equation*}
        \textstyle\ctrl_{\ket{k}}: U\mapsto\ketbra{k}{k}\parop U+\sum_{\ell\ne k}\ketbra{\ell}{\ell}\parop  \idmat{d^n}
    \end{equation*}
    for any ${0\le k<d}$. Notice however that the  map ${U\mapsto\sum_{k}\ketbra{k}{k}\parop U^k}$ is not a control functor as it fails to be functorial in general when $d>2$.
\end{example}

\subsection{Controlled PROPs}

We now focus on PROPs equipped with a distinguished control functor, which we call \emph{controlled PROPs}. The general case of PROPs with several control functors is described in \cref{sec:poly}. Notice that a single control functor already offers a rich structure as other control functors may be obtained by conjugating the control wire with some invertible morphism. For instance, $\ctrl_{\ket{0}}$ and $\ctrl_{\ket{-}}$ (see \cref{ex:hilbcontrolledprop}) can be recovered by conjugating the control qubit of $\ctrl_{\ket{1}}$ with $X$ or Hadamard gates respectively.

\begin{definition}[controlled PROP]\label{def:cprop}
    A \emph{controlled PROP (CPROP)} is a PROP equipped with a control functor.
\end{definition}

Unless a different notation is specified, we systematically denote by $\ctrl$ the control functor of a controlled PROP. 

\begin{example}
    $\catlinmap{2}$, $\catisomap{2}$, and $\catunimap{2}$ with $\ctrl_{\ket{1}}$ as control functor are CPROPs.
\end{example}

Any PROP admits two trivial control functors $f\mapsto \id{1}\parop f$ and $f\mapsto \id{n+1}$ (where $f:n\to n$) which correspond to the two degenerate cases of a control, namely when $f$ is respectively always or never applied.  The following notion of \emph{points} can be used to rule out these two degenerate  cases. Intuitively, the points of a control functor are two specific morphisms \emph{false} ($\tf{winit}:0\to 1$) and \emph{true} ($\tf{binit}:0\to 1$) that respectively annihilate and fire the controlled gate when they are plugged to the control wire.

\begin{definition}[points]
    A control functor $\ctrl$ has \emph{points} if there exist two morphisms $\tf{winit}:0\to 1$ and $\tf{binit}:0\to 1$ such that \labelcref{eq:falsepoint,eq:truepoint} are satisfied for any $f:n\to n$.
\end{definition}

\begin{example}
    In $\catlinmap{2}$, $\ket{0}$ and $\ket{1}$ are points of the control functor $\ctrl_{\ket{1}}$. In $\catlinmap{d}$, $\ket{\ell}$ and $\ket{k}$ are points of the control functor $\ctrl_{\ket{k}}$ whenever $\ell\ne k$. Notice however that points do not necessarily exist, in particular when all morphisms of the CPROP are endomorphisms, like in $\catunimap{2}$ for instance.
\end{example}

\begin{figure}[t!]
    \begin{figurebox}\begin{center}
        \vspace{-0.5em}
        \hspace{-1.8em}\begin{subfigure}{7.4cm}
            \begin{equation}\label{eq:falsepoint}
                \begin{array}{ccc}
                    (\tf{winit}\parop \id{n})\seqop\ctrl(f) \!\!\!&=&\!\!\! \tf{winit}\parop \id{n} \\[0.3em]
                    \tf{winitcf} &=& \tf{winitidf}
                \end{array}
            \end{equation}
        \end{subfigure}\hspace{0.3em}
        \begin{subfigure}{7.2cm}
            \begin{equation}\label{eq:truepoint}
                \begin{array}{ccc}
                    (\tf{binit}\parop \id{n})\seqop\ctrl(f) \!\!\!&=&\!\!\! \tf{binit}\parop f \\[0.3em]
                    \tf{binitcf} &=& \tf{binitf}
                \end{array}
            \end{equation}
        \end{subfigure}

        \vspace{1em}
        \hspace{-1em}\begin{subfigure}{9.6cm}
            \begin{equation}\label{eq:conjugation}
                \begin{array}{ccc}
                    \ctrl(g\seqop f\seqop h)&=&(\id{1}\parop g)\seqop \ctrl(f)\seqop (\id{1}\parop h) \\[0.3em]
                    \tf{conjugation_gfh_left}&=&\tf{conjugation_gfh_right}
                \end{array}
            \end{equation}
        \end{subfigure}
    \end{center}\end{figurebox}
    \caption{Additional coherence laws that a controlled PROP may satisfy. \label{fig:morecoherencelawscprop}}
\end{figure}

\begin{definition}[CPROP functor]\label{def:cpropfunctor}
    A \emph{CPROP functor} $F:\cat{P}\to\cat{Q}$ between two CPROPs $\cat{P}$ and $\cat{Q}$ is a PROP functor satisfying $F(\ctrl(f))=\ctrl(F(f))$ for any $f\in\cat{P}_\textup{endo}$.
\end{definition}

Similarly to PROPs (see \cref{sec:prop}), a CPROP $\cat{P}$ can be freely generated by a signature $\sig$ and an axiomatization $\ax$. We extend the notions of terms and contexts to account for the control. Given a signature $\sig$, the set of \emph{controlled terms} over $\sig$, denoted $\cpropterms{\sig}$, is obtained by closing the set of terms under control: whenever $t:n\to n$ is a controlled term, so is $\ctrl(t):n+1\to n+1$. Similarly, \emph{controlled contexts} extend contexts by allowing the hole to occur under control: whenever $K[\cdot]:n\to n$ is a controlled context, so is $\ctrl(K[\cdot]):n+1\to n+1$. We write $\cpropterms{\sig}(n,m)$ for the set of controlled terms of type $n\to m$, and $\contexts{\cpropterms{\sig}}{r,s}$ for the set of controlled contexts whose hole has type $r\to s$. Then, \emph{freely generated CPROPs} are defined as follows.

\begin{definition}[freely generated CPROP]\label{def:freecprop}
    Given a signature $\sig$ and an axiomatization $\ax$ for $\cpropterms{\sig}$, the \emph{freely generated CPROP} $\cat{P}\defeq\cat{CPROP}_\sig/\ax$ is the CPROP whose collection of morphisms $\cat{P}(n,m)$ is the set of controlled terms $\cpropterms{\sig}(n,m)$ quotiented by the coherence laws \labelcref{eq:seqid,eq:parid,eq:sum,eq:seqasso,eq:parasso,eq:inter,eq:symid,eq:inv,eq:nat,eq:symtop,eq:symbot,eq:ctrlid,eq:ctrlseq,eq:strenght,eq:controlswap,eq:swapconjugation} together with the axioms of $A$. That is, the derivability relation $\cat{P}\vdash \cdot=\cdot$ is the smallest congruence relation on $\cpropterms{\sig}$ containing the equations $\ax\cup\{\text{\labelcref{eq:seqid,eq:parid,eq:sum,eq:seqasso,eq:parasso,eq:inter,eq:symid,eq:inv,eq:nat,eq:symtop,eq:symbot,eq:ctrlid,eq:ctrlseq,eq:strenght,eq:controlswap,eq:swapconjugation}}\}$, meaning
    \begin{gather*}
        \infer{\cat{P}\vdash t=t}{t\in\cpropterms{\sig}}\;\raisebox{10pt}{,} \hspace{2em}
        \infer{\cat{P}\vdash t_2=t_1}{\cat{P}\vdash t_1=t_2}\;\raisebox{10pt}{,} \hspace{2em}
        \infer{\cat{P}\vdash t_1=t_3}{\cat{P}\vdash t_1=t_2 & \cat{P}\vdash t_2=t_3}\;\raisebox{10pt}{,} \\[0.5em]
        \infer{\cat{P}\vdash L=R}{(L=R)\in\ax\cup\{\text{\labelcref{eq:seqid,eq:parid,eq:sum,eq:seqasso,eq:parasso,eq:inter,eq:symid,eq:inv,eq:nat,eq:symtop,eq:symbot,eq:ctrlid,eq:ctrlseq,eq:strenght,eq:controlswap,eq:swapconjugation}}\}}\;\raisebox{10pt}{,} \hspace{2em}
        \infer{\cat{P}\vdash K[t_1]=K[t_2]}{\cat{P}\vdash t_1=t_2 & K[\cdot]\in\contexts{\cpropterms{\sig}}{\dom(t_i),\cod(t_i)}}\;\raisebox{10pt}{.}
    \end{gather*}
\end{definition}

Since the relation $\cat{P}\vdash \cdot=\cdot$ is a congruence, the compositions $\seqop$ and $\parop$ and the control $\ctrl$ are well-defined on equivalence classes. Moreover, \labelcref{eq:ctrlid,eq:ctrlseq,eq:strenght,eq:controlswap,eq:swapconjugation} hold in the quotient since they belong to the relation. This ensures that $\cat{P}$ is a CPROP.

Graphical languages can also be defined as freely generated CPROPs. In these settings, the standard interpretation is defined as an identity-on-objects CPROP functor (see \cref{def:cpropfunctor}). Moreover, if an equation $t_1=t_2$ is provable in the language, then its controlled version ${\ctrl(t_1)=\ctrl(t_2)}$ follows directly by construction.

\begin{example}
    Consider the freely generated CPROP whose unique generator is $\tf{not}:1\to 1$ and whose unique axiom is $\tf{NOTNOT_left}=\tf{NOTNOT_right}$. Then, $\raisebox{-1pt}{\tf{smallcnot}}$ is not a generator, but the diagram corresponding to $\ctrl(\tf{not})$; moreover, the equation $\raisebox{-1pt}{\tf{smallcnotcnot_left}}=\raisebox{-1pt}{\tf{smallcnotcnot_right}}$ is derivable.
\end{example}

\subsection{Conjugated controlled PROPs}

Going one step further, we consider \emph{conjugated controlled PROPs}, in which the conjugation law holds.

\begin{definition}[conjugated controlled PROP]
    A \emph{conjugated controlled PROP (CCPROP)} is a CPROP $\cat{P}$ satisfying the \emph{conjugation law} \labelcref{eq:conjugation} for any $g\in\cat{P}(m,n)$, $f\in\cat{P}(n,n)$, and $h\in\cat{P}(n,m)$ such that $g\seqop h=\id{m}$.
\end{definition}

\begin{example}
    $\catlinmap{2}$, $\catisomap{2}$, and $\catunimap{2}$ with $\ctrl_{\ket{1}}$ as control functor are CCPROPs.
\end{example}

The conjugation law is particularly important in algorithm design where the compute/uncompute pattern $g\seqop f\seqop \invers{g}$ (taking $h=\invers{g}$) is often used to prepare and release ancillary bits (or qubits) and implement oracles. This law can also be used in compilation processes to reduce (or extend when applied from right to left) the scope of the control.

Like PROPs and CPROPs (see \cref{def:freeprop,def:freecprop}), a CCPROP $\cat{P}$ can be freely generated by a signature $\sig$ and an axiomatization $\ax$. However, unlike freely generated PROPs and CPROPs, a freely generated CCPROP is not specified by purely equational axioms. This is because the conjugation law \labelcref{eq:conjugation} is conditional, since it applies only when the premise $g\seqop h=\id{m}$ is satisfied. As a consequence, defining $\cat{P}$ and then considering $\cat{P}/\ax$ may create new left-invertible morphisms, and therefore new instances of the conjugation law. Hence, the axioms of $\ax$ and the inference corresponding to the conjugation law must be taken into account simultaneously.

\begin{definition}[freely generated CCPROP]\label{def:freeccprop}
    Given a signature $\sig$ and an axiomatization $\ax$ for $\cpropterms{\sig}$, the \emph{freely generated CCPROP} $\cat{P}\defeq\cat{CCPROP}_\sig/\ax$ is the CCPROP whose collection of morphisms $\cat{P}(n,m)$ is the set of controlled terms $\cpropterms{\sig}(n,m)$ quotiented by the coherence laws \labelcref{eq:seqid,eq:parid,eq:sum,eq:seqasso,eq:parasso,eq:inter,eq:symid,eq:inv,eq:nat,eq:symtop,eq:symbot,eq:ctrlid,eq:ctrlseq,eq:strenght,eq:controlswap,eq:swapconjugation} and the conjugation law \labelcref{eq:conjugation} together with the axioms of $A$. That is, the derivability relation $\cat{P}\vdash \cdot=\cdot$ is the smallest congruence relation on $\cpropterms{\sig}$ containing the equations $\ax\cup\{\text{\labelcref{eq:seqid,eq:parid,eq:sum,eq:seqasso,eq:parasso,eq:inter,eq:symid,eq:inv,eq:nat,eq:symtop,eq:symbot,eq:ctrlid,eq:ctrlseq,eq:strenght,eq:controlswap,eq:swapconjugation}}\}$ and closed under the conjugation law, meaning
    \begin{gather*}
        \infer{\cat{P}\vdash t=t}{t\in\cpropterms{\sig}}\;\raisebox{10pt}{,} \hspace{2em}
        \infer{\cat{P}\vdash t_2=t_1}{\cat{P}\vdash t_1=t_2}\;\raisebox{10pt}{,} \hspace{2em}
        \infer{\cat{P}\vdash t_1=t_3}{\cat{P}\vdash t_1=t_2 & \cat{P}\vdash t_2=t_3}\;\raisebox{10pt}{,} \\[0.5em]
        \infer{\cat{P}\vdash L=R}{(L=R)\in\ax\cup\{\text{\labelcref{eq:seqid,eq:parid,eq:sum,eq:seqasso,eq:parasso,eq:inter,eq:symid,eq:inv,eq:nat,eq:symtop,eq:symbot,eq:ctrlid,eq:ctrlseq,eq:strenght,eq:controlswap,eq:swapconjugation}}\}}\;\raisebox{10pt}{,} \hspace{2em}
        \infer{\cat{P}\vdash K[t_1]=K[t_2]}{\cat{P}\vdash t_1=t_2 & K[\cdot]\in\contexts{\cpropterms{\sig}}{\dom(t_i),\cod(t_i)}}\;\raisebox{10pt}{,} \\[0.5em]
        \infer{\cat{P}\vdash \ctrl(t_1\seqop t_2\seqop t_3)=(\id{1}\parop t_1)\seqop\ctrl(t_2)\seqop(\id{1}\parop t_3)}{t_1:m\to n,t_2:n\to n,t_3:n\to m\in\cpropterms{\sig} & \cat{P}\vdash t_1\seqop t_3=\id{m}}\;\raisebox{10pt}{.}
    \end{gather*}
\end{definition}

\section{Conjugation in unitary controlled PROPs}\label{sec:cond}

A specific class of CPROPs that will be of interest in this paper is when a dagger structure coexists with the control structure, while ensuring that every morphism is invertible. This yields the notion of \emph{unitary CPROP}, extending the notion of \emph{unitary PROP} (see \cref{sec:prop}).

\begin{definition}[unitary CPROP]\label{def:unitarycprop}
    A \emph{unitary CPROP} $\cat{P}$ is a CPROP which is also a unitary PROP satisfying $(\ctrl(f))^\dagger=\ctrl(f^\dagger)$ for any $f\in\cat{P}(n,n)$.
\end{definition}

Notice that, in a unitary CPROP, every morphism is an endomorphism. Moreover, the conjugation law is particularly relevant for unitary CPROPs because, in this context, every inverse of $g$ can be provably transformed into $g^\dagger$. As stated in the following lemma, this allows removing the premise of the conjugation law.

\begin{lemma}\label{lem:conjugationconditionunitary}
    A unitary CPROP $\cat{P}$ is a CCPROP if and only if the following equation \labelcref{eq:conjugationunitary} is satisfied for any $f,g\in\cat{P}(n,n)$.
    \begin{gather}\label{eq:conjugationunitary}
        \ctrl(g\seqop f\seqop g^\dagger)=(\id{1}\parop g)\seqop \ctrl(f)\seqop (\id{1}\parop g^\dagger)
    \end{gather}
\end{lemma}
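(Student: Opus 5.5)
The plan is to prove both directions directly from the definitions, using that in a unitary CPROP every morphism is an endomorphism and satisfies $f\seqop f^\dagger=f^\dagger\seqop f=\id{n}$.

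\emph{Only if.} Suppose $\cat{P}$ is a CCPROP. Given $f,g\in\cat{P}(n,n)$, I will instantiate the conjugation law \labelcref{eq:conjugation} with $m=n$, taking $g$ as the left factor, $f$ as the middle factor, and $h\defeq g^\dagger$. The premise $g\seqop g^\dagger=\id{n}$ holds because $\cat{P}$ is a unitary PROP. The conclusion of \labelcref{eq:conjugation} is then literally \labelcref{eq:conjugationunitary}, so this direction is immediate.

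\emph{If.} Suppose \labelcref{eq:conjugationunitary} holds for all $f,g\in\cat{P}(n,n)$. Let $g\in\cat{P}(m,n)$, $f\in\cat{P}(n,n)$ and $h\in\cat{P}(n,m)$ with $g\seqop h=\id{m}$.
\begin{itemize}
  \item Since every morphism of a unitary CPROP is an endomorphism, the hom-set $\cat{P}(m,n)$ is empty unless $m=n$. So $m=n$, and $g$, $h$ are endomorphisms of $n$.
  \item The key step is to show $h=g^\dagger$. Precomposing $g\seqop h=\id{n}$ with $g^\dagger$ gives
  \[
    h=(g^\dagger\seqop g)\seqop h=g^\dagger\seqop(g\seqop h)=g^\dagger .
  \]
  This uses \labelcref{eq:seqid}, \labelcref{eq:seqasso}, and the unitarity equation $g^\dagger\seqop g=\id{n}$. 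In words, the left inverse of $g$ coincides with its two-sided inverse $g^\dagger$.
  \item Substituting $h=g^\dagger$ into the instance of \labelcref{eq:conjugation} to be proved turns it exactly into \labelcref{eq:conjugationunitary}, which holds by assumption.
\end{itemize}
Hence $\cat{P}$ satisfies the conjugation law for every admissible triple, i.e.\ it is a CCPROP.

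The only step needing care is the observation that the hom-sets force $m=n$, which is what lets us reduce the general triple $(g,f,h)$ to the unitary form. The remaining arguments are short equational manipulations. The compatibility condition $(\ctrl(f))^\dagger=\ctrl(f^\dagger)$ from \cref{def:unitarycprop} is not needed here.
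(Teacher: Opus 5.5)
Your proof is correct and follows the paper's argument: the only-if direction is the instance $h=g^\dagger$ of the conjugation law, and the if direction turns any admissible instance into that one via $h=g^\dagger\seqop g\seqop h=g^\dagger$. Your remark that the hom-sets force $m=n$ spells out a point the paper leaves implicit.
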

\begin{proof}
    The instances of \labelcref{eq:conjugationunitary} correspond exactly to the instances of the conjugation law \labelcref{eq:conjugation} where $h=g^\dagger$. To prove that $\cat{P}$ is a conjugated CPROP, it remains to prove the other instances of \labelcref{eq:conjugation}. This is achieved using $\cat{P}\vdash h=g^\dagger\seqop g\seqop h=g^\dagger\seqop \id{n}=g^\dagger$, where the first step uses the unitarity of $\cat{P}$ and the second step uses the hypothesis $\cat{P}\vdash g\seqop h=\id{n}$.
\end{proof}

We provide a sufficient condition for the conjugation law to hold in the case of a unitary CPROP. Intuitively, proving the conjugation law for a small subset of morphisms implies the conjugation law for all of them.

\begin{definition}[reducibility]\label{def:reducibility}
    Given a unitary CPROP $\cat{P}$ and a subset of morphisms $M\subseteq\cat{P}$, let $\cat{P}\vert_M$ be the closure of $M$ under sequential and parallel compositions, where the subsets of morphisms $\cat{P}\vert_M(n,n)$ are inductively defined as
    \begin{gather*}
        \infer{f\in\cat{P}\vert_M(\dom(f),\cod(f))}{f\in M}\;\raisebox{10pt}{,} \hspace{1.5em}
        \infer{f\seqop g\in\cat{P}\vert_M(n,n)}{f,g\in\cat{P}\vert_M(n,n)}\;\raisebox{10pt}{,} \hspace{1.5em}
        \infer{f\parop g\in\cat{P}\vert_M(n+m,n+m)}{f\in\cat{P}\vert_M(n,n) & g\in\cat{P}\vert_M(m,m)}\;\raisebox{10pt}{.}
    \end{gather*}
    Then, we say that $\cat{P}$ is \emph{$M$-reducible} if for any morphism $f\in\cat{P}(n,n)$ there exists a morphism $f'\in\cat{P}\vert_M(n,n)$ such that $\cat{P}\vdash f=f'$.
\end{definition}

To show that an $M$-reducible unitary CPROP is conjugated, it is enough to prove the conjugation law for pairs of morphisms of $M$ that are acting on overlapping wires. In the following, given $N\in\N$ and $h:p\to p$ with $p\le N$, we write $\overline{h}\defeq \id{N-p} \parop h $, the value of $N$ being clear from the context.

\begin{lemma}\label{lem:conjugationcondition}
    An $M$-reducible unitary CPROP $\cat{P}$ is a CCPROP if
    \begin{gather*}
        \cat{P}\vdash\ctrl(\overline{g}\seqop\swap{k,\ell}\seqop\overline{f}\seqop\swap{\ell,k}\seqop\overline{g}^\dagger)=(\id{1}\parop(\overline{g}\seqop\swap{k,\ell}))\seqop\ctrl(\overline{f})\seqop(\id{1}\parop(\swap{\ell,k}\seqop\overline{g}^\dagger)),
    \end{gather*}
    i.e.
    \begin{equation}\label{eq:conjugationcondition}\tag{$\star$}
        \cat{P}\vdash\tf{conjugation_swap_left}=\tf{conjugation_swap_right}
    \end{equation}
    is satisfied for any $f:n\to n$ and $g:m\to m$ in $M$, and any $k,\ell\in\N$ such that $\ell\ge1$ and $\max(n,m)\le k+\ell<n+m$.
\end{lemma}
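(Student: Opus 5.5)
By \cref{lem:conjugationconditionunitary}, it suffices to prove \labelcref{eq:conjugationunitary}, i.e.\ $\ctrl(g\seqop f\seqop g^\dagger)=(\id{1}\parop g)\seqop\ctrl(f)\seqop(\id{1}\parop g^\dagger)$ for all $f,g\in\cat{P}(n,n)$. By $M$-reducibility we may replace $f$ and $g$ by provably equal morphisms of $\cat{P}\vert_M$. Since $\ctrl$ is a functor, $f$ can be handled compositionally: if \labelcref{eq:conjugationunitary} holds for $f_1$ and $f_2$ with a fixed $g$, then it holds for $f_1\seqop f_2$. Indeed, $g\seqop f_1\seqop f_2\seqop g^\dagger=(g\seqop f_1\seqop g^\dagger)\seqop(g\seqop f_2\seqop g^\dagger)$ by unitarity, and we then apply \labelcref{eq:ctrlseq}. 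Symmetrically, if the law holds for $g_1$ and $g_2$ and all $f$, it holds for $g_1\seqop g_2$, since $(g_1\seqop g_2)^\dagger=g_2^\dagger\seqop g_1^\dagger$ and we can unfold one layer at a time. So the plan is a double induction on the structure of $f$ and $g$ in $\cat{P}\vert_M$, with sequential composition handled by these two observations.

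For parallel composition, I would first flatten each element of $\cat{P}\vert_M(n,n)$ into a sequential product of "layers". Using \labelcref{eq:inter,eq:seqid}, $a\parop b=(a\parop\id{})\seqop(\id{}\parop b)$. Using the naturality and symmetry laws \labelcref{eq:nat,eq:symtop,eq:symbot}, each factor $\id{p}\parop a\parop\id{q}$ can be rewritten as a swap-conjugate of $\overline{a}=\id{N-r}\parop a$. So both $f$ and $g$ become sequential composites of morphisms of the form $\sigma\seqop\overline{h}\seqop\sigma^{-1}$ with $h\in M$ and $\sigma$ a permutation. Swaps are unproblematic: \labelcref{eq:swapconjugation}, together with the derivation of $\ctrl$ on parallel compositions given earlier, shows that permutations not touching the control wire can be moved through $\ctrl$. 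Likewise \labelcref{eq:strenght} lets us pull out identities. The problem therefore reduces to the case where $f$ is a single $\overline{f_0}$ placed on some wires, $g$ is a single $\overline{g_0}$ placed on some wires, and $f_0,g_0\in M$.

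Next comes a case analysis on the relative position of the two blocks. If they act on disjoint wires, then $g\seqop f\seqop g^\dagger=f$ by \labelcref{eq:inter} and unitarity. On the right-hand side, $\ctrl(f)$ is, after swaps and \labelcref{eq:strenght}, of the form $\ctrl(\text{stuff})\parop\id{}$ on the wires of $g$. So $g$ commutes with it and cancels with $g^\dagger$. If the blocks overlap, we permute wires so that $g_0$ sits at the bottom $m$ wires and $f_0$ is brought there via $\swap{k,\ell}$. The wires they jointly touch then span $k+\ell$ wires with $\max(n,m)\le k+\ell<n+m$ and $\ell\ge1$. This is exactly hypothesis \eqref{eq:conjugationcondition}, up to padding by identities through \labelcref{eq:strenght} and outer swap-conjugation through \labelcref{eq:swapconjugation}.

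I expect the main obstacle to be this normalization step. The issue is showing rigorously that an arbitrary relative placement of two $M$-blocks, inside an ambient $N$-wire system with extra idle wires, is provably equal to the canonical shape in \eqref{eq:conjugationcondition}. This requires careful bookkeeping of the swaps, in particular using \labelcref{eq:swapconjugation} to commute permutations with $\ctrl$ without disturbing the control wire. I would first reduce to the case where no idle wires surround the union of the two blocks, using \labelcref{eq:strenght} and \labelcref{eq:swapconjugation}, and then argue that any interleaving of the wires of $f_0$ and $g_0$ can be absorbed into the permutations $\sigma$ conjugating $f_0$. Those permutations are themselves handled by \labelcref{eq:swapconjugation}.
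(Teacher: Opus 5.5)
Your architecture is the paper's. You reduce to $h=g^\dagger$ by unitarity and use the layer decomposition given by $M$-reducibility. You peel layers off $f$ with \eqref{eq:ctrlseq} and off $g$ with the $g$-statement quantified over all $f$; this is exactly why the paper inducts lexicographically on $(\lambda_g,\lambda_f)$, since $g_2\seqop f\seqop g_2^\dagger$ may have more layers than $f$. The single-layer base case is then closed with \eqref{eq:conjugationcondition}, \eqref{eq:strenght} and \eqref{eq:swapconjugation}.

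The gap is in that base case, at the point you flag yourself. You rewrite each layer as $\sigma\seqop\overline{a}\seqop\sigma^{-1}$ with $\sigma$ an arbitrary permutation. You then plan to show that any interleaving of the wires of $f_0$ and $g_0$ can be absorbed into the permutations conjugating $f_0$. That claim is false. \eqref{eq:swapconjugation} only moves a permutation out of $\ctrl(g\seqop f\seqop g^\dagger)$ when it conjugates the whole argument. It therefore transports $f$ and $g$ together and never changes where $f_0$'s inputs sit relative to $g_0$'s. You also cannot reorder $f_0$'s own inputs, since a permuted $f_0$ need not lie in $M$, and the hypothesis only holds for elements of $M$. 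Meanwhile \eqref{eq:conjugationcondition} only covers one kind of placement: up to a global permutation, $g_0$ on the last $m$ wires and $f_0$ on the cyclic interval $k-n+1,\dots,k$ modulo $N=k+\ell$, in cyclic order. For example, take $N=3$ with $g_0$ on wires $2,3$. The admissible placements of a two-wire $f_0$ are then $(2,3)$, $(3,1)$ and $(1,2)$. The placement with first input on wire $1$ and second on wire $3$ is not among them, and no global permutation fixing $g_0$ produces it.

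The fix is to keep contiguity, which is what the paper does. The layers are $f=\id{p}\parop f_0\parop\id{q}$ and $g=\id{r}\parop g_0\parop\id{s}$ with $N=p+n+q=r+m+s$, so interleavings never arise. Stripping common idle wires via \eqref{eq:strenght} and \eqref{eq:swapconjugation} gives $\min(p,r)=\min(q,s)=0$. If $N\ge n+m$ the blocks are disjoint; otherwise there are two cases.
\begin{itemize}
\item If $q=0$, then $f=\overline{f_0}$ and $g=\swap{N-s,s}\seqop\overline{g_0}\seqop\swap{s,N-s}$. Pulling the outer swaps out of the control gives \eqref{eq:conjugationcondition} with $(k,\ell)=(s,N-s)$, where $\ell=r+m\ge1$.
\item If $q\ge1$, then $s=0$, so $g=\overline{g_0}$ and $f=\swap{N-q,q}\seqop\overline{f_0}\seqop\swap{q,N-q}$. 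This is \eqref{eq:conjugationcondition} with $(k,\ell)=(N-q,q)$, followed by \eqref{eq:swapconjugation}.
\end{itemize}
This split on which block reaches the bottom wire is the whole normalization. Your sentence about moving $g_0$ to the bottom and bringing $f_0$ there via $\swap{k,\ell}$ is correct precisely because both blocks are contiguous.
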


In other words, \eqref{eq:conjugationcondition} is the conjugation law for $\overline{f}$ conjugated by $\overline{g}\seqop\swap{k,\ell}$. Intuitively, $g$ acts on the last $m$ wires, and $\swap{k,\ell}$ rotates the $N$ wires so that $f$ acts on the wires $k-n+1,\dots,k$ taken cyclically (i.e.~modulo $N$). The constraint $k+\ell<n+m$ ensures that $f$ and $g$ overlap, and every way of placing $f$ and $g$ on overlapping wires is obtained, up to a permutation of the wires, for some $k,\ell$ (see the proof below). In particular, there is nothing to check when $n=0$ or $m=0$ (e.g.~for global phases).

\begin{proof}
   For any $f,g,h\in\cat{P}(n,n)$ satisfying $\cat{P}\vdash g\seqop h=\id{n}$, we prove that the conjugation law holds, meaning
    \begin{gather*}
        \cat{P}\vdash \tf{conjugation_biggfh_left} = \tf{conjugation_biggfh_right}.
    \end{gather*}
 Notice that $\cat P \vdash h = g^\dagger \seqop g \seqop h = g^\dagger \seqop \id{n} = g^\dagger$, so it is enough to prove, for any $f,g\in\cat{P}(n,n)$ that 
     \begin{gather}\label{eq:conjugationprime}
        \tf{conjugation_big_left}=\tf{conjugation_big_right}
    \end{gather}

  As $\cat{P}$ is $M$-reducible, any morphism  $h\in \cat{P}(n,n)$ can be transformed into a morphism  $h'\in \cat{P}\vert_{M}(n,n)$.
    Moreover, notice that the coherence laws of PROPs (see \cref{fig:coherencelawsprop}) ensure that there exist $\lambda\ge1$ and $p_j,q_j\in\N$, $h_j:n_j\to n_j\in M$ satisfying $p_j+n_j+q_j=n$ for all $1\le j\le \lambda$, such that
    \begin{gather*}
        \cat{P}\vdash h = (\id{p_1}\parop h_1\parop\id{q_1})\seqop(\id{p_{2}}\parop h_{2}\parop\id{q_{2}})\seqop\dots\seqop(\id{p_{\lambda}}\parop h_{\lambda}\parop\id{q_{\lambda}}).
    \end{gather*}
    Intuitively, we can transform any morphism $h$ into a sequential composition of $\lambda$ layers each containing a single morphism of $M$ sandwiched between identities. Let $\lambda_h$ be the minimal number of such layers. Notice that for any $h\in \cat{P}(n,n)$, if $\lambda_h>1$ then  there exist $h_1, h_2$ such that $\cat P \vdash h=h_1\seqop h_2$ and $\max(\lambda_{h_1},\lambda_{h_2})<\lambda_h$.  
 We thus prove \eqref{eq:conjugationprime} by induction on $(\lambda_g,\lambda_f)$ with the lexicographic order. 

\begin{itemize}
\item For any $f\in \cat{P}(n,n)$ with $\lambda_f>1$, let $f_1, f_2\in \cat{P}(n,n)$ such that $f=f_1\seqop f_2$ and $\max(\lambda_{f_1},\lambda_{f_2})<\lambda_f$. 
   \begin{gather*}
        \tf{conjugation_big_left}
        =\tf{conjproof_f_01-} \\[0.5em]
        \eqcref{def:unitarycprop}\tf{conjproof_f_02-} 
        \eqeqref{eq:ctrlseq}\tf{conjproof_f_03-} \\[0.5em]
        \eqihref\tf{conjproof_f_04-} \\[0.5em]
        \eqcref{def:unitarycprop}\tf{conjproof_f_05-} 
        \eqeqref{eq:ctrlseq}\tf{conjproof_f_06-}
        =\tf{conjugation_big_prime_right-}
    \end{gather*}
    where the induction hypothesis is applied for $(\lambda_g,\lambda_{f_i})<(\lambda_g,\lambda_{f})$. 
    \item For any  $g\in \cat{P}(n,n)$ with $\lambda_g>1$, let $g_1, g_2\in \cat{P}(n,n)$ such that $g=g_1\seqop g_2$ and $\max(\lambda_{g_1},\lambda_{g_2})<\lambda_g$.
    \begin{gather*}
        \tf{conjugation_big_left}
        =\tf{conjproof_g_01-} \\[0.5em]
        \eqihref\tf{conjproof_g_02-} \\[0.5em]
        \eqihref\tf{conjproof_g_03-}
        =\tf{conjugation_big_right}
    \end{gather*}
where the first induction hypothesis is applied for $(\lambda_{g_1},\lambda_{f}+\lambda_{g_2}+\lambda_{g_2^\dagger})<(\lambda_g,\lambda_{f})$, and the second one for $(\lambda_{g_2},\lambda_{f})<(\lambda_g,\lambda_{f})$. 
\item When $\lambda_f=\lambda_g=1$, 
$\cat P\vdash f = \id{p} \otimes f'\otimes \id{q}$ and $\cat P\vdash g = \id{r} \otimes g'\otimes \id{s}$, with $f':n\to n, g':m\to m \in M$.  W.l.o.g. we assume $\min(q,s)=\min(p,r)=0$, as one can reduce to this case by pushing identities out of the scope of the control using \eqref{eq:strenght} and  \eqref{eq:swapconjugation}. Let $N \defeq p+n+q = r+m+s$, if $N\ge n+m$ then $f$ and $g$ act on distinct wires and commute trivially, thus we assume $N<n+m$. Since $\min(q,s)=0$, either $q=0$, or $q\ge1$ and $s=0$. 
\subitem{--} If $q=0$ (which includes the case $q=s=0$), then $\cat P\vdash g  = \sigma_{N-s,s} \seqop \overline{g'} \seqop \sigma_{s,N-s}$ and $\cat P\vdash f = \overline {f'}$. Using \eqref{eq:swapconjugation} to move the outer swaps out of the scope of the control, and then \eqref{eq:conjugationcondition} with $k=s$ and $\ell=N-s$ (notice that $\ell=r+m\ge1$), we get
 \begin{gather*}
        \tf{conjugation_big_left}
        =\tf{conjugation_swap_q0_1} \\[0.5em]
        \eqeqref{eq:swapconjugation}\tf{conjugation_swap_q0_2} \\[0.5em]
        \eqeqref{eq:conjugationcondition}\tf{conjugation_swap_q0_3}
        =\tf{conjugation_big_prime_right-}
    \end{gather*}
\subitem{--} If $q\ge1$, then $s=0$, so $\cat P\vdash g  = \overline{g'}$ and $\cat P\vdash f = \sigma_{N-q,q} \seqop  \overline {f'} \seqop \sigma_{q,N-q}$. Using \eqref{eq:conjugationcondition} with $k=N-q$ and $\ell=q\ge1$, and then \eqref{eq:swapconjugation} to move the swaps into the scope of the control, we get
 \begin{gather*}
        \tf{conjugation_big_left}
        =\tf{conjugation_swap_s0_1} \\[0.5em]
        \eqeqref{eq:conjugationcondition}\tf{conjugation_swap_s0_2}
        \eqeqref{eq:swapconjugation}\tf{conjugation_swap_s0_3}  \\[0.5em]
         =\tf{conjugation_big_prime_right-}\qedhere
    \end{gather*}
\end{itemize}
\end{proof}

\section{Vanilla quantum circuits}\label{sec:qc}

In this section, following  \cite{completeness,extensions,minimality}, we consider the usual definition of quantum circuits used to represent exactly all unitary maps in $\catunimap{2}$ (see \cref{ex:proplinisouni}). That is, the graphical language of \emph{vanilla quantum circuits} is defined as the freely generated PROP (see \cref{def:freeprop}) whose signature $\siguni$ contains the four kinds of generators
\begin{gather*}
    \tf{palpha}:0\to 0, \hspace{2em}
    \tf{Zalpha}:1\to 1, \hspace{2em}
    \tf{H}:1\to 1, \hspace{2em}
    \tf{CNOT}:2\to 2,
\end{gather*}
where the first two are parametrized by $\alpha\in\R$. All generators correspond to usual quantum gates, namely global phases, Z-rotations, Hadamard, and CNOT. Moreover, the axiomatization $\axuni$ of vanilla quantum circuits is depicted in \cref{fig:axuni}.

\begin{definition}[vanilla quantum circuits]\label{def:qc}
    Let $\catuni\defeq\cat{PROP}_{\siguni}/\axuni$ be the graphical language of \emph{vanilla quantum circuits} whose standard interpretation $\intuni{\cdot}:\catuni\to\catunimap{2}$ is the identity-on-objects PROP functor inductively defined as
    \begin{gather*}
        \intuni{\tf{palpha}}=e^{i\alpha}, \hspace{2em}
        \intuni{\tf{Zalpha}}=\ketbra{0}{0}+e^{i\alpha}\ketbra{1}{1}, \\[0.5em]
        \intuni{\tf{H}}=\ketbra{+}{0}+\ketbra{-}{1}, \hspace{2em}
        \textstyle\intuni{\tf{CNOT}}=\ketbra{00}{00}+\ketbra{01}{01}+\ketbra{10}{11}+\ketbra{11}{10}.
    \end{gather*}
\end{definition}

\begin{proposition}\label{prop:qcuniversality}
    $\catuni$ is universal.
\end{proposition}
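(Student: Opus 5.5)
We have to show that every unitary $U\in\catunimap{2}(n,n)$ equals $\intuni{D}$ for some circuit $D\in\catuni(n,n)$. Since $\intuni{\cdot}$ is a PROP functor, it suffices to show that the image of $\intuni{\cdot}$ contains a generating set for $\catunimap{2}(n,n)$ under $\seqop$ and $\parop$. I would follow the standard textbook route: decompose $U$ into two-level unitaries, then decompose each two-level unitary into single-qubit unitaries and CNOTs, and finally express single-qubit unitaries with the generators.

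\textbf{Single-qubit and phase gates.} First treat $n=0$: the unitaries are the scalars $e^{i\alpha}$, given by $\tf{palpha}$. For $n=1$, use the Euler decomposition. Any $V\in U(2)$ can be written $V=e^{i\alpha}R_Z(\beta)R_X(\gamma)R_Z(\delta)$. Up to global phase, $R_Z$ is $\intuni{\tf{Zalpha}}$ and $R_X(\gamma)=H R_Z(\gamma)H$. The global phases are absorbed by $\tf{palpha}\parop(\cdot)$, which is allowed because the interpretation is monoidal and $0$ is the unit. So every single-qubit unitary is representable.

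\textbf{Controlled gates and two-level unitaries.} Next I would represent every singly controlled single-qubit unitary $\ctrl_{\ket1}(V)$ by the ABC construction. Write $V=e^{i\alpha}AXBXC$ with $ABC=I$. Then $\ctrl_{\ket1}(V)$ is realized by CNOTs, the single-qubit circuits for $A,B,C$, and a $Z_\alpha$ on the control wire. From this, multiply-controlled gates $\ctrl^k_{\ket1}(V)$ are obtained by the usual recursive Barenco et al.\ construction without ancillas, using a square root $W$ with $W^2=V$: $C^k(V)$ is expressed via $C(W)$, $C(W^\dagger)$ and $C^{k-1}(X)$. Controls on $\ket0$ are obtained by conjugating the control wire with $X=HZ_\pi H$. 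Arbitrary wire placement is handled using swaps, which are structural morphisms, possibly together with CNOTs.

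Any $U\in U(2^n)$ factors as a product of two-level unitaries acting nontrivially on only two computational basis states (Givens/QR elimination). Each two-level unitary on $\ket{x},\ket{y}$ is implemented by three steps:
\begin{itemize}
\item a Gray-code sequence of multiply-controlled NOTs bringing $\ket{x}$ and $\ket{y}$ to strings differing in one bit;
\item a fully controlled single-qubit unitary on that bit;
\item the inverse of the Gray-code sequence.
\end{itemize}
Composing all these circuits sequentially gives $D$ with $\intuni{D}=U$.

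\textbf{Main obstacle.} The hardest step is the ancilla-free construction of $\ctrl^k_{\ket1}(V)$ for arbitrary $k$, together with the bookkeeping of global phases. I would present it by induction on $k$, citing the standard lemma of Barenco et al. Since universality is a semantic statement and the paper's axioms play no role, I would largely defer these details to this classical literature rather than rederiving them.
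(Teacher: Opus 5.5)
Your proposal is correct and follows the paper's route. You use the Euler decomposition, with $R_X$ obtained by conjugating a Z-rotation by Hadamards, for single-qubit unitaries together with global phases, and then the classical universality of CNOT plus single-qubit gates; the paper simply cites this last fact from Barenco et al.\ and Nielsen--Chuang, whereas you unpack it via two-level unitaries and Gray codes. One small slip in that unpacking: the ancilla-free recursion for $C^k(V)$ also needs $C^{k-1}(W)$, alongside $C(W)$, $C(W^\dagger)$ and two copies of $C^{k-1}(X)$, but your induction on $k$ over all single-qubit $V$ already supplies it, so nothing breaks.
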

\begin{proof}
    The well-known Euler decomposition states that Z- and X-rotations (which can be implemented as \tf{Zalpha} and \tf{Xalphadef}) together with global phases ensure that arbitrary one-qubit unitaries can be implemented in $\catuni$. Moreover, any entangling two-qubit gate, such as the CNOT gate, together with arbitrary one-qubit unitaries is universal for unitaries on $n$ qubits \cite{universality,nielsenchuang}.
\end{proof}

Each axiom of $\axuni$ is simple and admits a meaningful interpretation: \labelcref{eq:p,,eq:pplus} show that global phases form a group and their parameters are $2\pi$-periodic; \labelcref{eq:z,,eq:zplus} show that Z-rotations form a group and their parameters are $2\pi$-periodic; \labelcref{eq:hh} shows that the Hadamard gate is self-inverse; \labelcref{eq:bg} shows that a swap can be implemented with three CNOTs; \labelcref{eq:cx} shows that the CNOT gate is self-inverse (when $\alpha=0$) and that Z-rotations commute on the control qubit of a CNOT; \labelcref{eq:cz} shows two different implementations of the controlled-$Z$ gate, using either one or two CNOTs; and finally \labelcref{eq:zn} is a generalization of \labelcref{eq:z} where the circuit $Z^n(\alpha)\in\catuni(n+1,n+1)$ ($n\in\N$ and $\alpha\in\R$) is inductively defined as $Z^0(\alpha)\defeq\tf{Zalpha}$ and
\begin{gather*}
    \tf{lambda_def_left}\defeq\tf{lambda_def_right}.
\end{gather*}

Intuitively, $Z^n(\alpha)$ is an implementation of an $n$-controlled Z-rotation with parameter $\alpha\in\R$, which means $\intuni{Z^n(\alpha)}=e^{i\alpha}\ketbra{1\dots1}{1\dots1}+\sum_{x\ne1\dots1}\ketbra{x}{x}$. As a consequence, each instance of \labelcref{eq:zn} is semantically trivial ($\intuni{Z^n(2\pi)}=\idmat{2^{n+1}}$) but syntactically involved as $Z^n(2\pi)$ contains a number of \tf{Z2piover2n} gates which is exponential in $n$. 

Surely, the most powerful axiom of $\axuni$ is \labelcref{eq:e}, which has first been introduced in the context of ZX-calculus \cite{zx2qubitscliffordt,nearminimalzx}. It follows from the well-known Euler decomposition stating that any one-qubit unitary can be decomposed, up to a global phase, into X- and Z-rotations. It also represents an infinite family of equations where the parameters $\alpha_1,\alpha_2$ are arbitrary in $\R$ and the parameters $\beta_0,\beta_1,\beta_2,\beta_3$ are restricted to $[0,2\pi)$ and computed with a map $\euler:\R^2\to[0,2\pi)^4$ such that
\begin{gather*}
    \euler(\alpha_1,\alpha_2)=(\beta_0,\beta_1,\beta_2,\beta_3)=(b_0\bmod2\pi,b_1\bmod2\pi,b_2\bmod2\pi,b_3\bmod2\pi)
\end{gather*}
where $b_0,b_1,b_2,b_3\in\R$ are computed as follows using the intermediate complex numbers $u,v\in\C$.
\begin{gather*}
        u\defeq -\sin\left(\frac{\alpha_1+\alpha_2}{2}\right)+\ii\cos\left(\frac{\alpha_1-\alpha_2}{2}\right) \hspace{2em}
        v\defeq \cos\left(\frac{\alpha_1+\alpha_2}{2}\right)-\ii\sin\left(\frac{\alpha_1-\alpha_2}{2}\right)
    \end{gather*}
\begin{center}
    \begin{tabular}{c|c|c|c|c}
        & $b_1$ & $b_2$ & $b_3$ & $b_0$ \\
        \hline
        if $v=0$
            & $2\arg(u)$
            & $0$
            & $0$
            & \multirow{3}{*}{$\frac{(\alpha_1+\alpha_2-b_1-b_2-b_3+\pi)}{2}$} \\
        \cline{1-4}
        if $u=0$
            & $2\arg(v)$
            & $\pi$
            & $0$
            & \\
        \cline{1-4}
        otherwise
            & $\arg(u)+\arg(v)$
            & $2\arg(\ii+\left\lvert\nicefrac{u}{v}\right\rvert)$
            & $\arg(u)-\arg(v)$
            &
    \end{tabular}
\end{center}

\begin{figure}[t!]
    \begin{figurebox}\begin{center}
        \vspace{-0.5em}
        \hspace{-1em}\begin{subfigure}{2.9cm}
            \begin{equation*}\label{eq:p}\tag{$\textup{p}$}
                \tf{p_left}=\tf{p_right}
            \end{equation*}
        \end{subfigure}\hspace{2em}
        \begin{subfigure}{4.4cm}
            \begin{equation*}\label{eq:pplus}\tag{$\textup{p}+$}
                \tf{pplus_left}=\tf{pplus_right}
            \end{equation*}
        \end{subfigure}\hspace{2em}
        \begin{subfigure}{4.6cm}
            \begin{equation*}\label{eq:hh}\tag{$\textup{hh}$}
                \tf{hh_left}=\tf{hh_right}
            \end{equation*}
        \end{subfigure}

        \vspace{1em}
        \hspace{-1em}\begin{subfigure}{4.3cm}
            \begin{equation*}\label{eq:z}\tag{$\textup{z}$}
                \tf{z_left}=\tf{z_right}
            \end{equation*}
        \end{subfigure}\hspace{3em}
        \begin{subfigure}{6.6cm}
            \begin{equation*}\label{eq:zplus}\tag{$\textup{z}+$}
                \tf{Zplus_left}=\tf{Zplus_right}
            \end{equation*}
        \end{subfigure}

        \vspace{1em}
        \hspace{-1em}\begin{subfigure}{11.3cm}
            \begin{equation*}\label{eq:e}\tag{$\textup{e}$}
                \tf{eprime_var_left}=\tf{eprime_var_right}
            \end{equation*}
        \end{subfigure}

        \vspace{1em}
        \hspace{-1em}\begin{subfigure}{4.7cm}
            \begin{equation*}\label{eq:bg}\tag{$\textup{bg}$}
                \tf{bg_left}=\tf{bg_right}
            \end{equation*}
        \end{subfigure}\hspace{3em}
        \begin{subfigure}{5.3cm}
            \begin{equation*}\label{eq:cx}\tag{$\textup{cx}$}
                \tf{cx_left}=\tf{cx_right}
            \end{equation*}
        \end{subfigure}

        \vspace{1em}
        \hspace{-1em}\begin{subfigure}{7.7cm}
            \begin{equation*}\label{eq:cz}\tag{$\textup{cz}$}
                \tf{cz_left}=\tf{cz_right}
            \end{equation*}
        \end{subfigure}\hspace{3em}
        \begin{subfigure}{4.8cm}
            \begin{equation*}\label{eq:zn}\tag{$\textup{z}_n$}
                \tf{zn_var_left}=\tf{zn_var_right}
            \end{equation*}
        \end{subfigure}
    \end{center}\end{figurebox}
    \caption{Axiomatization $\axuni$ for $\catuni$, containing an instance of \labelcref{eq:zn} for any $n\ge2$. \label{fig:axuni}}
\end{figure}

\begin{proposition}\label{prop:qcsoundness}
    The axiomatization $\axuni$ is sound.
\end{proposition}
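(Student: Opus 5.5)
Since $\catuni$ is freely generated as a PROP and $\intuni{\cdot}$ is a PROP functor into $\catunimap{2}$, soundness reduces to checking that each axiom of $\axuni$ holds under the interpretation. The derivability relation is the smallest congruence containing the coherence laws and $\axuni$. Moreover, $\catunimap{2}$ is itself a PROP, so the kernel of $\intuni{\cdot}$ is a congruence that already contains the coherence laws. It therefore suffices to show that $\intuni{L}=\intuni{R}$ for every axiom $L=R$ in \cref{fig:axuni}. The formal part is a routine induction on derivations: reflexivity, symmetry and transitivity are clear, and the context rule follows because $\intuni{K[t]}$ is built compositionally from $\intuni{t}$ via $\seqop$ and $\parop$.

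Most axioms can then be checked by direct matrix computation on at most three qubits.
\begin{itemize}
\item \labelcref{eq:p,,eq:pplus}: these reduce to $e^{i\alpha}e^{i\beta}=e^{i(\alpha+\beta)}$ and $e^{2i\pi}=1$.
\item \labelcref{eq:z,,eq:zplus}: these are the analogous identities for the diagonal matrices $\ketbra{0}{0}+e^{i\alpha}\ketbra{1}{1}$.
\item \labelcref{eq:hh}: this is $H^2=\idmat{2}$.
\item \labelcref{eq:bg}: this is the classical fact that three alternating CNOTs implement the swap, which can be checked on the computational basis.
\item \labelcref{eq:cx}: $Z$-rotations on the control are diagonal in the computational basis, so they commute with CNOT, and CNOT is an involution.
\item \labelcref{eq:cz}: both sides are diagonal, possibly after conjugating the target by $H$ and using $HXH=Z$, and can be compared on basis states.
\end{itemize}

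For \labelcref{eq:zn}, I would first prove by induction on $n$ that $\intuni{Z^n(\alpha)}=e^{i\alpha}\ketbra{1\dots1}{1\dots1}+\sum_{x\ne 1\dots1}\ketbra{x}{x}$, using the inductive definition of $Z^n$. Each step of the construction is diagonal in the computational basis, so the induction amounts to checking phases on basis states $\ket{x}$. The step combines two $(n-1)$-controlled rotations of half angle with CNOTs, and the phases add up to $\alpha$ exactly when all controls are $1$ and cancel otherwise. Plugging in $\alpha=2\pi$ then gives the identity matrix, as the axiom requires.

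I expect the main obstacle to be \labelcref{eq:e}, because of the case analysis in $\euler$. My first approach would be to compute the interpretation of the left-hand side as an explicit $2\times 2$ unitary parametrised by $\alpha_1,\alpha_2$; its entries are expressed through $u$ and $v$ up to a global phase. I would then compute the interpretation of the right-hand side $e^{i\beta_0}R_Z(\beta_1)R_X(\beta_2)R_Z(\beta_3)$ in terms of the $\beta_i$ and check each of the three cases ($v=0$, $u=0$, and the generic case) separately. In the generic case, $\beta_2=2\arg(\ii+|u/v|)$ is chosen so that $\cos(\beta_2/2)$ and $\sin(\beta_2/2)$ have the ratio $|u|:|v|$, while $\beta_1\pm\beta_3$ recover the arguments of $u$ and $v$. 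The global phase $b_0$ then fixes the determinant, and reducing everything modulo $2\pi$ is harmless because all interpreted gates are $2\pi$-periodic.
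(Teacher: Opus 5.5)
Your proposal follows the paper's proof. Soundness is reduced to checking $\intuni{L}=\intuni{R}$ for each axiom of $\axuni$ by direct computation, and \labelcref{eq:e} is the only nontrivial case; the paper defers it to \cite{zx2qubitscliffordt}, while your three-case check via $u$ and $v$ is a valid self-contained substitute. One small slip in your \labelcref{eq:zn} step: two half-angle rotations cannot both add up to $\alpha$ on the all-ones input and cancel elsewhere, so the recursive step of $Z^n$ needs three half-angle calls with signs $+,-,+$ (the pattern of the two-CNOT controlled-$Z$ in \labelcref{eq:cz}), and your basis-state check then rests on $a+b-(a\oplus b)=2ab$ for bits $a,b$.
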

\begin{proof}
    It is enough to check that $\intuni{L}=\intuni{R}$ for any $(L=R)\in\axuni$, which is done by direct computation. Only the case of \labelcref{eq:e} is not straightforward and as been treated in \cite{zx2qubitscliffordt}.
\end{proof}

The first completeness result for quantum circuits has been established in \cite{completeness}; and the axiomatization $\axuni$ has been obtained after significant simplifications in \cite{extensions,minimality}.

\begin{theorem}[\cite{minimality}]\label{th:qccompleteness}
    $\catuni$ is complete.
\end{theorem}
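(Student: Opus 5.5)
This is the completeness theorem of \cite{minimality}, so the plan follows the standard route for quantum circuits: reduce to a calculus already known to be complete, via a back-and-forth translation. Concretely, I would use the completeness of linear optical circuits (LOPP) as in \cite{completeness}. First, define a translation $E$ from $\catuni$ into optical circuits on $2^n$ modes, implementing the standard encoding of a qubit unitary as a $2^n\times 2^n$ matrix acting on modes. Second, define a translation $D$ back from optical circuits on $2^n$ modes into $\catuni(n,n)$. It sends each beam splitter and phase shifter acting on modes $x,y$ to a multi-controlled one-qubit gate acting on the corresponding basis states. Such a gate can be built from $Z^n(\alpha)$-style gadgets, Hadamards and CNOT conjugations.

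Completeness then reduces to two derivability statements:
\begin{itemize}
\item[(a)] $\catuni\vdash D(E(C))=C$ for every circuit $C$;
\item[(b)] $D$ preserves every axiom of the optical calculus, i.e.\ $\catuni\vdash D(L)=D(R)$ for each optical axiom $L=R$.
\end{itemize}
Granting these, suppose $\intuni{C_1}=\intuni{C_2}$. The optical images $E(C_1)$ and $E(C_2)$ are then semantically equal, hence provably equal by optical completeness. Statement (b) transports that proof to $\catuni\vdash D(E(C_1))=D(E(C_2))$, and (a) then gives $\catuni\vdash C_1=C_2$. Global phases are handled separately via \labelcref{eq:p,eq:pplus}.

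For (a), I would argue generator by generator, since $D\circ E$ is functorial. One has to show that the multi-controlled encodings of each generator collapse back to the generator. This uses \labelcref{eq:cx,eq:cz,eq:bg} to commute controls, together with a library of derived lemmas about multi-controlled gates: commutation of controls, decomposition of a multiply-controlled gate into singly-controlled ones, and the Gray-code style definition of $Z^n(\alpha)$. Here \labelcref{eq:zn} is exactly what closes the periodicity of multiply-controlled rotations. Without it, $Z^n(2\pi)=\id{}$ would not be derivable, and \cite{minimality} shows this family is unavoidable.

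The main obstacle is (b), for the optical axiom that is the two-mode Euler-type equation. Under $D$, that axiom becomes an $n$-fold controlled version of a one-qubit Euler identity. Deriving it requires first proving the controlled versions of \labelcref{eq:e} inside a plain PROP. My approach would be to establish, by induction on the number of controls, a general lemma: any equation between one-qubit circuits that is derivable in $\catuni$ remains derivable when every gate is replaced by its $n$-controlled version. This is exactly the propagation of control through derivations that the present paper's $\ctrl$ functor makes automatic. The inductive step needs the multi-controlled rotation lemmas to absorb the phases introduced by \labelcref{eq:e}, and this is where the bulk of the calculations would sit.
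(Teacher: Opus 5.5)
The paper does not prove this theorem. It imports it from \cite{minimality}, the end of a chain that starts with the optical back-and-forth translation of \cite{completeness} and continues through the simplifications of \cite{extensions,minimality} that produce $\axuni$. Your outline reproduces the first link of that chain, but for $\axuni$ it stops where the real work begins, which is where those simplifications do their job. The missing step is deriving, in the plain PROP $\catuni$, the multi-controlled Euler equation (the decoding of the optical Euler rule) from the one-qubit rule \labelcref{eq:e}, the family \labelcref{eq:zn} and the CNOT-level rules. You only name this step, and the induction you sketch for it does not go through as stated.

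Every generator is an endomorphism, so every subterm of a $1\to1$ term has type $0\to0$ or $1\to1$. A derivation between one-qubit circuits therefore uses, besides structural laws, only \labelcref{eq:p,,eq:pplus,,eq:hh,,eq:z,,eq:zplus,,eq:e}. Your ``general lemma'' thus amounts to the $n$-controlled versions of these six rules, and its only essential case is the $n$-controlled \labelcref{eq:e}, which is the very statement you need for (b). To go from $n-1$ to $n$ controls you would add a control to a derivation of the $(n-1)$-controlled Euler equation. That derivation is multi-qubit: it uses \labelcref{eq:cx,,eq:cz,,eq:bg}, swaps and instances of \labelcref{eq:zn}. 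A lemma about one-qubit equations says nothing about the controlled versions of those rules.

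What actually propagates is a statement about a gate-by-gate control macro $\Lambda$ on $\catuni$. You would need $\catuni\vdash\Lambda(L)=\Lambda(R)$ for every rule $L=R$ of $\axuni$, and $\Lambda$ must respect the structural laws (scalars, swaps). That is a derivable emulation of \labelcref{eq:ctrlseq,eq:strenght,eq:controlswap,eq:swapconjugation} inside a PROP. Its decisive cases are:
\begin{itemize}
\item the singly controlled Euler identity $\Lambda(\textup{e})$, a genuinely two-qubit equation that must be obtained from the uncontrolled one-qubit \labelcref{eq:e} with no control functor available;
\item the Toffoli-level identities $\Lambda(\textup{cx})$, $\Lambda(\textup{cz})$ and $\Lambda(\textup{bg})$.
\end{itemize}
``Absorbing phases with multi-controlled rotation lemmas'' gives no mechanism for turning an identity about a one-qubit unitary into one about its controlled version. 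The present paper avoids this problem by taking \labelcref{eq:ce} as an axiom and letting $\ctrl$ propagate it. You cannot borrow that here, since completeness of $\catcuni$ is itself deduced from this theorem.

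There are two smaller omissions. First, the complete optical axiomatization also contains a three-mode rule, whose decoding is a multi-controlled \emph{two}-qubit identity. So (b) does not reduce to the Euler rule even if your lemma held. Second, $E$ is not monoidal: $\parop$ becomes a context-dependent block structure. Step (a) therefore requires fusing the $2^{n-k}$ multi-controlled copies of each gate, and of each global phase, back into one. That is again multi-controlled-gate reasoning, not a direct use of \labelcref{eq:p,,eq:pplus}.
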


It has been shown in \cite{minimality} that every axiom of $\axuni$ is necessary, which means that none can be removed without losing completeness. Notice that there is an instance of \labelcref{eq:zn} for any number of control qubits $n\ge2$. All such instances are necessary. More generally, any axiomatization yielding a universal and complete graphical language (defined as a PROP) for unitary maps requires at least one axiom involving $n$ qubits for any $n\in\N$. As we will see in the next section, the use of controlled PROPs simplifies the diagrammatic reasoning on quantum circuits by circumventing the necessity of axioms acting on an unbounded number of qubits.

\section{Controllable quantum circuits}\label{sec:cqc}

\subsection{Controllable quantum circuits}

In this section, we define \emph{controllable quantum circuits} as a freely generated CPROP (see \cref{def:freecprop}). We prove that the conjugation law holds in these settings, and provide a complete axiomatization. 

Using the CPROP formalism, we can use much simpler generators to define quantum circuits. This is because a Z-rotation can be seen as a controlled global phase, and a CNOT can be seen as a controlled NOT. In fact, we can simply take the global phases together with the Hadamard gate as generators and still capture all unitaries in $\catunimap{2}$. That is, the graphical language of controllable quantum circuits is defined as the freely generated CPROP whose signature $\sigcuni$ contains the two kinds of generators
\begin{gather*}
    \tf{palpha}:0\to 0
    \hspace{2em}\text{and}\hspace{2em}
    \tf{H}:1\to 1
\end{gather*}
where the first is parametrized by $\alpha\in\R$. Moreover, the axiomatization $\axcuni$ of controllable quantum circuits is depicted in \cref{fig:axcuni}.

\begin{definition}[controllable quantum circuits]\label{def:cqc}
    Let $\catcuni\defeq\cat{CPROP}_{\sigcuni}/\axcuni$ be the graphical language of \emph{controllable quantum circuits} whose standard interpretation $\intcuni{\cdot}:\catcuni\to\catunimap{2}$ is the identity-on-objects CPROP functor inductively defined as
    \begin{gather*}
        \intcuni{\ctrl(C)}\defeq\ctrl_{\ket{1}}(\intcuni{C}), \hspace{2em}
        \intcuni{\tf{palpha}}\defeq\intuni{\tf{palpha}}, \hspace{2em}
        \intcuni{\tf{H}}\defeq\intuni{\tf{H}}.
    \end{gather*}
\end{definition}

\begin{proposition}\label{prop:cqcuniversality}
    $\catcuni$ is universal.
\end{proposition}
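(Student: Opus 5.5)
The plan is to reduce universality of $\catcuni$ to universality of $\catuni$ (\cref{prop:qcuniversality}). I will exhibit, for each generator of $\siguni$, a controllable circuit in $\catcuni$ with the same interpretation. Since $\intcuni{\cdot}$ is a CPROP functor, and in particular a PROP functor, any vanilla circuit can then be translated generator-by-generator into a controllable circuit with the same semantics. Concretely, I will define a map $T:\terms{\siguni}\to\cpropterms{\sigcuni}$ by structural induction. It sends each generator to its chosen implementation and preserves $\seqop$, $\parop$, identities and swaps. Then $\intcuni{T(C)}=\intuni{C}$ holds by induction on $C$, because both interpretations are PROP functors and agree on generators.

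The generator cases are as follows.
\begin{itemize}
\item Global phase $\tf{palpha}$ and Hadamard $\tf{H}$: these are generators of $\sigcuni$ and are interpreted identically by definition.
\item Z-rotation: take $\ctrl(\tf{palpha})$. Since $\tf{palpha}:0\to0$ has interpretation $e^{i\alpha}\in\linmap{1}{1}$, we get $\ctrl_{\ket{1}}(e^{i\alpha})=\ketbra{0}{0}+e^{i\alpha}\ketbra{1}{1}=\intuni{\tf{Zalpha}}$.
\item CNOT: first build a NOT gate as $X\defeq H\seqop \ctrl(\tf{palpha})|_{\alpha=\pi}\seqop H$. This interprets as $HZH=X$, since $\intcuni{\ctrl(p_\pi)}=Z$. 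Then take $\ctrl(X)$, whose interpretation is $\ctrl_{\ket{1}}(X)=\ketbra{0}{0}\otimes I+\ketbra{1}{1}\otimes X$, exactly $\intuni{\tf{CNOT}}$ with control on the first wire as in \cref{def:qc}.
\end{itemize}

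Given any $U\in\catunimap{2}(n,n)$, \cref{prop:qcuniversality} yields $C\in\catuni(n,n)$ with $\intuni{C}=U$, and then $\intcuni{T(C)}=U$. One subtlety is that $\catuni$ is a quotient. However, universality only needs some term representing $U$, so I can pick any representative term of $C$ and no well-definedness of $T$ on equivalence classes is required. The only mildly delicate point is the matrix check $\ctrl_{\ket{1}}$ of a $0$-qubit scalar, i.e.\ the convention $\idmat{2^0}=1$, together with the ordering of control and target wires. Both are direct computations, so there is no real obstacle.
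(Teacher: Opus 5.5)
Your proposal is correct and is essentially the paper's proof. Both reduce to the universality of $\catuni$ (\cref{prop:qcuniversality}) by observing that the Z-rotation is interpreted as $\ctrl$ of a global phase and the CNOT as $\ctrl$ of a NOT gate built from Hadamards and $\ctrl$ of the phase $\pi$; your explicit translation $T$ with induction on terms simply spells out what the paper leaves implicit.
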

\begin{proof}
    Direct consequence of \cref{prop:qcuniversality} because
    \begin{gather*}
        \intuni{\tf{Zalpha}}=\intcuni{\tf{Zalphadef}}
        \hspace{2em}\text{and}\hspace{2em}
        \intuni{\tf{CNOT}}=\intcuni{\tf{CNOTdef_middle}}.\qedhere
    \end{gather*}
\end{proof}

The axiomatization $\axcuni$ contains two kinds of axioms. On the one hand, \labelcref{eq:p,,eq:pplus,,eq:hh,,eq:cbg,,eq:ce} directly correspond to some axioms in $\axuni$. In particular the parameters of \labelcref{eq:ce} are computed with the same functions as in \labelcref{eq:e}. On the other hand, \labelcref{eq:conjhcalpha,,eq:conjcalphah,,eq:conjhccpi,,eq:conjccpih} are new and can be seen as instances of the conjugation law. Notice that some axioms of $\axuni$ do not appear in $\axcuni$ as they are trivialized within the CPROP formalism.

\begin{figure}[t!]
    \begin{figurebox}\begin{center}
        \vspace{-0.5em}
        \hspace{-1em}\begin{subfigure}{2.9cm}
            \begin{equation*}\tag{$\textup{p}$}
                \tf{p_left}=\tf{p_right}
            \end{equation*}
        \end{subfigure}\hspace{2em}
        \begin{subfigure}{4.5cm}
            \begin{equation*}\tag{$\textup{p}+$}
                \tf{pplus_left}=\tf{pplus_right}
            \end{equation*}
        \end{subfigure}\hspace{2em}
        \begin{subfigure}{4.6cm}
            \begin{equation*}\tag{$\textup{hh}$}
                \tf{hh_left}=\tf{hh_right}
            \end{equation*}
        \end{subfigure}

        \vspace{1em}
        \hspace{-1em}\begin{subfigure}{7.5cm}
            \begin{equation}\label{eq:cbg}\tag{$\textup{cbg}$}
                \tf{cbg_left}=\tf{cbg_right}
            \end{equation}
        \end{subfigure}

        \vspace{0.5em}
        \hspace{-1em}\begin{subfigure}{9.2cm}
            \begin{equation}\label{eq:ce}\tag{$\textup{ce}$}
                \tf{ce_left}=\tf{ce_right}
            \end{equation}
        \end{subfigure}

        \vspace{0.6em}\noindent\rule{\textwidth}{0.4pt}
  
        \vspace{0.5em}
        \hspace{-1em}\begin{subfigure}{6.5cm}
            \begin{equation}\label{eq:conjhcalpha}\tag{$\textup{ch}_1$}
                \tf{conjhcalpha_left}=\tf{conjhcalpha_right}
            \end{equation}
        \end{subfigure}\hspace{2em}
        \begin{subfigure}{6.5cm}
            \begin{equation}\label{eq:conjcalphah}\tag{$\textup{ch}_2$}
                \tf{conjcalphah_left}=\tf{conjcalphah_right}
            \end{equation}
        \end{subfigure}
        
        \vspace{0.5em}
        \hspace{-1em}\begin{subfigure}{6.5cm}
            \begin{equation}\label{eq:conjhccpi}\tag{$\textup{ch}_3$}
                \tf{conjhcccpi_left}=\tf{conjhcccpi_right}
            \end{equation}
        \end{subfigure}\hspace{2em}
        \begin{subfigure}{6.5cm}
            \begin{equation}\label{eq:conjccpih}\tag{$\textup{ch}_4$}
                \tf{conjccpich_left}=\tf{conjccpich_right}
            \end{equation}
        \end{subfigure}
    \end{center}\end{figurebox}
    \caption{The axioms \labelcref{eq:p,,eq:pplus,,eq:hh,,eq:cbg,,eq:ce,,eq:conjhcalpha,,eq:conjcalphah,,eq:conjhccpi,,eq:conjccpih} form the complete axiomatization $\axcuni$ for $\catcuni$ (where quantum circuits are defined as a CPROP); and the subset of axioms \labelcref{eq:p,,eq:pplus,,eq:hh,,eq:cbg,,eq:ce} form the complete axiomatization $\axccuni$ for $\catccuni$ (where quantum circuits are defined as a CCPROP).\label{fig:axcuni}}
\end{figure}

\begin{proposition}\label{prop:cqcsoundness}
    The axiomatization $\axcuni$ is sound.
\end{proposition}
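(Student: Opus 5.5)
Soundness of a freely generated CPROP reduces to checking the axioms semantically. The standard interpretation $\intcuni{\cdot}$ is a CPROP functor into $\catunimap{2}$ equipped with $\ctrl_{\ket{1}}$, so the derivability relation is mapped to equality provided every generating equation holds in the target. The plan is therefore to first confirm that $(\catunimap{2},\ctrl_{\ket{1}})$ is a genuine CPROP, and then check each axiom of $\axcuni$ by direct matrix computation.

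For the first step, functoriality \labelcref{eq:ctrlid,eq:ctrlseq} follows from $\ketbra{0}{0}\ketbra{1}{1}=0$ and idempotence of the projectors. The law \labelcref{eq:strenght} follows from bilinearity of $\parop$. For \labelcref{eq:controlswap}, note that $\ctrl_{\ket{1}}^2(U)=(\idmat{4}-\ketbra{11}{11})\parop\idmat{}+\ketbra{11}{11}\parop U$, which is invariant under swapping the two control qubits. For \labelcref{eq:swapconjugation}, conjugating $\ketbra{0}{0}\parop\idmat{}+\ketbra{1}{1}\parop U$ by a permutation of the target wires gives the controlled conjugate, because permutations are unitary and fix the identity. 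Then, by induction on the derivation rules of \cref{def:freecprop}, with congruence under $\seqop$, $\parop$ and $\ctrl$ each handled by functoriality of $\intcuni{\cdot}$, it suffices to have $\intcuni{L}=\intcuni{R}$ for each axiom $L=R$.

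For the axioms themselves, \labelcref{eq:p,eq:pplus,eq:hh} are immediate scalar and $2\times2$ identities. For \labelcref{eq:cbg} and \labelcref{eq:ce}, I would use the fact that $\intcuni{\ctrl(C)}=\ctrl_{\ket{1}}(\intcuni{C})$ is the controlled version of the interpretation of the uncontrolled circuit. Using the identifications in \cref{prop:cqcuniversality} ($\ctrl$ of a phase is a Z-rotation, and the NOT is $H$-conjugated $\ctrl(\pi)$), these become $\ctrl_{\ket{1}}$ applied to both sides of \labelcref{eq:bg} and \labelcref{eq:e}, which are sound by \cref{prop:qcsoundness}. Equal matrices have equal controlled versions, so these cases follow, possibly after also controlling the global phase appearing in \labelcref{eq:e}, which is exactly what $\ctrl$ does to the scalar. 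The four conjugation instances \labelcref{eq:conjhcalpha,eq:conjcalphah,eq:conjhccpi,eq:conjccpih} all hold because $\ctrl_{\ket{1}}$ satisfies the conjugation law \labelcref{eq:conjugation} in $\catunimap{2}$. This is a one-line computation: $\ctrl_{\ket{1}}(GFG^\dagger)=\ketbra00\parop GG^\dagger+\ketbra11\parop GFG^\dagger=(\idmat{}\parop G)\ctrl_{\ket{1}}(F)(\idmat{}\parop G^\dagger)$. So I would verify each of them by exhibiting it as an instance of this identity with $G$ built from $H$ or controlled phases.

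The main obstacle is \labelcref{eq:ce}. It inherits the intricate Euler-angle computation of \labelcref{eq:e}, and one must make sure the global phase $\beta_0$ becomes a correctly controlled phase rather than being dropped. I would handle it by reducing to \labelcref{eq:e} via functoriality of $\ctrl_{\ket{1}}$ as described, and rely on the cited verification for the uncontrolled equation.
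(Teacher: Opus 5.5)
Your reduction matches the paper's. Soundness follows once every axiom of $\axcuni$ holds under $\intcuni{\cdot}$, which the paper simply calls ``direct computation''. Your check that $(\catunimap{2},\ctrl_{\ket{1}})$ is a CPROP is fine. So is your handling of \labelcref{eq:conjhcalpha,,eq:conjcalphah,,eq:conjhccpi,,eq:conjccpih} via the semantic conjugation law $\ctrl_{\ket{1}}(GFG^\dagger)=(\idmat{2}\parop G)\ctrl_{\ket{1}}(F)(\idmat{2}\parop G^\dagger)$ for unitary $G$.

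The flaw is your treatment of \labelcref{eq:cbg} and \labelcref{eq:ce}: they are not $\ctrl$ applied to both sides of \labelcref{eq:bg} and \labelcref{eq:e}. The paper says \labelcref{eq:p,,eq:pplus,,eq:hh,,eq:cbg,,eq:ce} involve at most two qubits, whereas $\ctrl$ of \labelcref{eq:bg} would act on three. Also, \cref{lem:decoding} says the decoding $\decoding$ of \labelcref{eq:e} is directly an axiom of $\axcuni$. So \labelcref{eq:ce} is the one-qubit Euler equation with each $Z$-rotation drawn as $\ctrl(\alpha)$ on a single wire, and $\beta_0$ is still a bare $0\to 0$ global phase; this is how it appears in \cref{fig:axuniintro}. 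Hence ``equal matrices have equal controlled versions'' checks an equation that is not in $\axcuni$. Worse, making $\beta_0$ ``a correctly controlled phase'' would break the axiom. Replacing the scalar $e^{i\beta_0}$ by the gate $\ctrl(\beta_0)$, interpreted as $\ketbra{0}{0}+e^{i\beta_0}\ketbra{1}{1}$, gives a false equation whenever $\beta_0\not\equiv 0 \bmod 2\pi$.

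The repair is simpler than your route. For the one-qubit gate, $\intcuni{\ctrl(\alpha)}=\ketbra{0}{0}+e^{i\alpha}\ketbra{1}{1}=\intuni{Z(\alpha)}$. By compositionality, the two sides of \labelcref{eq:ce} therefore have exactly the interpretations of the two sides of \labelcref{eq:e}, and \cref{prop:qcsoundness} (the cited verification of the Euler equation) applies directly. No functoriality of $\ctrl_{\ket{1}}$ and no extra phase bookkeeping is needed.

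\labelcref{eq:cbg} is a two-qubit equation relating $\swap{1,1}$ to Hadamard gates and $\ctrl(\ctrl(\pi))$ gates, as it is used in \cref{lem:cqcreducibleplus}. Just compare the two $4\times 4$ matrices, using $\intcuni{\ctrl(\ctrl(\pi))}=\mathrm{diag}(1,1,1,-1)$. Equivalently, translate \labelcref{eq:bg} via $\mathrm{CNOT}=(\idmat{2}\parop H)\,\mathrm{CZ}\,(\idmat{2}\parop H)$ and $H^2=\idmat{2}$.
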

\begin{proof}
    Similarly to \cref{prop:qcsoundness}, it is enough to check that $\intcuni{L}=\intcuni{R}$ for any axiom $(L=R)\in\axcuni$, which is done by direct computation.
\end{proof}

We proceed in two steps: first, we prove that $\catcuni$ is a conjugated CPROP using the sufficient condition established in \cref{lem:conjugationcondition}; and then, we prove that $\catcuni$ is complete using an encoding/decoding procedure and the completeness of vanilla quantum circuits (see \cref{th:qccompleteness}). To this end, we prove the equations depicted in \cref{fig:cqcidentities}, whose proofs are given in order of dependency in \cref{app:cqcidentities}.

\begin{figure}[t!]
    \begin{figurebox}\begin{center}
        \vspace{-0.5em}
        \hspace{-1em}\begin{subfigure}{3cm}
            \begin{equation}\label{eq:p0}
                \tf{p0_01}=\tf{p0_04}
            \end{equation}
        \end{subfigure}\hspace{2em}
        \begin{subfigure}{7.5cm}
            \begin{equation}\label{eq:cph}
                \tf{heuler_01}=\tf{heuler_03}
            \end{equation}
        \end{subfigure}

        \vspace{1em}
        \hspace{-1em}\begin{subfigure}{8.1cm}
            \begin{equation}\label{eq:calphaminus}
                \tf{xcalpha_01}=\tf{xcalpha_06}
            \end{equation}
        \end{subfigure}

        \vspace{1em}
        \hspace{-1em}\begin{subfigure}{8.9cm}
            \begin{equation}\label{eq:cphvar}
                \tf{heulervar_01}=\tf{heulervar_05}
            \end{equation}
        \end{subfigure}

        \vspace{1em}
        \hspace{-1em}\begin{subfigure}{8.6cm}
            \begin{equation}\label{eq:chreduce}
                \tf{chaltdef_01}=\tf{chaltdef_11}
            \end{equation}
        \end{subfigure}

        \vspace{1em}
        \hspace{-1em}\begin{subfigure}{8.9cm}
            \begin{equation}\label{eq:ccalphareduce}
                \tf{ccalphadef_01}=\tf{ccalphadef_10}
            \end{equation}
        \end{subfigure}
        
        \vspace{1em}
        \hspace{-1em}\begin{subfigure}{8.8cm}
            \begin{equation}\label{eq:cccpireduce}
                \tf{cccpidef_01}=\tf{cccpidef_10}
            \end{equation}
        \end{subfigure}

        \vspace{1em}
        \hspace{-1em}\begin{subfigure}{6.4cm}
            \begin{equation}\label{eq:conjhh}
                \tf{conjhh_01}=\tf{conjhh_09}
            \end{equation}
        \end{subfigure}

        \vspace{1em}
        \hspace{-1em}\begin{subfigure}{6.5cm}
            \begin{equation}\label{eq:conjhccpivar}
                \tf{conjhccpi_04}=\tf{conjhccpi_01}
            \end{equation}
        \end{subfigure}\hspace{2em}
        \begin{subfigure}{6.5cm}
            \begin{equation}\label{eq:conjccpihvar}
                \tf{conjccpih_04}=\tf{conjccpih_01}
            \end{equation}
        \end{subfigure}
        
    \end{center}\end{figurebox}
    \caption{Some equations provable in $\catcuni$. \label{fig:cqcidentities}}
\end{figure}

\subsection{The conjugation law}

To prove that $\catcuni$ is a conjugated CPROP, we use the sufficient condition of \cref{lem:conjugationcondition}. To do so, we first prove that $\catcuni$ is a unitary CPROP.

\begin{lemma}\label{lem:cqcunitary}
    $\catcuni$ is a unitary CPROP where $(\cdot)^\dagger:\catcuni\to\catcuni$ is defined as
    \begin{gather*}
        (C_1\seqop C_2)^\dagger\defeq C_2^\dagger\seqop C_1^\dagger, \hspace{2em}
        (C_1\parop C_2)^\dagger\defeq C_1^\dagger\parop C_2^\dagger, \hspace{2em}
        (\ctrl(C))^\dagger\defeq \ctrl(C^\dagger), \\[0.5em]
        \id{n}^\dagger\defeq \id{n}, \hspace{2em}
        \swap{n,m}^\dagger\defeq \swap{m,n}, \hspace{2em}
        \left(\tf{palpha}\right)^\dagger\defeq\tf{minusalpha}, \hspace{2em}
        \left(\tf{H}\right)^\dagger\defeq\tf{H}.
    \end{gather*}
\end{lemma}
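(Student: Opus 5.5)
The plan is to define $(\cdot)^\dagger$ on controlled terms by structural induction (as in the statement), and then check three things in turn. First, that it is well defined on $\catcuni$, i.e.\ compatible with the congruence. Second, that it makes $\catcuni$ a $\dagger$-PROP with $(\ctrl(C))^\dagger=\ctrl(C^\dagger)$. Third, that every morphism is unitary, $C\seqop C^\dagger=C^\dagger\seqop C=\id{n}$.

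\emph{Well-definedness.} Since derivability is the smallest congruence containing the axioms and coherence laws, it suffices to show that for each such equation $L=R$ we have $\catcuni\vdash L^\dagger=R^\dagger$. The congruence rules then lift automatically, because $\dagger$ of a context $K[\cdot]$ is again a context. We go through the laws in order.
\begin{itemize}
\item The PROP coherence laws are mapped to coherence laws: \labelcref{eq:seqasso,eq:inter} are reversed, and the swap laws \labelcref{eq:inv,eq:nat,eq:symtop,eq:symbot} are sent to their mirror images, which follow from the laws together with \labelcref{eq:inv}.
\item \labelcref{eq:ctrlid,eq:ctrlseq,eq:strenght} are preserved immediately.
\item \labelcref{eq:controlswap,eq:swapconjugation} are mapped to their left--right mirrors. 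These are obtained by pre- and post-composing with swaps and using \labelcref{eq:inv}.
\item For the axioms: \labelcref{eq:p,eq:pplus,eq:hh} are closed under $\dagger$, up to reparametrization $\alpha\mapsto-\alpha$ and rewriting $-2\pi$ via \labelcref{eq:p,eq:pplus}. The conjugation axioms \labelcref{eq:conjhcalpha,eq:conjcalphah,eq:conjhccpi,eq:conjccpih} are essentially mapped to each other or to themselves with negated angles. \labelcref{eq:cbg} is symmetric. The mirrored \labelcref{eq:ce} is the most delicate case: I would derive it from the identities of \cref{fig:cqcidentities} (in particular \labelcref{eq:cphvar,eq:calphaminus}), or rewrite it so that the Euler parameters for $(-\alpha_2,-\alpha_1)$ match.
\end{itemize}
Involutivity, contravariance and $\swap{n,m}^\dagger=\swap{m,n}$ then hold by definition.

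\emph{Unitarity.} Let $\mathcal{U}$ be the set of controlled terms $C$ with $C\seqop C^\dagger=\id{}$ and $C^\dagger\seqop C=\id{}$. We show by structural induction that every term lies in $\mathcal{U}$.
\begin{itemize}
\item Identities and swaps lie in $\mathcal{U}$ by \labelcref{eq:inv}.
\item The phase generator lies in $\mathcal{U}$ by \labelcref{eq:p} together with the derived \labelcref{eq:p0}, and $H$ by \labelcref{eq:hh}.
\item $\mathcal{U}$ is closed under $\seqop$ by associativity, and under $\parop$ by \labelcref{eq:inter}.
\item For control, if $C\in\mathcal{U}$ then $\ctrl(C)\seqop\ctrl(C^\dagger)=\ctrl(C\seqop C^\dagger)=\ctrl(\id{n})=\id{n+1}$ by \labelcref{eq:ctrlseq,eq:ctrlid}, and symmetrically for the other composite.
\end{itemize}
This closure step is exactly where the CPROP formalism helps, since no controlled identities need to be derived.

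The main obstacle is the well-definedness check for \labelcref{eq:ce} and for the conjugation axioms under $\dagger$. If the mirrored \labelcref{eq:ce} is hard to obtain directly, I would proceed differently. Using the unitarity argument above, which only needs $\dagger$ at the level of terms, each side $L$ of \labelcref{eq:ce} has a two-sided inverse $L^\dagger$. From $L=R$ we then get $R^\dagger=R^\dagger\seqop L\seqop L^\dagger=R^\dagger\seqop R\seqop L^\dagger=L^\dagger$. This reverses the order of the two steps: first prove term-level invertibility by structural induction, which uses only the easy axioms, and then deduce $L^\dagger=R^\dagger$ for every axiom uniformly. This fallback is cleaner and avoids case analysis on the axioms altogether, so I would adopt it as the main route.
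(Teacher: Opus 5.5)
Your proposal is correct, and its core is the paper's proof: a structural induction on controlled terms showing $C\seqop C^\dagger=C^\dagger\seqop C=\id{n}$, closed under $\ctrl$ via \labelcref{eq:ctrlseq,eq:ctrlid}, with base cases from \labelcref{eq:p0,,eq:pplus,,eq:hh} (the paper cites the additivity axiom \labelcref{eq:pplus} for the phase case, not \labelcref{eq:p}). Your extra step is sound and fills in a point the paper leaves implicit: deriving $\catcuni\vdash C_1^\dagger=C_2^\dagger$ from $\catcuni\vdash C_1=C_2$ via $C_2^\dagger=C_2^\dagger\seqop C_1\seqop C_1^\dagger=C_2^\dagger\seqop C_2\seqop C_1^\dagger=C_1^\dagger$ shows that $\dagger$ is well defined on the quotient, and it works uniformly for every derivable equation, not only axiom by axiom.
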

\begin{proof}
    We can prove $\catcuni\vdash C\seqop C^\dagger=C^\dagger\seqop C=\id{n}$ straightforwardly by induction on $C$, where the base case $C=\tf{alpha}$ and $C=\tf{H}$ are directly given by \labelcref{eq:p0,,eq:pplus,,eq:hh}.
\end{proof}

In order to apply \cref{lem:conjugationcondition}, we additionally show that $\catcuni$ is $M_+$-reducible and $M$-reducible where $M$ and $M_+$ are the sets of morphisms defined as
\begin{gather*}
    \textstyle M\defeq
    \cup_{\alpha\in\R}\left\{\tf{h}\;,\;\tf{alpha}\;,\;\ctrl(\tf{alpha})\;,\;\ctrl(\ctrl(\tf{pi}))\right\},\hspace{1.5em}
    M_+\defeq \cup_{n,m\in\N}\{\id{n},\swap{n,m}\}\cup M.
\end{gather*}

\begin{lemma}\label{lem:cqcreducibleplus}
    $\catcuni$ is $M_+$-reducible.
\end{lemma}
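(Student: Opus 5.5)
We must show that every circuit $C\in\catcuni(n,n)$ is provably equal to an element of $\catcuni\vert_{M_+}(n,n)$, i.e.\ to a sequential/parallel composite of Hadamards, global phases, controlled phases $\ctrl(\tf{alpha})$, doubly controlled $\pi$-phases $\ctrl(\ctrl(\tf{pi}))$, identities and swaps. Including identities and swaps in $M_+$ means the whole PROP structure is already in the closure, so the only obstruction is the control functor. We proceed by structural induction on controlled terms. Generators, identities and swaps lie in $M_+$. Sequential and parallel compositions are handled directly by the induction hypothesis and the closure rules.

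The real case is $C=\ctrl(D)$. By the induction hypothesis, $D$ equals a composite of elements of $M_+$. Using the PROP coherence laws, we first put it in layered form: a sequence of layers $\id{p}\parop h\parop\id{q}$ with $h\in M_+$. Then \labelcref{eq:ctrlseq} distributes $\ctrl$ over the layers, and \labelcref{eq:swapconjugation} together with \labelcref{eq:strenght} removes the surrounding identities. It therefore suffices to show that $\ctrl(h)$ is reducible for each $h\in M_+$, together with $\ctrl(\id{n})=\id{n+1}$ by \labelcref{eq:ctrlid}. A swap is handled by $\ctrl(\swap{n,m})$, which by \labelcref{eq:swapconjugation} (with $f=\id{}$) and \labelcref{eq:ctrlid} equals $\id{1}\parop\swap{n,m}$. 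For phases, $\ctrl(\tf{alpha})\in M$ and $\ctrl(\ctrl(\tf{pi}))\in M$ already.

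The remaining cases are $\ctrl(\tf{h})$, $\ctrl(\ctrl(\tf{alpha}))$ for arbitrary $\alpha$, and $\ctrl^3(\tf{pi})$. These are exactly what the identities \labelcref{eq:chreduce,eq:ccalphareduce,eq:cccpireduce} of \cref{fig:cqcidentities} provide: they rewrite each of these circuits into circuits built from $\tf{h}$, phases, $\ctrl(\tf{alpha})$ and $\ctrl(\ctrl(\tf{pi}))$ (presumably Euler-type decompositions, e.g.\ a controlled H via controlled rotations and a Toffoli-like construction for doubly controlled phases). This only works if the right-hand sides of those identities use no deeper controls, which we assume from their role.

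The step I expect to be the main obstacle is the induction for deeper nesting. Once $\ctrl(D)$ is reduced, a further control $\ctrl(\ctrl(D))$ must be handled by applying the same argument to the new reduced form, so the induction must be on terms with the statement \emph{``$C$ is reducible''}. That statement is closed under $\ctrl$ precisely because we reduce $\ctrl$ of every element of $M_+$ back into $\catcuni\vert_{M_+}$. Structural induction on $C$ thus suffices and no measure on control depth is needed. The one subtlety is that the controls may be applied to layers of $M_+$ with empty width, such as $\tf{alpha}:0\to0$ sandwiched by identities. This is absorbed by \labelcref{eq:strenght} and the swap conjugation, as in the derivation of $\ctrl(f\parop g)$ given earlier in the paper.
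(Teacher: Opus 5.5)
Your swap case is wrong. You assert $\ctrl(\swap{n,m})=\id{1}\parop\swap{n,m}$ ``by \eqref{eq:swapconjugation} with $f=\id{}$ and \eqref{eq:ctrlid}'', but \eqref{eq:swapconjugation} only moves a \emph{conjugating pair} $\tau_{k,n,m,\ell}\seqop f\seqop\tau_{k,m,n,\ell}$ across the control. With $f$ an identity, both sides collapse to identities, and nothing is learned about a single swap. The equation is in fact not derivable. Indeed, $\intcuni{\ctrl(\swap{1,1})}=\ketbra{0}{0}\parop\idmat{4}+\ketbra{1}{1}\parop\mathrm{SWAP}$ is the Fredkin gate, which fixes $\ket{001}$, whereas $\idmat{2}\parop\mathrm{SWAP}$ sends $\ket{001}$ to $\ket{010}$. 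Soundness of $\axcuni$ therefore rules the equation out.

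A controlled swap is a genuine three-qubit gate and must be decomposed, which is what the paper does. First rewrite $\swap{n,m}$ into $\swap{1,1}$'s and identities with \eqref{eq:symid}, \eqref{eq:symtop}, \eqref{eq:symbot}. Then replace each $\swap{1,1}$, using \eqref{eq:cbg}, by a circuit made of \tf{h} and $\ctrl(\ctrl(\tf{pi}))$. Only then distribute the control; this lands you in your cases $\ctrl(\tf{h})$ and $\ctrl^3(\tf{pi})$.

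With that repair, your argument is correct and takes a genuinely different route from the paper's. The paper normalizes arbitrarily deep nestings $\ctrl^n(\tf{h})$, $\ctrl^n(\tf{alpha})$ and needs the lexicographic measure $(\mu,\mu_{\neq\pi})$ to prove termination. Your structural induction keeps the invariant ``already reduced'', so only one control is ever added on top of a single element of $M_+$, and no termination measure is required.

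One further inaccuracy: the right-hand side of \eqref{eq:cccpireduce} is not itself a composite of elements of $M$. It contains doubly controlled phases $\ctrl(\ctrl(\beta))$ with $\beta\neq\pi$; this is why \eqref{eq:binitcccpi} is derived through \eqref{eq:binitccalpha}. So after \eqref{eq:cccpireduce} you need one more round of \eqref{eq:ccalphareduce}. This is harmless, since those two-qubit controlled phases reduce into $\catcuni\vert_{M_+}$ without creating deeper controls, but it should be stated rather than assumed.
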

\begin{proof}
    We explain a procedure to transform any morphism of $\catcuni$ into a morphism of $\catcuni\vert_{M_+}$. Throughout the procedure, whenever we have $\ctrl(f\seqop g)$ or $\ctrl(f\parop g)$, we can split the control as explained in \cref{sec:cprop} (this is always possible because $\catcuni$ only contains endomorphisms). Then, whenever we have some $\swap{n,m}$ inside a control, we transform it into compositions of $\id{n}$, $\tf{h}$, and $\ctrl(\ctrl(\tf{pi}))$ using \labelcref{eq:symid,eq:symtop,eq:symbot,eq:cbg}. Then, using the functoriality of $\ctrl$ together with \labelcref{eq:strenght}, we can pull out all identities that are inside a control. We get a morphism in which all controls are of the form $\ctrl^n(\tf{h})$ or $\ctrl^n(\tf{alpha})$ for some $n\in\N$, and we can reduce the largest number of nested control by applying \labelcref{eq:chreduce,,eq:ccalphareduce,,eq:cccpireduce} from left to right to get a morphism of $\catcuni\vert_{M_+}$ (where \labelcref{eq:ccalphareduce} is only applied when $\alpha\ne\pi$ to avoid infinite loops). We can guarantee this process to terminate. Indeed, given a morphism $f$, let $\mu(f)\defeq\dom(\ctrl^n(x))$ be maximum such that there is a $\ctrl^n(x)$ in $f$ where $x=\tf{h}$ or $x=\tf{palpha}$ for some $\alpha\in\R$; and let $\mu_{\ne\pi}(f)\defeq \dom(\ctrl^m(x))$ be maximum such that there is a $\ctrl^m(x)$ in $f$ where $x=\tf{h}$ or $x=\tf{palpha}$ for some $\alpha\in\R\backslash\{\pi\}$. Consider the lexicographic order $(\mu(f),\mu_{\ne\pi}(f))$. 
    If $\mu(f)$ is reached by some $\ctrl^n(\tf{h})$ or $\ctrl^n(\tf{alpha})$ with $\alpha\ne\pi$, then applying \labelcref{eq:chreduce,,eq:ccalphareduce} as much as possible preserves $\mu(f)$ and decreases $\mu_{\ne\pi}(f)$; otherwise $\mu(f)$ is only reached by some $\ctrl^n(\tf{pi})$ with $n\ge3$, and applying \eqref{eq:cccpireduce} as much as possible decreases $\mu(f)$. In both cases the pair $(\mu(f),\mu_{\ne\pi}(f))$ decreases for the lexicographic order.
\end{proof}

\begin{lemma}\label{lem:cqcreducible}
    $\catcuni$ is $M$-reducible.
\end{lemma}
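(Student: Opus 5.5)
We start from \cref{lem:cqcreducibleplus}: any morphism $f\in\catcuni(n,n)$ is provably equal to some $f'\in\catcuni\vert_{M_+}(n,n)$. It then suffices to show that every element of $\catcuni\vert_{M_+}$ is provably equal to an element of $\catcuni\vert_M$. The difference is that $M_+$ also contains the structural morphisms $\id{n}$ and $\swap{n,m}$, so the task is to eliminate these as building blocks.

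We argue by induction on the derivation of membership in $\catcuni\vert_{M_+}$. Closure under $\seqop$ and $\parop$ is immediate from the induction hypothesis, because $\catcuni\vert_M$ is closed under both operations. This leaves the base cases $\id{n}$ and $\swap{n,m}$; the elements of $M$ are already in $\catcuni\vert_M$.

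\emph{Identities.} For $\id{1}$, use \labelcref{eq:hh} to write $\id{1}=\tf{h}\seqop\tf{h}\in\catcuni\vert_M$. Then $\id{n}=\id{1}\parop\dots\parop\id{1}$ for $n\ge1$ by \labelcref{eq:sum}. For $\id{0}$ we need a closed expression using phases. The equation \labelcref{eq:p0} gives $\id{0}$ as the phase $0$ gate (or as a composite of phases such as $\tf{alpha}\seqop$ its inverse via \labelcref{eq:pplus}). Either way, $\id{0}\in\catcuni\vert_M$.

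\emph{Swaps.} Using \labelcref{eq:symid,eq:symtop,eq:symbot}, every $\swap{n,m}$ decomposes into sequential and parallel compositions of identities and $\swap{1,1}$ (the $\swap{n,0}$ and $\swap{0,n}$ cases reduce to identities). So it remains to express $\swap{1,1}$. Here we use \labelcref{eq:cbg}, read as a 2-qubit swap implemented by three CNOTs in the CPROP setting, with CNOT written as $\ctrl(\tf{not})$ and $\tf{not}$ defined through $\ctrl(\tf{pi})$ and Hadamards. We may need to first strip the outer control of the (cbg) axiom. One option is to derive the uncontrolled version of \labelcref{eq:bg}, as in the intro's axiomatization. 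The other is to express each CNOT as $(\id{1}\parop\tf{h})\seqop\ctrl(\tf{pi})\seqop(\id{1}\parop\tf{h})$, which lies in $\catcuni\vert_M$ because $\id{1}$ has already been handled. The main obstacle is obtaining the uncontrolled swap identity from the available axioms. If \labelcref{eq:cbg} only states a controlled version, we would use the fact that the statement of \cref{lem:cqcreducibleplus} already relied on \labelcref{eq:cbg} to eliminate swaps under controls, and combine it with the identities of \cref{fig:cqcidentities}.

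An alternative route avoids needing an uncontrolled swap at all. Swaps occurring at top level can be pushed through the circuit using naturality \labelcref{eq:nat}, which rewires the elements of $M$ so that the swaps migrate to one end. Swaps at an end are not obviously removable, however, so this route is only a fallback. We instead commit to the three-CNOT expression of $\swap{1,1}$ as the key step.

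Finally, we check that each replacement is a genuine derivation in $\catcuni$ and that the result is built by $\seqop$ and $\parop$ only. This gives $M$-reducibility.
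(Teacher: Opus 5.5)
Your argument is the paper's: reduce to $\catcuni\vert_{M_+}$ by \cref{lem:cqcreducibleplus}, split swaps into $\swap{1,1}$ and identities with \labelcref{eq:symid,eq:symtop,eq:symbot}, rewrite $\swap{1,1}$ with \labelcref{eq:cbg}, and replace identities by elements of $\catcuni\vert_M$. For the identities you use $\tf{h}\seqop\tf{h}$ via \labelcref{eq:hh}, and the zero phase for $\id{0}$; the paper instead uses $\ctrl$ of the zero phase via \labelcref{eq:p0,eq:ctrlid}, and both work.

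The ``main obstacle'' you raise is not there. \labelcref{eq:cbg} is already an ordinary, uncontrolled equation expressing $\swap{1,1}$ through \tf{h} and $\ctrl(\ctrl(\tf{pi}))$. That is exactly why \cref{lem:cqcreducibleplus} can apply it inside controls via the context rule. So nothing needs to be stripped, and stripping a control would not be a valid derivation step anyway: there is no rule turning $\ctrl(f)=\ctrl(g)$ into $f=g$.

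Also correct the CNOT to $(\id{1}\parop\tf{h})\seqop\ctrl(\ctrl(\tf{pi}))\seqop(\id{1}\parop\tf{h})$. Your version with $\ctrl(\tf{pi})$ composes a $1\to1$ morphism with $2\to2$ ones, so it is ill-typed.
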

\begin{proof}
    By \cref{lem:cqcreducibleplus}, $\catcuni$ is $M_+$-reducible, and it only remains to prove that all identity morphisms $\id{n}$ and all swap morphisms $\swap{n,m}$ can be transformed into morphisms of $\catcuni\vert_M$. Each $\swap{n,m}$ can be transformed into compositions of $\id{n}$ and $\swap{1,1}$ using \labelcref{eq:symid,eq:symtop,eq:symbot}; each $\swap{1,1}$ can be transformed into compositions of $\id{n}$, $\tf{h}$, and $\ctrl(\ctrl(\tf{pi}))$ using \labelcref{eq:cbg}; and each $\id{n}$ can be transformed into $\ctrl(\tf{0_01})\parop\cdots\parop\ctrl(\tf{0_01})$ using \labelcref{eq:sum,eq:p0}.
\end{proof}

\begin{lemma}\label{lem:cqcconjuated}
    $\catcuni$ is a conjugated CPROP.
\end{lemma}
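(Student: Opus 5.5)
We apply \cref{lem:conjugationcondition} with the set $M$. By \cref{lem:cqcunitary}, $\catcuni$ is a unitary CPROP, and by \cref{lem:cqcreducible} it is $M$-reducible. It therefore suffices to check \eqref{eq:conjugationcondition} for every pair $f:n\to n$, $g:m\to m$ in $M$ and every $k,\ell$ with $\ell\ge1$ and $\max(n,m)\le k+\ell<n+m$.

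The elements of $M$ have arities $0$ ($\tf{alpha}$), $1$ ($\tf{h}$), $2$ ($\ctrl(\tf{alpha})$) and $3$ ($\ctrl(\ctrl(\tf{pi}))$). If $n=0$ or $m=0$, the constraint $\max(n,m)\le k+\ell<n+m$ is empty, so global phases need no check. The remaining pairs involve $\tf{h}$, $\ctrl(\tf{alpha})$ and $\ctrl(\ctrl(\tf{pi}))$, with $N=k+\ell$ ranging from $\max(n,m)$ to $n+m-1\le 5$. This leaves a finite list of configurations, each parametrised by at most two angles (one $\alpha$ in $f$, one in $g$). In every instance the overlap is at least one wire, and the total number of wires involved is at most $6$ including the control.

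We first use \eqref{eq:swapconjugation} and \eqref{eq:strenght} to strip spectator wires and wire permutations, reducing each instance to a canonical placement. We then organise the cases by the pair $\{f,g\}$.

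\textbf{$g=\tf{h}$.} The cases with $f\in\{\tf{h},\ctrl(\tf{alpha}),\ctrl(\ctrl(\tf{pi}))\}$ and the Hadamard on a target or control wire are handled by \eqref{eq:conjhh}, \eqref{eq:conjhcalpha}, \eqref{eq:conjcalphah}, \eqref{eq:conjhccpi}, \eqref{eq:conjccpih} and their variants \eqref{eq:conjhccpivar}, \eqref{eq:conjccpihvar}. The reversed orientation comes from taking daggers, using unitarity and $(\ctrl f)^\dagger=\ctrl(f^\dagger)$.

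\textbf{$f=\tf{h}$, $g$ a diagonal gate.} We rewrite $g\seqop(\tf{h}\text{ on a wire})\seqop g^\dagger$ using \eqref{eq:hh} to conjugate the diagonal gate into an $X$-type gate. This reduces the case to a Hadamard conjugation already treated.

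\textbf{Both $f$ and $g$ diagonal.} Here $f,g\in\{\ctrl(\tf{alpha}),\ctrl(\ctrl(\tf{pi}))\}$. Diagonal controlled phases commute syntactically, since $\ctrl^a(\tf{alpha})$ placed on overlapping wires commute by \eqref{eq:controlswap} together with the functoriality and strength laws. The only exception is when the overlap places the control of one gate on the target of the other. Since the target of $\ctrl(\tf{alpha})$ is a $0$-ary phase, every wire is a control wire, so the gates act as symmetric controls and commute after \eqref{eq:controlswap} and swap conjugation. Hence $g\seqop f\seqop g^\dagger=f$, and on the other side $\ctrl(f)$ commutes with $\id{1}\parop g$ for the same reason. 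Both sides then equal $\ctrl(f)$.

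The main obstacle is the Hadamard cases. Conjugating a controlled phase or a doubly controlled $\pi$ by a Hadamard on one of its wires does not reduce by commutation. Under control, these cases require the axioms \eqref{eq:conjhcalpha}--\eqref{eq:conjccpih} together with the derived identities \eqref{eq:chreduce}, \eqref{eq:ccalphareduce}, \eqref{eq:cccpireduce} to rewrite the extra control level back into $M$-gates. I expect the case $f=\ctrl(\ctrl(\tf{pi}))$ with $g=\tf{h}$ on one of its wires to be the trickiest, because the resulting Toffoli-like gate must be pushed through an extra control using \eqref{eq:cccpireduce}. I would first try to match it directly with \eqref{eq:conjhccpi} or \eqref{eq:conjhccpivar} after a swap conjugation.
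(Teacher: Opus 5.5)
Your proposal follows the paper's proof. You apply \cref{lem:conjugationcondition} via \cref{lem:cqcunitary,lem:cqcreducible}. Global phases admit no $(k,\ell)$. Pairs of controlled phases commute, so both sides of \eqref{eq:conjugationcondition} equal $\ctrl(\swap{k,\ell}\seqop\overline{f}\seqop\swap{\ell,k})$. Every case involving a Hadamard is a direct instance of \eqref{eq:conjhh}, \eqref{eq:conjhcalpha}--\eqref{eq:conjccpih}, \eqref{eq:conjccpihvar}, or \eqref{eq:conjhccpi} combined with \eqref{eq:controlswap} and \eqref{eq:swapconjugation}. Since you cite all of these, the argument goes through. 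However, several of your side remarks are wrong and should be removed.

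\textbf{Arities.} $\ctrl(\tf{alpha})$ is $1\to1$ (a $Z$-rotation) and $\ctrl(\ctrl(\tf{pi}))$ is $2\to2$ (a CZ); they are not of arity $2$ and $3$. Hence $N=k+\ell\le 3$, and at most four wires are involved including the control. Exactly seven Hadamard configurations remain, as in \cref{fig:cqcconjuatedsummary}:
\begin{itemize}
\item $(f,g)\in\{(\tf{H},\tf{H}),(\tf{H},\ctrl(\tf{alpha})),(\ctrl(\tf{alpha}),\tf{H})\}$ with $(k,\ell)=(0,1)$;
\item $(\tf{H},\ctrl(\ctrl(\tf{pi})))$ and $(\ctrl(\ctrl(\tf{pi})),\tf{H})$ with $(k,\ell)\in\{(0,2),(1,1)\}$.
\end{itemize}

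\textbf{Daggers do not exchange $f$ and $g$.} We have $(\ctrl(g\seqop f\seqop g^\dagger))^\dagger=\ctrl(g\seqop f^\dagger\seqop g^\dagger)$, which is still a conjugation by $g$. It gives the instance $(f^\dagger,g)$, not $(g,f)$. The two orientations are different equations. This is why \eqref{eq:conjhcalpha} and \eqref{eq:conjcalphah} are both axioms, and likewise \eqref{eq:conjhccpi} and \eqref{eq:conjccpih}.

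\textbf{$f=\tf{H}$ with $g$ diagonal.} Your ``$X$-type'' rewriting does not by itself reduce these cases to one already treated. After unfolding, you would still have to compare $\ctrl(\tf{H})$ with the phases conjugating it. No reduction is needed, because these cases are literally the following equations:
\begin{itemize}
\item \eqref{eq:conjcalphah} at $(0,1)$;
\item \eqref{eq:conjccpih} at $(1,1)$;
\item \eqref{eq:conjccpihvar} at $(0,2)$.
\end{itemize}

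\textbf{Reduction identities.} \eqref{eq:chreduce}, \eqref{eq:ccalphareduce} and \eqref{eq:cccpireduce} play no direct role. Only \eqref{eq:chreduce} enters, indirectly, through the proof of \eqref{eq:conjhh}. The case you expected to be hardest, $f=\ctrl(\ctrl(\tf{pi}))$ with $g=\tf{H}$, is just \eqref{eq:conjhccpi} combined with \eqref{eq:controlswap} and \eqref{eq:swapconjugation}. That is exactly the fallback you proposed.
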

\begin{proof}
    $\catcuni$ is an $M$-reducible unitary CPROP (see \cref{lem:cqcunitary,lem:cqcreducible}) and \cref{lem:conjugationcondition} gives a sufficient condition \eqref{eq:conjugationcondition} for the conjugation law to hold in these settings. We thus have to prove \eqref{eq:conjugationcondition} in $\catcuni$ for any $f:n\to n$ and $g:m\to m$ in $M$, and any $k,\ell\in\N$ such that $\ell\ge1$ and $\max(n,m)\le k+\ell<n+m$.
    As $M$ contains four kinds of morphisms, there are many cases to consider. Fortunately, most cases are trivial. When $f$ or $g$ is a global phase $\tf{alpha}$, there is no such $k,\ell$. When neither $f$ nor $g$ is a Hadamard gate, $f$ and $g$ are controlled global phases (namely of the form $\ctrl(\tf{alpha})$ or $\ctrl(\ctrl(\tf{pi}))$): such morphisms, as well as their controlled versions, commute whatever the wires they are applied on (using \labelcref{eq:controlswap} and \labelcref{eq:swapconjugation} one can make the shared control wires the outermost ones, and conclude with \labelcref{eq:ctrlseq,eq:strenght} and the fact that global phases commute). Hence, by unitarity of $\catcuni$ and \labelcref{eq:swapconjugation}, both sides of \eqref{eq:conjugationcondition} are equal to $\ctrl(\swap{k,\ell}\seqop\overline{f}\seqop\swap{\ell,k})$.
    As summarized in \cref{fig:cqcconjuatedsummary}, all remaining cases are provable in $\catcuni$.
\end{proof}

\begin{figure}[t!]
    \centering
    \setlength{\tabcolsep}{5pt}
    \begin{tabularx}{\linewidth}{|c|c|c|c|c|Y|}
        \hline
        $f:n\to n$ & $g:m\to m$ & $n$ & $m$ & $(k,\ell)$ & equation \\
        \hline
        $\tf{H}$ & $\tf{H}$ & $1$ & $1$ & $(0,1)$ & \labelcref{eq:conjhh} \\
        \hline
        $\tf{H}$ & $\ctrl(\tf{alpha})$ & $1$ & $1$ & $(0,1)$ & \labelcref{eq:conjcalphah} \\
        \hline
        $\tf{H}$ & $\ctrl(\ctrl(\tf{pi}))$ & $1$ & $2$ & $(0,2)$ & \labelcref{eq:conjccpihvar} \\
        \hline
        $\tf{H}$ & $\ctrl(\ctrl(\tf{pi}))$ & $1$ & $2$ & $(1,1)$ & \labelcref{eq:conjccpih} \\
        \hline
        $\ctrl(\tf{alpha})$ & $\tf{H}$ & $1$ & $1$ & $(0,1)$ & \labelcref{eq:conjhcalpha} \\
        \hline
        $\ctrl(\ctrl(\tf{pi}))$ & $\tf{H}$ & $2$ & $1$ & $(0,2)$ & \labelcref{eq:conjhccpi,eq:controlswap,eq:swapconjugation} \\
        \hline
        $\ctrl(\ctrl(\tf{pi}))$ & $\tf{H}$ & $2$ & $1$ & $(1,1)$ & \labelcref{eq:conjhccpi,eq:controlswap,eq:swapconjugation} \\
        \hline
    \end{tabularx}
    \vspace{0.5em}
    \caption{Summary of which equation proves \eqref{eq:conjugationcondition} in \cref{lem:cqcconjuated}. \label{fig:cqcconjuatedsummary}}
\end{figure}

\subsection{Completeness}

To prove completeness of controllable quantum circuits, we use an encoding/decoding procedure between $\catuni$ and $\catcuni$. Thanks to \cref{lem:cqcreducibleplus}, we can actually simplify the proofs by conducting them using $\catcuni\vert_{M_+}$ instead of $\catcuni$. There are two key steps: the first one is to establish the relation between \labelcref{eq:zn} and the control of $\catcuni$ (see \cref{lem:ctrldecoding}); and the second is to ensure that there are enough axioms in $\axcuni$ to decode each axiom of $\axuni$ (see \cref{lem:decoding}).

\begin{definition}[encoding]\label{def:encoding}
    Let $\encoding:\catcuni\vert_{M_+}\to\catuni$ be the \emph{encoding} defined as
    \begin{gather*}
        \encoding(C_1\seqop C_2) \defeq \encoding(C_1)\seqop \encoding(C_2), \hspace{2em}
        \encoding(C_1\parop C_2) \defeq \encoding(C_1)\parop \encoding(C_2), \\[0.5em]
        \encoding(\id{n}) \defeq \id{n}, \hspace{2em}
        \encoding(\swap{n,m}) \defeq \swap{n,m}, \hspace{2em}
        \encoding(\tf{palpha}) \defeq \tf{palpha}, \\[0em]
        \encoding\left(\tf{h}\right) \defeq \tf{h}, \hspace{1.3em}
        \encoding\left(\tf{calpha_middle}\right) \defeq \tf{zalpha}, \hspace{1.3em}
        \encoding\left(\tf{ccpi_middle}\right) \defeq \tf{ccpidef}.
    \end{gather*}
\end{definition}

\begin{lemma}\label{lem:encodinginterp}
    $\intuni{\encoding(C)}=\intcuni{C}$ for any $C\in\catcuni\vert_{M_+}$.
\end{lemma}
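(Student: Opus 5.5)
The plan is to argue by structural induction on $C\in\catcuni\vert_{M_+}$, using the inductive generation of $\catcuni\vert_{M_+}$ (\cref{def:reducibility}) from $M_+$ under $\seqop$ and $\parop$. Since $\encoding$ is defined compositionally, and both $\intuni{\cdot}$ and $\intcuni{\cdot}$ are PROP functors (the latter a CPROP functor), the inductive cases go through directly:
\begin{gather*}
\intuni{\encoding(C_1\seqop C_2)}=\intuni{\encoding(C_1)}\seqop\intuni{\encoding(C_2)}=\intcuni{C_1}\seqop\intcuni{C_2}=\intcuni{C_1\seqop C_2},
\end{gather*}
and similarly for $\parop$. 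Identities and swaps are preserved by both functors, since both are identity-on-objects PROP functors. These steps are routine.

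The substantive part is the base case, i.e.\ the generators in $M$. For $\tf{palpha}$ and $\tf{h}$ the claim holds by \cref{def:cqc}, since $\intcuni{\cdot}$ is defined to agree with $\intuni{\cdot}$ on them.

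For $\ctrl(\tf{palpha})$, compute $\intcuni{\ctrl(\tf{palpha})}=\ctrl_{\ket{1}}(e^{i\alpha})=\ketbra{0}{0}+e^{i\alpha}\ketbra{1}{1}$, which is $\intuni{\tf{Zalpha}}$, as already used in the proof of \cref{prop:cqcuniversality}.

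For $\ctrl(\ctrl(\tf{pi}))$, compute $\ctrl_{\ket{1}}(\ctrl_{\ket{1}}(-1))=\sum_{x\neq 11}\ketbra{x}{x}-\ketbra{11}{11}$, i.e.\ the CZ matrix. Then check that the vanilla circuit $\encoding(\ctrl(\ctrl(\tf{pi})))$ given in \cref{def:encoding} interprets to the same matrix. Presumably that circuit is a CZ built from a CNOT conjugated by Hadamards, or from $Z$-rotations, CNOTs and a global phase as in \labelcref{eq:cz}. If it is the Hadamard-conjugated CNOT, the check is immediate from $HXH=Z$. If it is the phase-gadget form, evaluate on each computational basis state $\ket{ab}$ and confirm the phase is $(-1)^{ab}$.

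I expect this last matrix verification to be the only step needing any care, mainly to get the global phase right. A secondary point to address is well-definedness: $\encoding$ is defined on terms, and the claim is about morphisms, but the statement is semantic and both interpretations are sound. It therefore suffices to prove it for every term representative generated as above, which is exactly what the induction does.
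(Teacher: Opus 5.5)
Your proposal is correct and is essentially the paper's proof: induction on the generation of $\catcuni\vert_{M_+}$ from $M_+$, where the $\seqop$ and $\parop$ cases follow from the compositional definition of $\encoding$ and the functoriality of $\intuni{\cdot}$ and $\intcuni{\cdot}$. The paper only says the base cases are ``easily checked'', and your explicit matrix computations for the controlled phase and the doubly controlled $\pi$-phase are what that check amounts to.
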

\begin{proof}
    By induction on $C$. The base cases are easily checked, and the induction cases are proved as follows where $\diamond$ is either $\seqop$ or $\parop$.
    \begin{align*}  \intuni{\encoding(C_1\diamond C_2)}
        &=\intuni{\encoding(C_1)\diamond\encoding(C_2)}\\
        &=\intuni{\encoding(C_1)}\diamond\intuni{\encoding(C_2)}\\
        &\eqihref\intcuni{C_1}\diamond\intcuni{C_2} =\intcuni{C_1\diamond C_2}\qedhere
    \end{align*}
\end{proof}

\begin{definition}[decoding]\label{def:decoding}
    Let $\decoding:\catuni\to\catcuni\vert_{M_+}$ be the \emph{decoding} defined as
    \begin{gather*}
        \decoding(C_1\seqop C_2) \defeq \decoding(C_1)\seqop \decoding(C_2), \hspace{2em}
        \decoding(C_1\parop C_2) \defeq \decoding(C_1)\parop \decoding(C_2), \\[0.5em]
        \decoding(\id{n}) \defeq \id{n}, \hspace{2em}
        \decoding(\swap{n,m}) \defeq \swap{n,m}, \hspace{2em}
        \decoding(\tf{palpha}) \defeq \tf{palpha}, \\[0em]
        \decoding\left(\tf{h}\right) \defeq \tf{h}, \hspace{1.3em}
        \decoding\left(\tf{zalpha}\right) \defeq \tf{calpha}, \hspace{1.3em}
        \decoding\left(\tf{CNOT}\right) \defeq \tf{CNOTdef}.
    \end{gather*}
\end{definition}

\begin{lemma}\label{lem:decodingencoding}
    $\catcuni\vdash\decoding(\encoding(C))=C$ for any $C\in\catcuni\vert_{M_+} $.
\end{lemma}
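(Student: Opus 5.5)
The proof will go by structural induction on $C\in\catcuni\vert_{M_+}$, following the inductive shape of $\encoding$ and $\decoding$. For the inductive cases $C=C_1\seqop C_2$ and $C=C_1\parop C_2$, both maps are defined compositionally, so
\[
\decoding(\encoding(C_1\diamond C_2))=\decoding(\encoding(C_1))\diamond\decoding(\encoding(C_2)),
\]
where $\diamond$ is $\seqop$ or $\parop$. The induction hypothesis, applied inside the context $[\cdot]\diamond C_2$ and then $C_1\diamond[\cdot]$, gives $C_1\diamond C_2$, and derivability is a congruence. One technical point needs care: $\encoding$ is defined on syntactic representatives of $\catcuni\vert_{M_+}$, that is, on terms built from $M_+$ by $\seqop$ and $\parop$. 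I will therefore state the induction on such terms. Well-definedness on equivalence classes is not needed here, since the lemma is an equation derivable in $\catcuni$ for every representative.

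The base cases are the elements of $M_+$.
\begin{itemize}
\item For $\id{n}$, $\swap{n,m}$, $\tf{palpha}$ and $\tf{h}$, both maps act as the identity on generators, so $\decoding(\encoding(C))=C$ syntactically.
\item For $C=\ctrl(\tf{alpha})$, we have $\encoding(C)=$ \tf{zalpha}, and $\decoding$ maps this back to $\ctrl(\tf{alpha})$ by definition. This case is also immediate.
\item The only nontrivial case is $C=\ctrl(\ctrl(\tf{pi}))$. Here $\encoding(C)$ is the vanilla circuit \tf{ccpidef}, which is a CZ gate written with Hadamards and a CNOT (or with Z-rotations and CNOTs). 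Applying $\decoding$ replaces each CNOT by $\tf{CNOTdef}$, i.e.\ $\ctrl(\tf{h}\seqop\ctrl(\tf{pi})\seqop\tf{h})$ with the appropriate control wiring. It also replaces each Z-rotation by $\ctrl$ of a global phase.
\end{itemize}

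To close the case $C=\ctrl(\ctrl(\tf{pi}))$, I will use functoriality \labelcref{eq:ctrlseq} together with \labelcref{eq:strenght,eq:swapconjugation} to move the Hadamards out of the control. These steps are instances of the conjugation law \labelcref{eq:conjhcalpha} or \labelcref{eq:conjhccpivar}, and the conjugation law is also available in full by \cref{lem:cqcconjuated}. The surrounding Hadamards then cancel by \labelcref{eq:hh}, leaving $\ctrl(\ctrl(\tf{pi}))$. If \tf{ccpidef} is instead the Z-rotation/CNOT form, I will also need \labelcref{eq:p,eq:pplus} to merge controlled phases and \labelcref{eq:cx}-like facts, derived via \labelcref{eq:ctrlseq} and \labelcref{eq:hh}, to cancel the CNOT pair.

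I expect this base case to be the main obstacle, because it is the only one requiring genuine diagrammatic rewriting. The rest of the proof is bookkeeping.
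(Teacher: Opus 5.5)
Your proof is correct and follows the paper's argument: structural induction on terms of $\catcuni\vert_{M_+}$, where the $\seqop$ and $\parop$ cases follow from the induction hypothesis and congruence, and the generator cases are checked directly. The only difference is in the doubly controlled $\pi$-phase base case. The paper handles it, like all base cases, with \labelcref{eq:hh} alone, which indicates that its decoding of the CNOT already places the Hadamards outside the control, so $\decoding(\encoding(\cdot))$ only creates adjacent Hadamard pairs. Your detour through \labelcref{eq:conjhcalpha} or the full conjugation law is therefore harmless but likely unnecessary.
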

\begin{proof}
    By induction on $C$. The base cases are easily proved using \labelcref{eq:hh}, and the induction cases are proved as follows where $\diamond$ is either $\seqop$ or $\parop$.
    \begin{multline*}
        \decoding(\encoding(C_1\diamond C_2))
        =\decoding(\encoding(C_1)\diamond\encoding(C_2))
        =\decoding(\encoding(C_1))\diamond\decoding(\encoding(C_2))
        \eqihref C_1\diamond C_2
    \end{multline*}
\end{proof}

\begin{lemma}\label{lem:ctrldecoding}
    $\catcuni\vdash \decoding(Z^n(\alpha))=\ctrl^{n+1}(\tf{palpha})$ for any $n\in\N$ and $\alpha\in\R$.
\end{lemma}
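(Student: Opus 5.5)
We argue by induction on $n$. For $n=0$, $Z^0(\alpha)=\tf{Zalpha}$, and by \cref{def:decoding} its decoding is $\ctrl(\tf{palpha})$, which is the claim.

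For the inductive step, we unfold the definition of $Z^{n+1}(\alpha)$ given in \cref{sec:qc}. It is a circuit built from two copies of $Z^n(\alpha/2)$, where one copy acts on shifted wires via swaps, together with CNOTs and a copy of $Z^n(-\alpha/2)$. This is the standard recursive construction of an $(n{+}1)$-controlled phase from $n$-controlled phases of half angle and CNOTs. Since $\decoding$ is compositional, $\decoding(Z^{n+1}(\alpha))$ is the same pattern in which:
\begin{itemize}
\item each $Z^n(\pm\alpha/2)$ is replaced, by the induction hypothesis, with $\ctrl^{n+1}(\tf{palpha})$ at angle $\pm\alpha/2$;
\item each CNOT is replaced with $\ctrl$ of the Hadamard-conjugated $\ctrl(\tf{pi})$, i.e.\ a controlled NOT.
\end{itemize}
Swaps are moved inside or outside the controls with \labelcref{eq:swapconjugation,eq:controlswap}.

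The key step is to recognise the resulting pattern as the controlled version of the one-level identity, and then peel off $n$ outer controls. Using \labelcref{eq:controlswap,eq:swapconjugation}, we make the $n$ shared control wires the outermost ones. Using functoriality \labelcref{eq:ctrlseq} and \labelcref{eq:strenght}, the whole circuit becomes $\ctrl^n(D)$, where $D$ is a $2$-qubit-plus-target circuit of the form
\[
\ctrl(\text{phase }\tfrac\alpha2)\;\cdot\;\text{CNOT}\;\cdot\;\ctrl(\text{phase }{-\tfrac\alpha2})\;\cdot\;\text{CNOT}\;\cdot\;\ctrl(\text{phase }\tfrac\alpha2).
\]
The exact arrangement follows the picture defining $Z^{n+1}$.

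It then suffices to prove $D=\ctrl^2(\tf{palpha})$ in $\catcuni$; the claim follows by applying $\ctrl^n$ to both sides. This base identity $\ctrl(\ctrl(\tf{palpha}))$ equals a CNOT/half-phase decomposition. It should follow from \labelcref{eq:ccalphareduce}, read in the appropriate direction, together with \labelcref{eq:p0,,eq:conjhcalpha,,eq:cph} and the conjugation law, which is available by \cref{lem:cqcconjuated}. Concretely, the CNOT conjugating $\ctrl(\text{phase})$ on the target can be handled with the conjugation law: $\mathrm{CNOT}\cdot\ctrl(\beta)\cdot\mathrm{CNOT}$ becomes a controlled version of $X\,\text{phase}\,X$, which rewrites via \labelcref{eq:calphaminus} to a phase $-\beta$ times an extra phase.

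The main obstacle is the bookkeeping in this inductive step. The picture for $Z^{n+1}$ may place the new control wire in a position that requires cyclic swaps, and the peeled $\ctrl^n$ must be extracted uniformly through several sequential pieces that act on differing wire sets. For this we plan to use the derived rule for $\ctrl(f\parop g)$ from \cref{sec:cprop} repeatedly. If the defining recursion of $Z^{n+1}$ instead nests $Z^n$ on the last $n+1$ wires with CNOTs between the first two wires, the argument is symmetric: control the outer two-wire identity by $\ctrl^{n-1}$ instead.
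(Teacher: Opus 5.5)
Your outline follows the paper's proof: induction, the induction hypothesis on the copies of $Z^n(\pm\alpha/2)$, reordering controls with \labelcref{eq:controlswap,eq:swapconjugation}, and collapsing a two-wire core with \labelcref{eq:ccalphareduce}. However, the peeling step fails as you justify it.

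After the induction hypothesis, the phase gates become $\ctrl^{n+1}(\cdot)$'s sharing $n$ control wires. The decoded CNOTs, by contrast, are bare two-wire gates that are \emph{not} under those controls. Once the shared wires are outermost, the circuit is a product of factors $\ctrl^n(\cdot)$ interleaved with factors $\id{n}\parop\mathrm{CNOT}$. For example, it contains $(\id{n}\parop\mathrm{CNOT})\seqop\ctrl^n(E)\seqop(\id{n}\parop\mathrm{CNOT})$.

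Functoriality \labelcref{eq:ctrlseq}, \labelcref{eq:strenght}, and the derived rule for $\ctrl(f\parop g)$ only merge factors that are already controlled. If you put the CNOTs inside $D$, then \labelcref{eq:ctrlseq} turns them into $\ctrl^n(\mathrm{CNOT})$, an $(n{+}1)$-fold controlled NOT, which is not what the circuit contains. Absorbing a conjugating pair into the scope of a control is exactly the conjugation law \labelcref{eq:conjugation}. This law does not follow from the control-functor coherence laws alone; it holds in $\catcuni$ only by \cref{lem:cqcconjuated}.

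The fix, which is what the paper does, is to apply \cref{lem:cqcconjuated} from right to left. You push it through the $n$ nested controls, using that the decoded CNOT is provably self-inverse, to obtain $\ctrl^n(\mathrm{CNOT}\seqop E\seqop\mathrm{CNOT})$. Only then do you merge with \labelcref{eq:ctrlseq} and conclude with \labelcref{eq:ccalphareduce}.

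You do invoke the conjugation law, but at the wrong place. Inside $D$ it is unnecessary: once the CNOTs are under $\ctrl^n$, the two-wire core is, up to wire arrangement, literally the right-hand side of \labelcref{eq:ccalphareduce}. Moreover, the instance you describe is not valid. In $\mathrm{CNOT}\seqop(\id{1}\parop\ctrl(\text{phase }\beta))\seqop\mathrm{CNOT}$, the CNOT acts on the control wire of $\ctrl(\text{phase }\beta)$, not on a target wire. Semantically, this circuit is not a controlled version of $X\,\text{phase}\,X$: when the CNOT's control is $\ket{0}$, it applies the $Z$-rotation of angle $\beta$, not the identity.
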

\begin{proof}
    By induction on $n$, where the base case $n=0$ is trivial, and where the induction case is proved as follows.
    \begin{gather*}
        \tf{ctrldecoding_var_01} \\[0.5em]
        \eqcref{def:decoding}\tf{ctrldecoding_var_02} \\[0.5em]
        \eqihref\tf{ctrldecoding_var_03}
        \eqeqref{eq:controlswap}\tf{ctrldecoding_var_04} \\[0.5em]
        \eqeqref{eq:conjugation}\tf{ctrldecoding_var_05} \\[0.5em]
        \eqeqref{eq:ctrlseq}\tf{ctrldecoding_var_06} 
        \eqeqref{eq:ccalphareduce}\tf{ctrldecoding_var_07}
        \eqeqref{}\tf{ctrldecoding_var_08}
        =\ctrl^{n+2}(\tf{palpha})
    \end{gather*}
\end{proof}

\begin{lemma}\label{lem:decoding}
    $\catuni\vdash C_1=C_2\implies\catcuni\vdash\decoding(C_1)=\decoding(C_2)$.
\end{lemma}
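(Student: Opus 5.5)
We argue by induction on the derivation of $\catuni\vdash C_1=C_2$, following the inference rules of \cref{def:freeprop}. Since $\decoding$ is defined compositionally (it commutes with $\seqop$ and $\parop$ and maps $\id{n}$, $\swap{n,m}$ to themselves), the rules for reflexivity, symmetry, transitivity and context closure transfer immediately: a context $K[\cdot]$ of $\catuni$ is sent to a context $\decoding(K)[\cdot]$ of $\catcuni$ with $\decoding(K[t])=\decoding(K)[\decoding(t)]$, and derivability in $\catcuni$ is a congruence. Likewise, every coherence law of PROPs is mapped by $\decoding$ to an instance of the same coherence law, which holds in $\catcuni$. It therefore remains to check that $\catcuni\vdash\decoding(L)=\decoding(R)$ for each axiom $(L=R)\in\axuni$.

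We take the axioms one by one. \labelcref{eq:p,,eq:pplus,,eq:hh} are decoded to the identical axioms of $\axcuni$. \labelcref{eq:z,,eq:zplus} become $\ctrl$ applied to \labelcref{eq:p,,eq:pplus} (using \labelcref{eq:ctrlseq}, and \labelcref{eq:ctrlid} with \labelcref{eq:p0} for the identity side), so they follow by functoriality of $\ctrl$. \labelcref{eq:bg} decodes, after expanding $\decoding(\tf{CNOT})$, into \labelcref{eq:cbg} up to cancellations of Hadamard pairs by \labelcref{eq:hh}. \labelcref{eq:e} decodes to \labelcref{eq:ce} after rewriting the X-rotations $\tf{h}\seqop\ctrl(\tf{palpha})\seqop\tf{h}$ as they appear in \labelcref{eq:ce}, possibly using the derived identities \labelcref{eq:cph,,eq:calphaminus,,eq:cphvar}. \labelcref{eq:cx} decodes to a statement about $\ctrl(\tf{h}\seqop\ctrl(\tf{pi})\seqop\tf{h})$: self-inverse-ness follows from unitarity (\cref{lem:cqcunitary}) and \labelcref{eq:ctrlseq}, and commutation of a controlled phase on the control wire follows since controlled global phases commute (as argued in the proof of \cref{lem:cqcconjuated}). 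For \labelcref{eq:cz}, I would use the conjugation law, available by \cref{lem:cqcconjuated}, to rewrite $\ctrl(\tf{h}\seqop\ctrl(\tf{pi})\seqop\tf{h})$ and then invoke the symmetry of doubly controlled phases via \labelcref{eq:controlswap}.

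The crucial family is \labelcref{eq:zn}, the only one involving unboundedly many qubits. By \cref{lem:ctrldecoding}, $\decoding(Z^n(2\pi))$ is provably equal to $\ctrl^{n+1}(\tf{palpha})$ with $\alpha=2\pi$; by \labelcref{eq:pplus} (and \labelcref{eq:p0}) the global phase $2\pi$ equals the empty diagram $\id{0}$, so functoriality \labelcref{eq:ctrlid} yields $\ctrl^{n+1}(\id{0})=\id{n+1}$, which is the decoding of the right-hand side. The main obstacle is therefore not this family, already handled by \cref{lem:ctrldecoding} (which itself relies on the conjugation law), but the careful diagrammatic verification of the bounded axioms \labelcref{eq:cz,,eq:e,,eq:bg}, where one must match the decoded CNOT shapes against \labelcref{eq:cbg,,eq:ce}; for these, the plan is to first normalize every $\decoding(\tf{CNOT})$ via the conjugation law into $\tf{h}$-conjugated $\ctrl(\ctrl(\tf{pi}))$ and then reduce to the corresponding axiom of $\axcuni$.
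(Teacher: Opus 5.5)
Your overall structure matches the paper's: reduce to checking $\catcuni\vdash\decoding(L)=\decoding(R)$ axiom by axiom, then handle the unbounded family \labelcref{eq:zn} with \cref{lem:ctrldecoding} and removal of the phase $2\pi$ (that removal is axiom \labelcref{eq:p} itself, not \labelcref{eq:pplus}). Your treatments of \labelcref{eq:p,,eq:pplus,,eq:hh,,eq:e,,eq:z,,eq:zplus,,eq:cx} are essentially right.

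The genuine gap is \labelcref{eq:cz}. That axiom equates the one-CNOT implementation of the controlled-Z gate (a Hadamard-conjugated CNOT) with a two-CNOT implementation that also contains Z-rotations. After decoding and cancelling Hadamards with \labelcref{eq:hh}, the left-hand side is just $\ctrl(\ctrl(\tf{pi}))$. The right-hand side, by contrast, consists of two Hadamard-conjugated copies of $\ctrl(\ctrl(\tf{pi}))$ interleaved with singly controlled phases. Your plan applies the conjugation law to the decoded CNOT and then uses the symmetry of the doubly controlled phase via \labelcref{eq:controlswap}. Those steps only move Hadamards around and permute control wires, so they cannot produce or absorb the singly controlled phases and the second CNOT.

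What is needed is the decomposition of a doubly controlled phase into singly controlled phases and CNOTs, i.e.\ the instance $\alpha=\pi$ of the derived equation \labelcref{eq:ccalphareduce}. This is exactly how the paper closes the case: \labelcref{eq:hh}, then \labelcref{eq:ccalphareduce}. That step is not bookkeeping. The derivation of \labelcref{eq:ccalphareduce} relies on \labelcref{eq:calphaminus}, which says that conjugating a controlled phase by NOT flips its sign up to a global phase and is a consequence of \labelcref{eq:ce}. It also uses the conjugation instances \labelcref{eq:conjhcalpha,,eq:conjcalphah}. Consequently, your closing claim that each decoded axiom reduces to ``the corresponding axiom of $\axcuni$'' fails for \labelcref{eq:cz}, which has no counterpart axiom in $\axcuni$.

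Two smaller points. First, $\decoding$ lands in $\catcuni\vert_{M_+}$, so the decoded CNOT is already $(\id{1}\parop\tf{h})\seqop\ctrl(\ctrl(\tf{pi}))\seqop(\id{1}\parop\tf{h})$ rather than $\ctrl(\tf{h}\seqop\ctrl(\tf{pi})\seqop\tf{h})$; no conjugation law is needed to normalize it. Second, for \labelcref{eq:bg}, besides \labelcref{eq:cbg} and \labelcref{eq:hh}, you also need \labelcref{eq:controlswap}. It turns the upside-down CNOT into a Hadamard-conjugated $\ctrl(\ctrl(\tf{pi}))$ with the Hadamards on the other wire.
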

\begin{proof}
    We prove that $\catcuni\vdash \decoding(L)=\decoding(R)$ for any axiom $(L=R)\in\axuni$. For some axioms this is trivial, namely \labelcref{eq:p,,eq:pplus,,eq:hh,,eq:e}, because their decodings are directly axioms of $\axcuni$. Moreover, \labelcref{eq:z,eq:zplus} are decoded as the controlled versions of \labelcref{eq:p,eq:pplus} and are thus directly provable in $\catcuni$.
    The axiom \labelcref{eq:bg} is decoded as follows.
    \begin{gather*}
        \decoding\left(\tf{decoding_bg_var_01}\right)
        \eqcref{def:decoding}\tf{decoding_bg_var_02} \\[0.5em]
        \eqeqref{eq:cbg}\tf{decoding_bg_var_03}
        \eqeqref{eq:hh,eq:pplus,eq:p}\tf{decoding_bg_var_04} \\[0.5em]
        \eqeqref{eq:controlswap}\tf{decoding_bg_var_05}
        \eqcref{def:decoding}\decoding\left(\tf{decoding_bg_var_06}\right)
    \end{gather*}
    The axiom \labelcref{eq:cx} is decoded as follows.
    \begin{gather*}
        \decoding\left(\tf{decoding_zcommut_01}\right)
        \eqcref{def:decoding}\tf{decoding_zcommut_02}
        \eqeqref{eq:hh}\tf{decoding_zcommut_03} \\[0.5em]
        \eqeqref{eq:ctrlseq}\tf{decoding_zcommut_04}
        \eqeqref{eq:pplus,eq:p}\tf{decoding_zcommut_05}
        \eqeqref{eq:hh}\tf{decoding_zcommut_06}
        \eqcref{def:decoding}\decoding\left(\tf{decoding_zcommut_07}\right)
    \end{gather*}
    The axiom \labelcref{eq:cz} is decoded as follows.
    \begin{gather*}
        \decoding\left(\tf{decoding_cz_01}\right)
        \eqcref{def:decoding}\tf{decoding_cz_02}
        \eqeqref{eq:hh}\tf{decoding_cz_03} \\[0.5em]
        \eqeqref{eq:ccalphareduce}\tf{decoding_cz_04}
        \eqcref{def:decoding}\decoding\left(\tf{decoding_cz_05}\right)
    \end{gather*}
    Finally, the axiom \labelcref{eq:zn} is decoded as follows.
    \begin{gather*}
        \decoding\left(\tf{decoding_zn_var_01}\right)
        \eqcref{lem:ctrldecoding}\tf{decoding_zn_var_03}
        \eqeqref{eq:p}\tf{decoding_zn_var_04}
        \eqcref{def:decoding}\decoding\left(\tf{decoding_zn_var_04}\right)
    \end{gather*}
\end{proof}

\begin{theorem}\label{th:cqccompleteness}
    $\catcuni$ is complete.
\end{theorem}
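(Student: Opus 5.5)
The plan is the standard encoding/decoding argument, transferring completeness of $\catuni$ (\cref{th:qccompleteness}) to $\catcuni$. Let $C_1,C_2\in\catcuni(n,n)$ with $\intcuni{C_1}=\intcuni{C_2}$; we show $\catcuni\vdash C_1=C_2$.

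\textbf{Step 1 (reduction).} By \cref{lem:cqcreducibleplus}, $\catcuni$ is $M_+$-reducible. So there are $C_1',C_2'\in\catcuni\vert_{M_+}(n,n)$ with $\catcuni\vdash C_i=C_i'$. By soundness (\cref{prop:cqcsoundness}), $\intcuni{C_1'}=\intcuni{C_2'}$. The encoding $\encoding$ is therefore defined on $C_1',C_2'$.

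\textbf{Step 2 (semantic transfer).} By \cref{lem:encodinginterp}, $\intuni{\encoding(C_1')}=\intcuni{C_1'}=\intcuni{C_2'}=\intuni{\encoding(C_2')}$. Completeness of vanilla circuits (\cref{th:qccompleteness}) then gives $\catuni\vdash\encoding(C_1')=\encoding(C_2')$.

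\textbf{Step 3 (syntactic transfer back).} By \cref{lem:decoding}, $\catcuni\vdash\decoding(\encoding(C_1'))=\decoding(\encoding(C_2'))$. Combining with \cref{lem:decodingencoding}, we chain
\[C_1=C_1'=\decoding(\encoding(C_1'))=\decoding(\encoding(C_2'))=C_2'=C_2,\]
with each step derivable in $\catcuni$.

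\textbf{Main obstacle.} The only delicate point is that \cref{lem:decoding} must be a genuine congruence statement. Checking axioms alone is not enough: $\decoding$ must respect the PROP coherence laws and contexts. I would argue this by induction on the derivation in $\catuni$, as follows.
\begin{itemize}
\item Since $\decoding$ is defined compositionally on $\seqop$ and $\parop$ and sends identities and swaps to identities and swaps, coherence laws map to coherence laws.
\item Contexts $K[\cdot]$ of $\catuni$ map to contexts of $\catcuni$, since $\catuni$ has no control constructor.
\item The generating case is then exactly the axiom verification already done.
\end{itemize}
Well-definedness of $\encoding$ on equivalence classes is not needed, since it is applied to chosen representatives $C_i'$. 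I would also note that the images of $\decoding$ lie in $\catcuni\vert_{M_+}$ only up to \labelcref{eq:hh}, but this is irrelevant: only derivability in $\catcuni$ is used in the final chain.
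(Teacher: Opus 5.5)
Your proposal is correct and follows the paper's proof essentially step for step: reduce to representatives built from $M_+$, encode, apply completeness of $\catuni$, decode, and conclude using $\catcuni\vdash\decoding(\encoding(C))=C$. The differences are minor. You use $M_+$-reducibility (\cref{lem:cqcreducibleplus}) where the paper uses $M$-reducibility (\cref{lem:cqcreducible}); your choice matches the domain of $\encoding$ at least as directly. You also make explicit two things the paper leaves implicit: the soundness step giving $\intcuni{C_1'}=\intcuni{C_2'}$, and the congruence argument showing that checking axioms suffices for \cref{lem:decoding}.
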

\begin{proof}
    Let $C_1,C_2\in\catcuni(n,n)$ such that $\intcuni{C_1}=\intcuni{C_2}$. By \cref{lem:cqcreducible}, there are some $C_1',C_2'\in\catcuni\vert_{M}(n,n)$ such that $\catcuni\vdash C_i=C_i'$. \cref{lem:encodinginterp} and completeness of $\catuni$ (see \cref{th:qccompleteness}) imply $\catuni\vdash\encoding(C_1')=\encoding(C_2')$. Then, by \cref{lem:decoding}, we get $\catcuni\vdash\decoding(\encoding(C_1'))=\decoding(\encoding(C_2'))$ and \cref{lem:decodingencoding} concludes the proof as follows.
    \begin{multline*}
        \catcuni\vdash C_1
        \eqcref{lem:cqcreducible}C_1'
        \eqcref{lem:decodingencoding}\decoding(\encoding(C_1')) \\
        \eqcref{lem:decoding,th:qccompleteness}\decoding(\encoding(C_2'))
        \eqcref{lem:decodingencoding}C_2'
        \eqcref{lem:cqcreducible}C_2
    \end{multline*}
\end{proof}

It is remarkable that the axiomatization $\axcuni$ only contains axioms involving at most three qubits, each of which admits a natural and intuitive interpretation. In a sense, using the CPROP formalism makes it possible to isolate the essence of quantum circuits by abstracting away the parts of the axiomatization that are not specific to quantum computing.

Moreover, as \labelcref{eq:conjhcalpha,,eq:conjcalphah,,eq:conjhccpi,,eq:conjccpih} are instances of the conjugation law \labelcref{eq:conjugation}, one can get rid of these axioms -- yielding the axiomatization $\axccuni$ (see \cref{fig:axcuni}) containing only five axioms involving at most two qubits -- by directly considering controllable quantum circuits as a freely generated CCPROP.

\begin{corollary}\label{cor:ccqccompleteness}
    $\catccuni\defeq\cat{CCPROP}_{\sigcuni}/\axccuni$ is complete.
\end{corollary}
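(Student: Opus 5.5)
The plan is to compare $\catccuni$ with $\catcuni$ and transfer the completeness of \cref{th:cqccompleteness}. Both are built on the same signature $\sigcuni$ and share the standard interpretation of \cref{def:cqc}, so it suffices to show that the two derivability relations on $\cpropterms{\sigcuni}$ coincide. Then $\intcuni{C_1}=\intcuni{C_2}$ gives $\catcuni\vdash C_1=C_2$, hence $\catccuni\vdash C_1=C_2$. Soundness of the extra conjugation rule for the interpretation is not needed for completeness, but it would follow from $\catunimap{2}$ with $\ctrl_{\ket{1}}$ being a CCPROP.

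\emph{First inclusion: $\catcuni\vdash C_1=C_2$ implies $\catccuni\vdash C_1=C_2$.} I argue by induction on the derivation in $\catcuni$. The rules of \cref{def:freeccprop} contain all the rules of \cref{def:freecprop}, and $\axccuni\subseteq\axcuni$. So the only cases to handle are the axioms \labelcref{eq:conjhcalpha,,eq:conjcalphah,,eq:conjhccpi,,eq:conjccpih}. For each of these I would exhibit it as an instance of the conjugation rule. Take $t_1$ to be a Hadamard (possibly tensored with identities, or a controlled global phase as in the table of \cref{fig:cqcconjuatedsummary}) and $t_3$ its inverse. For the premise $\catccuni\vdash t_1\seqop t_3=\id{m}$, use \labelcref{eq:hh}; for the controlled phases, use \labelcref{eq:ctrlseq} together with \labelcref{eq:p,eq:pplus} and \labelcref{eq:ctrlid}. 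The resulting equation may differ from the axiom by PROP coherence or by \labelcref{eq:strenght,eq:swapconjugation,eq:controlswap}, which hold in $\catccuni$. Some care is needed, since the premise must be derived inside $\catccuni$ and not merely hold semantically. These are one- or two-gate computations, and I expect them to be routine.

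\emph{Second inclusion: $\catccuni\vdash C_1=C_2$ implies $\catcuni\vdash C_1=C_2$.} Again I argue by induction on the derivation. Every rule except the conditional conjugation rule is present in $\catcuni$. For that rule, the induction hypothesis turns the premise $\catccuni\vdash t_1\seqop t_3=\id{m}$ into $\catcuni\vdash t_1\seqop t_3=\id{m}$. Then \cref{lem:cqcconjuated}, which says $\catcuni$ is a CCPROP, provides the conclusion in $\catcuni$. This step is where the subtlety noted before \cref{def:freeccprop} arises: the conjugation rule can only create instances whose premises are already derivable, and the induction handles this cleanly because premises are subderivations.

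The main obstacle is the first inclusion, namely checking that the four axioms are really instances of \labelcref{eq:conjugation} with provable premises in the weaker system. This is delicate because, before \cref{lem:cqcunitary} is re-established for $\catccuni$, facts like \labelcref{eq:p0} must be rederived there. I would verify this by inspecting that the proofs in \cref{app:cqcidentities} of \labelcref{eq:p0} use only \labelcref{eq:p,eq:pplus}, which belong to $\axccuni$.
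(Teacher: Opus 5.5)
Your proof is correct and follows the paper's argument. The paper obtains the corollary from \cref{th:cqccompleteness} by observing that \labelcref{eq:conjhcalpha,,eq:conjcalphah,,eq:conjhccpi,,eq:conjccpih} are instances of the conjugation law, with premises derivable from \labelcref{eq:hh,eq:p,eq:pplus}; this is exactly your first inclusion. Your second inclusion, via \cref{lem:cqcconjuated}, is valid but not needed for completeness: it only shows that the two theories coincide, which gives soundness of $\catccuni$ as a bonus.
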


\section{Controllable quantum circuits with ancillas}\label{sec:acqc}

In this section, we introduce the CPROP $\catcanc$ of \emph{controllable quantum circuits with ancillas} featuring auxiliary qubits. This provides an example of a CPROP that does not consist solely of endomorphisms and demonstrates that, similarly to the vanilla quantum circuit case \cite{extensions}, ancillas can be embedded into controllable quantum circuits.

In these settings, any isometry (see \cref{ex:proplinisouni}) can be expressed. We can initialize fresh qubits in the state $\ket{0}$, and also release qubits when they become useless. To guarantee that the overall evolution is an isometry, qubits can only be released when they are in the state $\ket{0}$ to ensure that the states of other qubits are not disturbed. This condition makes the definition of $\catcanc$ a bit tedious. Contrary to $\catuni$ and $\catcuni$, the graphical language $\catcanc$ is not directly freely generated. Instead, we first define the freely generated CPROP $\catcpos$ of \emph{controllable quantum circuits with post-selection} and define $\catcanc$ to be a restriction of $\catcpos$. More precisely, $\catcpos$ is the freely generated CPROP (see \cref{def:freecprop}) whose signature $\siganc$ contains the four kinds of generators
\begin{gather*}
    \tf{palpha}:0\to 0, \hspace{2em}
    \tf{H}:1\to 1, \hspace{2em}
    \tf{winit}:0\to 1,\hspace{2em}
    \tf{wterm}:1\to 0,
\end{gather*}
where the first one is parametrized by $\alpha\in\R$. Moreover, the axiomatization $\axanc$ of controllable quantum circuits with post-selection is depicted in \cref{fig:axanc}.

\begin{definition}[controllable quantum circuits with post-selection]\label{def:cpqc}
    Let $\catcpos\defeq\cat{CPROP}_{\siganc}/\axanc$ be the graphical language of \emph{controllable quantum circuits with post-selection} whose standard interpretation $\intpos{\cdot}:\catcpos\to\catlinmap{2}$ is the identity-on-objects CPROP functor inductively defined as
    \begin{gather*}
        \intpos{\ctrl(C)}\defeq\ctrl_{\ket{1}}(\intpos{C}), \hspace{2em}
        \intpos{\tf{palpha}}\defeq\intuni{\tf{palpha}}, \hspace{2em}
        \intpos{\tf{H}}\defeq\intuni{\tf{H}}, \\[0.5em]
        \intpos{\tf{winit}}\defeq \ket{0}, \hspace{4em}
        \intpos{\tf{wterm}}\defeq \bra{0}.
    \end{gather*}
\end{definition}

Every axiom of $\axcuni$ is also in $\axanc$, and there are four new axioms: \labelcref{eq:a} allows creating a fresh qubit; \labelcref{eq:ic,,eq:icc} reveal that \tf{winit} behaves as an annihilator on control; and \labelcref{eq:ca} allows pulling out the \tf{winit} and \tf{wterm} from control.

\begin{figure}[t!]
    \begin{figurebox}\begin{center}
        \vspace{-0.5em}
        \hspace{-1em}\begin{subfigure}{3.2cm}
            \begin{equation}\label{eq:a}\tag{$\textup{a}$}
                \tf{i_left}=\tf{i_right}
            \end{equation}
        \end{subfigure}\hspace{2em}
        \begin{subfigure}{4.2cm}
            \begin{equation}\label{eq:ic}\tag{$\textup{ic}$}
                \tf{iz_var_left}=\tf{iz_right}
            \end{equation}
        \end{subfigure}\hspace{2em}
        \begin{subfigure}{4.2cm}
            \begin{equation}\label{eq:icc}\tag{$\textup{icc}$}
                \tf{icz_left}=\tf{icx_right}
            \end{equation}
        \end{subfigure}

        \vspace{0em}
        \hspace{-1em}\begin{subfigure}{2.9cm}
            \begin{equation*}\tag{$\textup{p}$}
                \tf{p_left}=\tf{p_right}
            \end{equation*}
        \end{subfigure}\hspace{2em}
        \begin{subfigure}{4.5cm}
            \begin{equation*}\tag{$\textup{p}+$}
                \tf{pplus_left}=\tf{pplus_right}
            \end{equation*}
        \end{subfigure}\hspace{2em}
        \begin{subfigure}{4.6cm}
            \begin{equation*}\tag{$\textup{hh}$}
                \tf{hh_left}=\tf{hh_right}
            \end{equation*}
        \end{subfigure}

        \vspace{1em}
        \hspace{-1em}\begin{subfigure}{7.5cm}
            \begin{equation*}\tag{$\textup{cbg}$}
                \tf{cbg_left}=\tf{cbg_right}
            \end{equation*}
        \end{subfigure}

        \vspace{0.5em}
        \hspace{-1em}\begin{subfigure}{9.2cm}
            \begin{equation*}\tag{$\textup{ce}$}
                \tf{ce_left}=\tf{ce_right}
            \end{equation*}
        \end{subfigure}

        \vspace{0.6em}\noindent\rule{\textwidth}{0.4pt}

        \vspace{0.5em}
        \hspace{-1em}\begin{subfigure}{6.6cm}
            \begin{equation}\label{eq:ca}\tag{$\textup{ca}$}
                \tf{conjugationancilla_left}=\tf{conjugationancilla_right}
            \end{equation}
        \end{subfigure}
  
        \vspace{0.5em}
        \hspace{-1em}\begin{subfigure}{6.5cm}
            \begin{equation*}\tag{$\textup{ch}_1$}
                \tf{conjhcalpha_left}=\tf{conjhcalpha_right}
            \end{equation*}
        \end{subfigure}\hspace{2em}
        \begin{subfigure}{6.5cm}
            \begin{equation*}\tag{$\textup{ch}_2$}
                \tf{conjcalphah_left}=\tf{conjcalphah_right}
            \end{equation*}
        \end{subfigure}
        
        \vspace{0.5em}
        \hspace{-1em}\begin{subfigure}{6.5cm}
            \begin{equation*}\tag{$\textup{ch}_3$}
                \tf{conjhcccpi_left}=\tf{conjhcccpi_right}
            \end{equation*}
        \end{subfigure}\hspace{2em}
        \begin{subfigure}{6.5cm}
            \begin{equation*}\tag{$\textup{ch}_4$}
                \tf{conjccpich_left}=\tf{conjccpich_right}
            \end{equation*}
        \end{subfigure}
    \end{center}\end{figurebox}
    \caption{The axioms \labelcref{eq:a,,eq:ic,,eq:icc,,eq:p,,eq:pplus,,eq:hh,,eq:cbg,,eq:ce,,eq:ca,,eq:conjhcalpha,,eq:conjcalphah,,eq:conjhccpi,,eq:conjccpih} form the complete axiomatization $\axanc$ for $\catcanc$ (where quantum circuits with ancillas are defined as a CPROP); and the subset of axioms \labelcref{eq:a,,eq:ic,,eq:icc,,eq:p,,eq:pplus,,eq:hh,,eq:cbg,,eq:ce} form the complete axiomatization $\axccanc$ for $\catccanc$ (where quantum circuits with ancillas are defined as a CCPROP).\label{fig:axanc}}
\end{figure}

Since isometries are linear maps, there is an inclusion $\catisomap{2}\hookrightarrow\catlinmap{2}$, and we can define $\catcanc$ and its standard interpretation $\intanc{\cdot}:\catcanc\to\catisomap{2}$ as follows.

\begin{definition}[controllable quantum circuits with ancillas]\label{def:qcanc}
    Let $\catcanc$ be the CPROP defined by
    \begin{gather*}
        \catcanc(n,m)\defeq\left\{C:C\in\catcpos(n,m)\;\text{s.t.}\;\intpos{C}\in\catisomap{2}(n,m)\right\}
    \end{gather*}
    for any $n,m\in\N$; and let $\intanc{\cdot}:\catcanc\to\catisomap{2}$ be the restriction of $\intpos{\cdot}$ to $\catcanc$.
\end{definition}

Notice that $\catcanc$ is indeed a CPROP because identities and swaps are interpreted in $\catisomap{2}$ and because isometries are closed under compositions and control. Moreover, $\sigcuni\subset\siganc$, which implies $\catcuni\hookrightarrow\catcanc\hookrightarrow\catcpos$. Notice also that $\catcanc(n,m)=\varnothing$ when $m<n$ because isometries have no more inputs than outputs (otherwise, there would be a loss of information).

\begin{proposition}\label{prop:aqcuniversality}
    $\catcanc$ is universal.
\end{proposition}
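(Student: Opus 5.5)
Given an isometry $V\in\catisomap{2}(n,m)$ (so $n\le m$), the plan is to extend it to a unitary on $m$ qubits and use ancillas initialised in $\ket{0}$ for the missing $m-n$ inputs. Since $V$ is an isometry, its $2^n$ columns $V\ket{x}$ are orthonormal in $\C^{2^m}$. Completing them to an orthonormal basis (Gram--Schmidt) gives a unitary $U\in\catunimap{2}(m,m)$ with
\[
U\,(\ket{0}^{\otimes(m-n)}\parop\ket{x})=V\ket{x}\quad\text{for all } x\in\{0,1\}^n,
\]
that is, $(\ket{0}^{\otimes(m-n)}\parop\idmat{2^n})\seqop U=V$ in the diagrammatic order of composition. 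If one prefers the ancillas in another position, a swap (interpreted as a permutation) can be absorbed into $U$.

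By \cref{prop:cqcuniversality}, there is a circuit $C\in\catcuni(m,m)$ with $\intcuni{C}=U$. Since $\sigcuni\subset\siganc$ and the interpretations of the common generators and of $\ctrl$ coincide, $C$ is also a diagram of $\catcpos$ with $\intpos{C}=U$. Now define
\[
D\defeq\big((\tf{winit})^{\parop(m-n)}\parop\id{n}\big)\seqop C\in\catcpos(n,m).
\]
Because $\intpos{\cdot}$ is a PROP functor and $\intpos{\tf{winit}}=\ket{0}$, we get $\intpos{D}=(\ket{0}^{\otimes(m-n)}\parop\idmat{2^n})\seqop U=V$. This is an isometry, so $D\in\catcanc(n,m)$ by \cref{def:qcanc}, and $\intanc{D}=V$.

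There is no real obstacle. The only points to check are, first, the existence of the unitary extension of $V$, which is standard linear algebra, and second, that no $\tf{wterm}$ is needed, so membership in $\catcanc$ is immediate from the semantic definition. The case $n=m$ is just \cref{prop:cqcuniversality}, and $\catisomap{2}(n,m)$ is empty when $m<n$, so that case is vacuous.
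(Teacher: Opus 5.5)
Your proof is correct and follows the same route as the paper. You extend the isometry to a unitary whose missing inputs are $\ket{0}$-ancillas, realize that unitary in $\catcuni$ via \cref{prop:cqcuniversality}, and prepend the initialization generators. You also spell out the unitary-extension step and the membership check in $\catcanc$, which the paper leaves implicit by citing a standard reference.
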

\begin{proof}
    For any isometry $V\in\catisomap{2}(k,n+k)$ there exists a unitary $U\in\catunimap{2}(n+k,n+k)$ such that $V=U(\ket{0}^{\parop n}\parop\idmat{2^k})$ \cite{nielsenchuang}. By universality of $\catcuni$ (see \cref{prop:cqcuniversality}), there exists $C\in\catcuni(n+k,n+k)$ such that $\intcuni{C}=U$ and $\intanc{(\tf{winit}^{\parop n}\parop\id{k})\seqop C}=V$.
\end{proof}

\begin{proposition}
    The axiomatization $\axanc$ is sound.
\end{proposition}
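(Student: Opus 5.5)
Soundness amounts to checking that the standard interpretation $\intpos{\cdot}:\catcpos\to\catlinmap{2}$ is well defined on the quotient. Since $\intpos{\cdot}$ is defined as an identity-on-objects CPROP functor on controlled terms, it suffices to verify two things. First, $\catlinmap{2}$ with $\ctrl_{\ket{1}}$ is a CPROP, so that the coherence laws \labelcref{eq:ctrlid,eq:ctrlseq,eq:strenght,eq:controlswap,eq:swapconjugation} hold semantically. Second, $\intpos{L}=\intpos{R}$ holds for every axiom $(L=R)\in\axanc$. The derivability relation is the smallest congruence containing these equations, including contexts under $\ctrl$. Hence a straightforward induction on derivations shows $\catcpos\vdash C_1=C_2\implies\intpos{C_1}=\intpos{C_2}$: the congruence step under control uses that $\ctrl_{\ket{1}}$ is a function on linear maps. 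Soundness for $\catcanc$ then follows, because $\intanc{\cdot}$ is the restriction of $\intpos{\cdot}$.

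For the axioms, I will split $\axanc$ into two groups. The axioms \labelcref{eq:p,,eq:pplus,,eq:hh,,eq:cbg,,eq:ce,,eq:conjhcalpha,,eq:conjcalphah,,eq:conjhccpi,,eq:conjccpih} belong to $\axcuni$ and only involve the generators of $\sigcuni$, on which $\intpos{\cdot}$ and $\intcuni{\cdot}$ agree. Their semantic validity is therefore exactly \cref{prop:cqcsoundness}.

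It remains to check the four new axioms \labelcref{eq:a,,eq:ic,,eq:icc,,eq:ca} by direct computation with $\intpos{\tf{winit}}=\ket{0}$ and $\intpos{\tf{wterm}}=\bra{0}$.
\begin{itemize}
\item \labelcref{eq:a} reduces to $\braket{0}{0}=1$.
\item \labelcref{eq:ic} and \labelcref{eq:icc} express that plugging $\ket{0}$ into a control wire annihilates the control. They follow from $(\ket{0}\parop\idmat{2^n})^\dagger$-free reasoning: $\ctrl_{\ket{1}}(U)(\ket{0}\parop\psi)=\ket{0}\parop\psi$, which is the semantic counterpart of \labelcref{eq:falsepoint}.
\item \labelcref{eq:ca} pulls $\tf{winit}$ and $\tf{wterm}$ out of a control. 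I expect it to be the only slightly delicate case, because it involves a controlled non-endomorphism pattern. I would expand $\ctrl_{\ket{1}}(X)=\ketbra{0}{0}\parop\idmat{}+\ketbra{1}{1}\parop X$ on the relevant endomorphism, apply it to the basis vectors of the control qubit separately, and compare with the right-hand side using $\bra{0}\ket{0}=1$ in the branch where the control is $\ket{0}$.
\end{itemize}
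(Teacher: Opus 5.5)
Your proposal is correct and follows the paper's approach: the paper only says that it suffices to check $\intanc{L}=\intanc{R}$ axiom by axiom by direct computation. Your extra bookkeeping makes that one-liner explicit: the coherence laws hold for $\ctrl_{\ket{1}}$ in $\catlinmap{2}$, soundness follows by induction over derivations including contexts under $\ctrl$, the axioms shared with $\axcuni$ are covered by \cref{prop:cqcsoundness}, and working with $\intpos{\cdot}$ also covers non-isometric intermediate terms. One small blemish is the phrase ``$(\ket{0}\parop\idmat{2^n})^\dagger$-free reasoning'', which is meaningless and can be dropped; the computation you actually give, $\ctrl_{\ket{1}}(U)(\ket{0}\parop\psi)=\ket{0}\parop\psi$, is what \labelcref{eq:ic,,eq:icc} require.
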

\begin{proof}
    Similarly to \cref{prop:qcsoundness}, it is enough to check that $\intanc{L}=\intanc{R}$ for any axiom $(L=R)\in\axanc$, which is done by direct computation.
\end{proof}

\begin{lemma}\label{lem:pulloutancilla}
    For any $C\in\catcanc(k,k+n)$ there exist $C'\in\catcuni(m+n+k,m+n+k)$ such that
    \begin{gather*}
        \catcanc\vdash\tf{ckn}=\tf{cprimemnk}.
    \end{gather*}
\end{lemma}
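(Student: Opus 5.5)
The plan is to prove a more general statement by structural induction on controlled terms of $\catcpos$, and then specialise it to isometries. The general claim: for every $C\in\catcpos(a,b)$ there exist $m,m'\in\N$ with $m+a=m'+b$ and a circuit $C'\in\catcuni(m+a,m+a)$ such that
\begin{gather*}
\catcpos\vdash C=(\tf{winit}^{\parop m}\parop\id{a})\seqop C'\seqop(\tf{wterm}^{\parop m'}\parop\id{b}).
\end{gather*}
Subterms of a circuit of $\catcanc$ need not be isometries, which is why I would drop the isometry hypothesis during the induction. This causes no problem, because the derivation only uses axioms of $\axanc$ and so holds in $\catcpos$, hence in $\catcanc$.

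To obtain the lemma from the general claim, take $C\in\catcanc(k,k+n)$. The relation $m+k=m'+k+n$ forces $m=m'+n$. Renaming $m'$ as the $m$ of the statement, the $m+n$ initialisations sit in front of $k$ input wires, and $m$ terminations are applied at the end. This is exactly the announced shape, with $C'\in\catcuni(m+n+k,m+n+k)$. I am assuming the pictured right-hand side places the terminations on the top $m$ wires, and this can always be arranged with swaps.

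\textbf{Base cases.}
\begin{itemize}
\item For $\tf{palpha}$, $\tf{H}$, identities and swaps, take $m=m'=0$ and $C'=C$.
\item For $\tf{winit}$, take $m=1$, $m'=0$ and $C'=\id{1}$.
\item For $\tf{wterm}$, take $m=0$, $m'=1$ and $C'=\id{1}$.
\end{itemize}

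\textbf{Composition cases.}
\begin{itemize}
\item Sequential composition $C_1\seqop C_2$. The initialisations of $C_2$ are states $0\to1$, so by \labelcref{eq:inter,eq:seqid} they slide leftwards past $C_1'$. Symmetrically, the terminations of $C_1$ slide rightwards past $C_2'$. Reordering wires with swaps, which belong to $\catcuni$, and using \labelcref{eq:nat}, I obtain a single unitary core $(\id{}\parop C_1')\seqop\sigma\seqop(\id{}\parop C_2')\seqop\sigma'$ in $\catcuni$.
\item Parallel composition $C_1\parop C_2$. The argument is the same, gathering all ancilla wires on top with swaps and \labelcref{eq:nat}.
\end{itemize}

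\textbf{Control case.} This is where I expect the real work. Here $C:a\to a$ is an endomorphism, so $m=m'$ and
\begin{gather*}
\ctrl(C)=\ctrl\big((\tf{winit}^{\parop m}\parop\id{a})\seqop C'\seqop(\tf{wterm}^{\parop m}\parop\id{a})\big).
\end{gather*}
I would induct on $m$. For $m\ge1$, write the inner term as a single $\tf{winit}\parop\id{a}$, followed by the endomorphism $D\defeq(\id{1}\parop\tf{winit}^{\parop m-1}\parop\id{a})\seqop C'\seqop(\id{1}\parop\tf{wterm}^{\parop m-1}\parop\id{a})$ on $a+1$ wires, followed by $\tf{wterm}\parop\id{a}$. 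Apply \labelcref{eq:ca} to extract this outermost ancilla pair from the control. Axiom \labelcref{eq:ca} fixes a particular position of the ancilla relative to the control wire, so I would first use \labelcref{eq:swapconjugation} and \labelcref{eq:strenght} to bring the ancilla to that position and afterwards restore the order with swaps. This yields $\ctrl(D)$ with the ancilla outside. The inner induction hypothesis applied to $D$, in which the $m-1$ remaining ancillas must be pulled out through the same control, finishes the step. The resulting core is $\ctrl(C')$ conjugated by swaps, which lies in $\catcuni$.

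The main obstacle is this control step: matching the exact form of \labelcref{eq:ca} and tracking the wire bookkeeping. Everything else reduces to PROP coherence.
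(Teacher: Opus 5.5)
Your proposal is correct and follows the paper's own argument. The paper's proof is a one-line sketch: bend the wires so that every initialisation and termination sits on top, using the PROP coherence laws and \labelcref{eq:ca}, and note that only endomorphisms can be controlled, so each control contains equally many initialisations and terminations (this is your observation $m=m'$ in the control case); your structural induction over all terms of $\catcpos$, with the isometry hypothesis dropped, makes that sketch explicit.
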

\begin{proof}
    We can always bend the wires to put all \tf{winit} and \tf{wterm} gates on top of the circuit using the coherence laws and \labelcref{eq:ca} (only endomorphisms can be controlled, which implies that there are always as many \tf{winit} as there are \tf{wterm} inside a control).
\end{proof}

While the generator $\tf{winit}$ corresponds to qubit initialization in the state $\ket{0}$, we can also define qubit initialization in the state $\ket{1}$ with the shorthand notations
\begin{gather*}
    \tf{not}\defeq\tf{notdef}
    \hspace{2em}\text{and}\hspace{2em}
    \tf{binit}\defeq\tf{binitdef}.
\end{gather*}

\begin{lemma}\label{lem:acqcpoints}
    \tf{winit} and \tf{binit} are points in $\catcanc$.
\end{lemma}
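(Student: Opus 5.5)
We have to establish the two equations \labelcref{eq:falsepoint,eq:truepoint} in $\catcanc$ for every $f\in\catcanc(n,n)$, with $\tf{winit}$ as the false point and $\tf{binit}$ (defined as $\tf{winit}$ followed by $\tf{not}$) as the true point. The strategy is to reduce an arbitrary controlled endomorphism to controls of a few basic gates, and then check the point equations on those gates with the axioms \labelcref{eq:ic,eq:icc}.

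\textbf{Reduction of $f$.} By \cref{lem:pulloutancilla}, applied with $k=n$ and no additional output wires, $f$ is equal to a circuit in which fresh $\tf{winit}$'s are created on top, a unitary circuit $C'\in\catcuni$ is applied, and the same number of $\tf{wterm}$'s close the top wires. Using \labelcref{eq:ca}, these $\tf{winit}$/$\tf{wterm}$ can be pulled out of the scope of $\ctrl$, so $\ctrl(f)$ equals $\tf{winit}$'s on top, then $\ctrl(C')$ with a possibly permuted control wire, then $\tf{wterm}$'s. Plugging a point on the control wire commutes with these outer ancilla operations by the interchange law. It therefore suffices to prove both point equations for $C'\in\catcuni$.

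\textbf{Reduction to generators.} By \cref{lem:cqcreducible}, $C'$ is provably equal to a circuit in $\catcuni\vert_M$. Functoriality \labelcref{eq:ctrlseq}, together with the derived splitting of $\ctrl(f\parop g)$ into a product of controls sharing the control wire, expresses $\ctrl(C')$ as a sequence of controlled generators placed on various wires, all sharing one control wire. The false point and the true point are both preserved through such a sequence once each factor satisfies the equation. The false point returns $\tf{winit}$ on the wire, so the next factor can be handled in turn. The true point returns $\tf{binit}$, so the induction also continues. Swaps are handled with \labelcref{eq:swapconjugation}. Hence only $\ctrl(x)$ needs to be checked for $x\in\{\tf{h},\tf{alpha},\ctrl(\tf{alpha}),\ctrl(\ctrl(\tf{pi}))\}$.

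\textbf{Base cases.} For the false point:
\begin{itemize}
\item $\ctrl(\tf{alpha})$ is handled by \labelcref{eq:ic}.
\item The nested controls $\ctrl(\ctrl(\tf{alpha}))$ and $\ctrl^3(\tf{pi})$ are handled by \labelcref{eq:ic}, after using \labelcref{eq:controlswap} to bring the $\tf{winit}$-controlled wire to the innermost position. \labelcref{eq:strenght} is also needed here.
\item $\ctrl(\tf{h})$ is handled by rewriting it via \labelcref{eq:chreduce}, or via the \labelcref{eq:cph} decomposition, into controlled phases conjugated by one-qubit gates. \labelcref{eq:ic} then kills the controlled phases and \labelcref{eq:hh} cancels what remains.
\end{itemize}
For the true point, $\tf{binit}$ is $\tf{winit}$ followed by a NOT, and \labelcref{eq:icc} expresses how a $\tf{winit}$ followed by a CNOT-type control behaves. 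I expect to show that \labelcref{eq:icc}, combined with \labelcref{eq:ic}, yields that conjugating the control by NOT turns control into anti-control. Then $\tf{binit}$ applied to $\ctrl(\tf{alpha})$ gives $\tf{alpha}$ in parallel, and the remaining generators follow as before.

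This last step is the main obstacle. It requires deriving the true-point behaviour of $\ctrl(\tf{alpha})$ from the axioms, probably by writing NOT as $\tf{h}$, then $\ctrl(\tf{pi})$, then $\tf{h}$, and using \labelcref{eq:conjhcalpha} together with \labelcref{eq:icc}.
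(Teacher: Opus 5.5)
The first two steps of your plan are the paper's: pulling the ancillas out with \cref{lem:pulloutancilla} and \labelcref{eq:ca}, then reducing to $M$ with \cref{lem:cqcreducible} and splitting the control. The gap is in the false-point base cases other than $\ctrl(\tf{alpha})$.

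For $\ctrl(\ctrl(\tf{alpha}))$ and $\ctrl^3(\tf{pi})$, using \labelcref{eq:controlswap} to move the $\tf{winit}$ wire to the innermost control leaves the gate $\ctrl(\tf{alpha})$ (resp.\ $\ctrl(\ctrl(\tf{pi}))$) inside the outer $\ctrl$, while $\tf{winit}$ stays outside it. At that point \labelcref{eq:ic} cannot be applied. Its two sides have type $0\to 1$, so it can be used under a control only when the $\tf{winit}$ itself lies inside the controlled endomorphism. Apart from \labelcref{eq:ic,eq:icc} themselves, the only rule that moves $\tf{winit}$ across a control boundary is \labelcref{eq:ca}, and it needs a matching $\tf{wterm}$ on the same wire. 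In the point equation that wire stays open, so at best you get a post-selected version.

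The case $\ctrl(\tf{h})$ has the same problem. $\ctrl(\tf{h})$ is entangling, and $\ctrl(\ctrl(\tf{pi}))$ is the only entangling gate in $M$. So any rewriting via \labelcref{eq:chreduce}, or via a controlled \labelcref{eq:cph} followed by the conjugation law, produces a doubly controlled phase with $\tf{winit}$ on one of its controls. \labelcref{eq:ic} does not remove that gate.

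The missing ingredient is the axiom \labelcref{eq:icc}, which expresses that $\tf{winit}$ annihilates a control of $\ctrl(\ctrl(\tf{pi}))$; you never use it for the false point. The paper proceeds as follows to obtain \labelcref{eq:winitch,eq:winitccalpha,eq:winitcccpi}:
\begin{itemize}
\item it rewrites $\ctrl(\tf{h})$, $\ctrl(\ctrl(\tf{alpha}))$ and $\ctrl^3(\tf{pi})$ with \labelcref{eq:chreduce,eq:ccalphareduce,eq:cccpireduce}, so that the wire carrying $\tf{winit}$ is acted on only by single-wire controlled phases and by $\ctrl(\ctrl(\tf{pi}))$;
\item it removes those gates with \labelcref{eq:ic,eq:icc};
\item it cleans up with \labelcref{eq:hh,eq:pplus,eq:p0}.
\end{itemize}

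Conversely, the step you flag as the main obstacle is easy and does not need \labelcref{eq:icc}. The unitary identity \labelcref{eq:calphaminus} pushes $\tf{not}$ past a controlled phase, negating its angle and emitting a global phase. Applying \labelcref{eq:ic} afterwards gives the true point for $\ctrl(\tf{alpha})$.

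More generally, completeness of $\catcuni$ (\cref{th:cqccompleteness}) gives $(\tf{not}\parop\id{n})\seqop\ctrl(C)\seqop(\tf{not}\parop\id{n})=(\id{1}\parop C)\seqop\ctrl(C^\dagger)$. Hence the true point for $C$ follows from the false point for $C^\dagger$. The real work is therefore the false point for multiply controlled gates, which is exactly where your argument fails.
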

\begin{proof}
    We have to prove
    \begin{equation*}
        \catcanc\vdash\tf{winitcc}=\tf{winitidid}
        \hspace{2em}\text{and}\hspace{2em}
        \catcanc\vdash\tf{binitcc}=\tf{binitc}
    \end{equation*}
    for any $C\in\catcanc(n,n)$. By \cref{lem:pulloutancilla,lem:cqcreducible}, it is enough to prove these equations for the morphisms of $M$, which correspond to \labelcref{eq:ic,,eq:binitcalpha,,eq:winitch,,eq:binitch,,eq:winitccalpha,,eq:binitccalpha,,eq:winitcccpi,,eq:binitcccpi} (see \cref{fig:acqcidentities}). The proofs of these equations are given in order of dependency in \cref{app:acqcidentities}. More precisely, we have the following derivations.
    \begin{gather*}
        \tf{aqcfalse_01}
        \eqcref{lem:pulloutancilla}\tf{aqcfalse_02}
        \eqcref{lem:cqcreducible}\tf{aqcfalse_03}
        \eqeqref{eq:ca}\tf{aqcfalse_04}  \\[0.5em]
        \eqeqref{eq:ic,eq:binitcalpha,eq:winitch,eq:binitch,eq:winitccalpha,eq:binitccalpha,eq:winitcccpi,eq:binitcccpi}\tf{aqcfalse_05}
        \eqeqref{eq:a}\tf{aqcfalse_06}
    \end{gather*}
    \begin{gather*}
        \tf{aqctrue_01}
        \eqcref{lem:pulloutancilla}\tf{aqctrue_02}
        \eqcref{lem:cqcreducible}\tf{aqctrue_03}
        \eqeqref{eq:ca}\tf{aqctrue_04} \\[0.5em]
        \eqeqref{eq:ic,eq:binitcalpha,eq:winitch,eq:binitch,eq:winitccalpha,eq:binitccalpha,eq:winitcccpi,eq:binitcccpi}\tf{aqctrue_05}
        \eqcref{lem:cqcreducible}\tf{aqctrue_06}
        \eqcref{lem:pulloutancilla}\tf{aqctrue_07}\qedhere
    \end{gather*}
\end{proof}

\begin{figure}[t!]
    \begin{figurebox}\begin{center}
        \vspace{-0.5em}
        \hspace{-1em}\begin{subfigure}{4.7cm}
            \begin{equation}\label{eq:notnot}
                \tf{notnot_01}=\tf{notnot_03}
            \end{equation}
        \end{subfigure}\hspace{1.5em}
        \begin{subfigure}{4.2cm}
            \begin{equation}\label{eq:binitcalpha}
                \tf{binitcalpha_01}=\tf{binitcalpha_07}
            \end{equation}
        \end{subfigure}\hspace{1.5em}
        \begin{subfigure}{4.2cm}
            \begin{equation}\label{eq:binitccpi}
                \tf{binitccpi_01}=\tf{binitccpi_10}
            \end{equation}
        \end{subfigure}
        
        \vspace{1em}
        \hspace{-1em}\begin{subfigure}{4.2cm}
            \begin{equation}\label{eq:winitch}
                \tf{winitch_01}=\tf{winitch_04}
            \end{equation}
        \end{subfigure}\hspace{2em}
        \begin{subfigure}{4.2cm}
            \begin{equation}\label{eq:binitch}
                \tf{binitch_01}=\tf{binitch_04}
            \end{equation}
        \end{subfigure}

        \vspace{1em}
        \hspace{-1em}\begin{subfigure}{4.2cm}
            \begin{equation}\label{eq:winitccalpha}
                \tf{winitccalpha_01}=\tf{winitccalpha_04}
            \end{equation}
        \end{subfigure}\hspace{2em}
        \begin{subfigure}{4.2cm}
            \begin{equation}\label{eq:binitccalpha}
                \tf{binitccalpha_01}=\tf{binitccalpha_06}
            \end{equation}
        \end{subfigure}
        
        \vspace{1em}
        \hspace{-1em}\begin{subfigure}{4.2cm}
            \begin{equation}\label{eq:winitcccpi}
                \tf{winitcccpi_01}=\tf{winitcccpi_04}
            \end{equation}
        \end{subfigure}\hspace{2em}
        \begin{subfigure}{4.2cm}
            \begin{equation}\label{eq:binitcccpi}
                \tf{binitcccpi_01}=\tf{binitcccpi_04}
            \end{equation}
        \end{subfigure}
    \end{center}\end{figurebox}
    \caption{Some equations provable in $\catcanc$. \label{fig:acqcidentities}}
\end{figure}

\begin{theorem}\label{th:acqccompleteness}
    $\catcanc$ is complete.
\end{theorem}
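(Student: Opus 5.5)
Our plan is to reduce completeness of $\catcanc$ to completeness of $\catcuni$ (\cref{th:cqccompleteness}), in the spirit of the ancilla extension of vanilla circuits in \cite{extensions}. Take $C_1,C_2\in\catcanc(k,k+n)$ with $\intanc{C_1}=\intanc{C_2}$. By \cref{lem:pulloutancilla}, each $C_i$ can be put in a normal form consisting of $m_i+n$ fresh $\tf{winit}$ inputs, followed by a unitary circuit $C_i'\in\catcuni$, followed by $m_i$ $\tf{wterm}$ outputs. Using \labelcref{eq:a}, we pad both circuits with pairs $\tf{winit}\seqop\tf{wterm}$ on fresh wires so that they share the same number $m$ of released ancillas. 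Both are then of the shape $(\tf{winit}^{\parop (m+n)}\parop\id{k})\seqop C_i'\seqop(\tf{wterm}^{\parop m}\parop\id{n+k})$, up to wire ordering.

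The core step is to show that two such normal forms with equal semantics are provably equal. Since the overall map is an isometry, $\bra{0}^{\otimes m}$ applied after $U_i=\intcuni{C_i'}$ on $\ket{0}^{\otimes (m+n)}\otimes\ket{x}$ loses no norm. Hence $U_i(\ket{0}^{m+n}\ket{x})=\ket{0}^m\otimes V\ket{x}$ for the common isometry $V$, so $U_1$ and $U_2$ agree on the subspace $\ket{0}^{m+n}\otimes\C^{2^k}$. It follows that $U_2^\dagger U_1$ acts as the identity on that subspace, i.e.\ it is block diagonal of the form $\ketbra{0^{m+n}}{0^{m+n}}\otimes I+ W$, where $W$ acts on the complement.

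We aim to realise this as a controlled statement. Let $W'$ be a unitary such that $U_2^\dagger U_1$ equals the $\ctrl$-style unitary applying $W'$ on the whole register unless the ancillas are all $\ket{0}$. Concretely, we write $U_2^\dagger U_1=\intcuni{D}$, where $D$ is built by conjugating the $m+n$ ancilla wires with $\tf{not}$ gates and applying a multi-controlled circuit that fires only when some ancilla is $\ket 1$. The simplest route is an auxiliary fresh ancilla $a$: compute the OR of the ancillas into $a$ with Toffoli-type gates, apply $\ctrl(E)$ controlled on $a$ with $\intcuni{E}$ a suitable unitary, and uncompute. By completeness of $\catcuni$ we have $\catcuni\vdash C_1'=C_2'\seqop D$, with $D$ of this form (after adding $a$ via \labelcref{eq:a}). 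Plugging the $\tf{winit}$ inputs, the OR computation evaluates to $\tf{winit}$ on $a$ using \cref{lem:acqcpoints} and the point equations. The control then annihilates by \labelcref{eq:falsepoint}, the uncomputation cancels, and $a$ is released by \labelcref{eq:a}. This leaves $(\tf{winit}^{\parop(m+n)}\parop\id{k})\seqop C_2'$, which proves the equality.

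The main obstacle is making the OR/Toffoli evaluation step rigorous. We must show diagrammatically that controlled gates whose controls are fed $\tf{winit}$ vanish and that $\tf{not}$-conjugated controls become $\tf{binit}$ and fire, so that each compute gate reduces to the identity or to a $\tf{not}$ on $a$. Once \cref{lem:acqcpoints} supplies both points, we expect this to follow from \labelcref{eq:falsepoint,eq:truepoint} applied gate by gate, together with \labelcref{eq:notnot}. Care is also needed with wire reorderings under control, which \labelcref{eq:swapconjugation} handles.
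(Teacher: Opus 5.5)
Your architecture matches the paper's: normal form via \cref{lem:pulloutancilla}, padding with \labelcref{eq:a}, then absorbing $U_2^\dagger U_1$ into the $\tf{winit}$'s. The difference is the absorption step. The paper proves it as \cref{lem:ninitidentity}, by induction on the number of ancillas using the generalized cosine--sine decomposition. You replace that with a flag-ancilla gadget, and as written this step does not go through. (A smaller point: with the paper's diagrammatic $\seqop$ you need $C_1'=D\seqop C_2'$, so that the $\tf{winit}$'s meet $D$ first.)

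The problem is the claim ``by completeness of $\catcuni$, $\catcuni\vdash C_1'=C_2'\seqop D$ with $D$ of this form (after adding $a$ via \labelcref{eq:a})''. $\catcuni$ has no ancillas, so its completeness only yields identities in which $a$ is an ordinary input wire. Your gadget $G$ computes the OR of the $m+n$ ancilla wires into $a$, applies $\ctrl(E)$ with control $a$, and uncomputes; it does not implement $I\otimes\intcuni{D}$. When $a$ starts in $\ket{1}$, any basis state with some ancilla set flips $a$ to $\ket{0}$. So $E$ is never applied and $G$ acts as the identity there, whereas $\intcuni{D}$ acts as $W$. Hence $\id{1}\parop D=G$ is false unless $W$ is trivial. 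The equation you actually rely on, $(\tf{winit}\parop\id{m+n+k})\seqop G=\tf{winit}\parop D$, is an equation of $\catcanc$, which completeness of $\catcuni$ does not supply.

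The missing idea is to give completeness of $\catcuni$ an identity that holds for both values of $a$.
\begin{itemize}
\item Take $E\defeq C_1'\seqop(C_2')^\dagger$. Its interpretation fixes $\ket{0^{m+n}}\otimes\C^{2^k}$ pointwise, hence preserves the orthogonal complement.
\item A direct check then shows that $G$ has the same interpretation as the negatively controlled $(\tf{not}\parop\id{m+n+k})\seqop\ctrl(E)\seqop(\tf{not}\parop\id{m+n+k})$: both apply $E$ exactly when $a$ is $\ket{0}$. These two circuits are therefore equal in $\catcuni$.
\item Plug $\tf{winit}$ on $a$ and on the $m+n$ ancilla wires. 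On the negatively controlled side, the control fires by \cref{lem:acqcpoints} and \labelcref{eq:notnot}. This leaves $\tf{winit}$ on $a$ in parallel with $(\tf{winit}^{\parop(m+n)}\parop\id{k})\seqop E$.
\item On the gadget side, your evaluation leaves only the $\tf{winit}$'s: peel off the negative controls of the OR one at a time, cancel the two $\tf{not}$'s on $a$, then kill $\ctrl(E)$ with the false point.
\item Releasing $a$ with \labelcref{eq:a} gives $(\tf{winit}^{\parop(m+n)}\parop\id{k})\seqop E=\tf{winit}^{\parop(m+n)}\parop\id{k}$. This is exactly the instance of \cref{lem:ninitidentity} you need, and you conclude as the paper does.
\end{itemize}
With this repair your route is a genuine simplification: it replaces the GCSD induction by one more appeal to completeness of $\catcuni$ plus the point equations. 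Note, though, that the delicate step is this one. The OR evaluation you singled out as the main obstacle is routine.
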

\begin{proof}
    A similar proof has been conducted in the context of PROP in \cite{extensions}. We proceed exactly the same way, using a variant of the cosine-sine decomposition for which we give an improved version of the proof in \cref{app:gcsd}. The detailed proof is given in \cref{app:acqccompleteness}.
\end{proof}

Interestingly, while the necessity of axioms acting on an unbounded number of qubits (see \cref{sec:qc}) can be circumvented in $\catcuni$, this does not seem to be the case in $\catcanc$ where \labelcref{eq:ca} involves an arbitrary number of qubits. We believe that we cannot directly derive \labelcref{eq:ca} using the other axioms of $\axanc$, but of course, it may be possible to replace it with more local axioms. Nonetheless, as mentioned previously, \labelcref{eq:conjhcalpha,,eq:conjcalphah,,eq:conjhccpi,,eq:conjccpih} are instances of the conjugation law \labelcref{eq:conjugation}. This is also the case of \labelcref{eq:ca}. Then, one can get rid of these axioms -- yielding the axiomatization $\axccanc$ (see \cref{fig:axanc}) containing only eight axioms involving at most two qubits -- by directly considering controllable quantum circuits with ancillas as a freely generated CCPROP.

\begin{corollary}\label{cor:accqccompleteness}
    The restriction $\catccanc$ of $\catccpos\defeq\cat{CCPROP}_{\siganc}/\axccanc$ to isometries, defined by
    \begin{gather*}
        \catccanc(n,m)\defeq\left\{C:C\in\catccpos(n,m)\;\text{s.t.}\;\intpos{C}\in\catisomap{2}(n,m)\right\}
    \end{gather*}
    for any $n,m\in\N$, is complete.
\end{corollary}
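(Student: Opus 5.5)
The plan is to transfer completeness from $\catcanc$ (Theorem~\ref{th:acqccompleteness}) to $\catccanc$. Two facts are needed. First, every equation derivable in $\catcpos$ is derivable in $\catccpos$. Second, the two restrictions to isometries contain the same diagrams. For the second fact, note that both $\catcpos$ and $\catccpos$ are quotients of the same set of controlled terms $\cpropterms{\siganc}$. The restriction condition $\intpos{C}\in\catisomap{2}(n,m)$ only involves the interpretation. So once soundness of $\catccpos$ is known, $\intpos{\cdot}$ is well defined on both quotients, and the two restrictions are described by the same terms.

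Soundness is the first step. The axioms of $\axccanc$ are a subset of $\axanc$, which is sound. The coherence laws hold in $\catlinmap{2}$ with $\ctrl_{\ket{1}}$. It remains to check that the conjugation rule is sound: if $\intpos{t_1\seqop t_3}=\idmat{}$, then
\[
\ctrl_{\ket{1}}(ABC)=(\idmat{2}\parop A)\,\ctrl_{\ket{1}}(B)\,(\idmat{2}\parop C).
\]
This is a direct computation. The $\ketbra{1}{1}$ branches agree trivially. The $\ketbra{0}{0}$ branches agree because $t_1\seqop t_3$ is interpreted as the identity. Since the derivability relation is the least relation closed under the rules, soundness follows by induction on derivations.

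The second step is to show that every axiom of $\axanc\setminus\axccanc$ is derivable in $\catccpos$. These axioms are \labelcref{eq:conjhcalpha,eq:conjcalphah,eq:conjhccpi,eq:conjccpih,eq:ca}. For each one, I would exhibit it as an instance of the conjugation rule of Definition~\ref{def:freeccprop}, possibly after a few coherence-law rewrites (\labelcref{eq:strenght,eq:swapconjugation,eq:controlswap}). The premise $t_1\seqop t_3=\id{m}$ must be derivable, and I would obtain it from \labelcref{eq:hh} for Hadamard conjugations. The $\ctrl(\ctrl(\tf{pi}))$ versions need the same, after moving swaps with \labelcref{eq:controlswap}. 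For \labelcref{eq:ca}, the conjugating morphism is the ancilla preparation $\tf{winit}$ with $h=\tf{wterm}$, and the premise is exactly axiom \labelcref{eq:a}.

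The main obstacle is \labelcref{eq:ca}. It involves an arbitrary number of wires, and the conjugating pair is not invertible but only left-invertible. I would verify carefully that its two sides match $\ctrl(t_1\seqop t_2\seqop t_3)$ and $(\id{1}\parop t_1)\seqop\ctrl(t_2)\seqop(\id{1}\parop t_3)$ up to PROP coherence. This requires taking $t_1=\tf{winit}\parop\id{k}$ and $t_3=\tf{wterm}\parop\id{k}$, which have types $k\to k+1$ and $k+1\to k$, so that $m=k$ and $n=k+1$.

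With both steps in place, the argument closes as follows. A derivation in $\catcpos$ uses only CPROP rules and axioms of $\axanc$. Each such axiom is derivable in $\catccpos$, so every step can be replayed there, by induction on the derivation. Now suppose $C_1,C_2\in\catccanc(n,m)$ satisfy $\intanc{C_1}=\intanc{C_2}$. By the first step they are also diagrams of $\catcanc$. Theorem~\ref{th:acqccompleteness} gives $\catcanc\vdash C_1=C_2$, hence $\catcpos\vdash C_1=C_2$, hence $\catccpos\vdash C_1=C_2$, which is derivability in $\catccanc$.
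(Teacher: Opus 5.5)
Your proposal is correct and is essentially the paper's argument. The dropped axioms \labelcref{eq:conjhcalpha,eq:conjcalphah,eq:conjhccpi,eq:conjccpih,eq:ca} are instances of the conjugation law, so every derivation in $\catcpos$ replays in $\catccpos$ and \cref{th:acqccompleteness} transfers. Your explicit soundness check of the conjugation rule is what makes $\intpos{\cdot}$, and hence the restriction to isometries, well defined on $\catccpos$; the paper leaves this point implicit.

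One small correction concerns the premise $t_1\seqop t_3=\id{m}$. It comes from \labelcref{eq:hh} only when the conjugating morphism is a Hadamard, as in \labelcref{eq:conjhcalpha,eq:conjhccpi}, and from \labelcref{eq:a} for \labelcref{eq:ca}. In \labelcref{eq:conjcalphah,eq:conjccpih} the conjugating morphism is a controlled phase, so the premise follows instead from \labelcref{eq:ctrlseq,eq:p,eq:pplus,eq:p0,eq:ctrlid}.
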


\section{Polycontrolled PROPs}\label{sec:poly}

So far, we focused on PROPs that are equipped with a single control functor. It is natural to consider PROPs that admit multiple coexisting control functors.

\begin{definition}[polycontrolled PROP]
    A \emph{polycontrolled PROP (poly-CPROP)} is a PROP equipped with multiple control functors.
\end{definition}

As illustrated by the following example, considering poly-CPROPs instead of CPROPs can sometimes simplify diagrammatic reasoning. 

\begin{example}\label{ex:qcjustphase}
    While controllable quantum circuits are defined using two generators and one control functor (see \cref{def:cqc}), we can also define a similar graphical language using one generator and two control functor similarly to the language introduced in \cite{qcarejustphase}. If we only have global phases \tf{palpha} for any $\alpha\in\R$, and two control functors $\ctrl_Z$ and $\ctrl_X$, we can still express any unitaries in $\catunimap{2}$ with the interpretation defined as
    \begin{gather*}
        \interp{\tf{palpha}}\defeq e^{\ii\alpha}, \hspace{2em}
        \interp{\ctrl_Z(C)}\defeq \ctrl_{\ket{1}}(\interp{C}), \hspace{2em}
        \interp{\ctrl_X(C)}\defeq \ctrl_{\ket{-}}(\interp{C}).
    \end{gather*}
    Intuitively $\ctrl_Z$ allows implementing Z-rotations while $\ctrl_X$ allows implementing X-rotations. For instance, the Hadamard gate can be implemented by 
    \begin{gather*}
        \tf{minuspiover4}\parop\ctrl_Z(\tf{piover2})\seqop\ctrl_X(\tf{piover2})\seqop\ctrl_Z(\tf{piover2}).
    \end{gather*}
\end{example}

In the following, we review some notable properties that multiple coexisting control functors may satisfy.

\begin{definition}[compatibility]\label{def:compatibility}
    Two control functors $\ctrl_1,\ctrl_2:\cat{P}_\textup{endo}\to\cat{P}_\textup{endo}$ are said to be \emph{compatible} if \labelcref{eq:compatibility} is satisfied for any $f\in\cat{P}(n,n)$.
    \begin{equation}\label{eq:compatibility}
        \begin{array}{ccc}
            (\swap{1,1}\parop \id{n})\seqop\ctrl_1(\ctrl_2 (f))&=&\ctrl_2(\ctrl_1 (f))\seqop(\swap{1,1}\parop \id{n}) \\[0.3em]
            \tf{compatibility_left}&=&\tf{compatibility_right}
        \end{array}
    \end{equation}
\end{definition}

\begin{example}
    In $\catlinmap{2}$, $\ctrl_{\ket{1}}$ is compatible with $\ctrl_{\ket{0}}$ because each control can be obtained by conjugating the control wire of the other control with a $X$ gate that slides through the swaps. However, given a self-inverse linear map $A:1\to1\notin\{\idmat{2},-\idmat{2},Z,-Z\}$ (meaning $A\seqop A=\id{1}$), $\ctrl_{\ket{1}}$ is not compatible with the control functor $\ctrl_A:U\mapsto \idmat{2}\parop A_nUA_n$ where $U$ acts on $n$ qubits and where $A_n$ applies $A$ gates to $n$ qubits in parallel. Intuitively, $\ctrl_A$ is independent of the state of the control qubit and always conjugates the target by a series of $A$ gates.
\end{example}

\begin{definition}[commutativity]\label{def:commutativity}
    Two control functors $\ctrl_1,\ctrl_2:\cat{P}_\textup{endo}\to\cat{P}_\textup{endo}$ are said to \emph{commute} if \labelcref{eq:commutativity} is satisfied for any $f,g\in\cat{P}(n,n)$.
    \begin{equation}\label{eq:commutativity}
        \begin{array}{ccc}
            \ctrl_1(f)\seqop\ctrl_2(g) &=& \ctrl_2(g)\seqop \ctrl_1(f)\\[0.3em]
            \tf{commutativity_left}&=&\tf{commutativity_right}
        \end{array}
    \end{equation}
\end{definition}

\begin{example}
    In $\catlinmap{2}$, $\ctrl_{\ket{1}}$ commutes with $\ctrl_{\ket{0}}$ but not with $\ctrl_{\ket{-}}$.
\end{example}

\begin{definition}[conversion]\label{def:controlconjugacy}
    Given two control functors $\ctrl_1,\ctrl_2:\cat{P}_\textup{endo}\to\cat{P}_\textup{endo}$, we say that a morphism $g\in\cat{P}(1,1)$ \emph{converts} $\ctrl_1$ into $\ctrl_2$ if \labelcref{eq:controlconjugacy} is satisfied for any $f\in\cat{P}(n,n)$.
    \begin{equation}\label{eq:controlconjugacy}
        \begin{array}{ccc}
            (g\parop\id{n})\seqop\ctrl_1(f) &=& \ctrl_2(f)\seqop(g\parop\id{n}) \\[0.3em]
            \tf{controlconjugatedness_left}&=&\tf{controlconjugatedness_right}
        \end{array}
    \end{equation}
\end{definition}

\begin{example}
    In $\catlinmap{2}$, $X$ converts $\ctrl_{\ket{1}}$ into $\ctrl_{\ket{0}}$ (and vice versa). More generally, in $\catlinmap{d}$, $X_{+1}\defeq\sum_k\ketbra{k+1\bmod d}{k}$ converts $\ctrl_{\ket{k}}$ into $\ctrl_{\ket{k+1\bmod d}}$.
\end{example}

\begin{definition}[exhaustivity]\label{def:exhaustivity}
    A family of $\ell$ control functors $\ctrl_k:\cat{P}_\textup{endo}\to\cat{P}_\textup{endo}$ (where $k\in\natset{\ell}$) is said to be \emph{exhaustive} if \labelcref{eq:exhaustivity} is satisfied for any $f\in\cat{P}(n,n)$.
    \begin{equation}\label{eq:exhaustivity}
        \begin{array}{ccc}
            \ctrl_1(f)\seqop \ldots \seqop \ctrl_\ell(f) &=& \id{1}\parop f\\[0.3em]
            \tf{exhaustivity_left}&=&\tf{exhaustivity_right}
        \end{array}
    \end{equation}
\end{definition}

\begin{example}\label{ex:exhaustivity}
    In $\catlinmap{d}$, the family of control functors $(\ctrl_{\ket{k}})_{0\le k<d}$ is exhaustive. Intuitively, each $\ctrl_{\ket{k}}(f)$ fires $f$ if the control qudit is in the state $\ket{k}$, which implies that if we apply all these controlled gates sequentially one after the other, then $f$ is fired exactly once.
\end{example}

\begin{example}\label{ex:onerig}
    The control structure presented in \cite{onerig} is captured as a special case of our formalism. More precisely, the authors consider a PROP $\cat{P}$ containing only endomorphisms (meaning $\cat{P}=\cat{P}_\textup{endo}$) including a specific involution $x:1\to1$ (meaning $x\seqop x=\id{1}$), in which there are two specific functors satisfying eight equations. In our settings, $\cat{P}$ exactly corresponds to a polycontrolled PROP with two commuting exhaustive control functors that are converted one into the other using the morphism $x$.
\end{example}

\section{Controlled SMCs}\label{sec:smc}

So far, we focused on SMCs that are PROPs. Nevertheless, the notion of control functor and its diagrammatic representation extend straightforwardly to arbitrary SMCs. The main difference is that a control functor is specific to a certain object.

\begin{definition}[control structure]\label{def:controlfunctorsmc}
    Given an SMC $\cat{C}$, a \emph{control structure} is a family of control functors $\ctrl^{A}:\cat{C}_\textup{endo}\to\cat{C}_\textup{endo}$ parametrized by objects $A\in\obj(\cat{C})$, mapping any object $B\in\obj(\cat{C})$ to the object $A\parop B$ and any morphism $f:B\to B$ to the morphism $\ctrl^A (f): A\parop B\to A\parop B$ while satisfying the additional coherence laws \labelcref{eq:controlidsmc,eq:controltensorsmc,eq:strenghtsmc,eq:controlswapsmc,eq:swapconjugationsmc}.
\end{definition}

\begin{figure}[t!]
    \begin{figurebox}\begin{center}
        \vspace{-0.5em}
        \hspace{-1em}\begin{subfigure}{5.3cm}
            \begin{equation}\label{eq:ctrlidsmc}
                \begin{array}{ccc}
                    \ctrl^A(\id{B})&=&\id{A\parop B} \\[0.3em]
                    \tf{ctrlid_smc_left}&=&\tf{ctrlid_smc_right}
                \end{array}
            \end{equation}
        \end{subfigure}\hspace{2em}
        \begin{subfigure}{7.4cm}
            \begin{equation}\label{eq:ctrlseqsmc}
                \begin{array}{ccc}
                    \ctrl^A(f\seqop g)&=&\ctrl^A(f)\seqop\ctrl^A(g) \\[0.3em]
                    \tf{ctrlseq_smc_left}&=&\tf{ctrlseq_smc_right}
                \end{array}
            \end{equation}
        \end{subfigure}

        \vspace{1em}
        \hspace{-1em}\begin{subfigure}{4.9cm}
            \begin{equation}\label{eq:controlidsmc}
                \begin{array}{ccc}
                    \ctrl^I(f)&=&f \\[0.3em]
                    \tf{controlid_smc_left}&=&\tf{controlid_smc_right}
                \end{array}
            \end{equation}
        \end{subfigure}\hspace{1em}
        \begin{subfigure}{6.8cm}
            \begin{equation}\label{eq:controltensorsmc}
                \begin{array}{ccc}
                    \ctrl^{A\parop B}(f)&=&\ctrl^A(\ctrl^B (f)) \\[0.3em]
                    \tf{controltensor_smc_left}&=&\tf{controltensor_smc_right}
                \end{array}
            \end{equation}
        \end{subfigure}

        \vspace{1em}
        \hspace{-1em}\begin{subfigure}{7.2cm}
            \begin{equation}\label{eq:strenghtsmc}
                \begin{array}{ccc}
                    \ctrl^A(f\parop \id{B})&=&\ctrl^A(f)\parop \id{B} \\[0.3em]
                    \tf{strenght_smc_left}&=&\tf{strenght_smc_right}
                \end{array}
            \end{equation}
        \end{subfigure}

        \vspace{1em}
        \hspace{-1em}\begin{subfigure}{11.1cm}
            \begin{equation}\label{eq:controlswapsmc}
                \begin{array}{ccc}
                    (\swap{B,A}\parop \id{C})\seqop\ctrl^{A\parop B}(f)&=&\ctrl^{B\parop A}(f)\seqop(\swap{B,A}\parop \id{C}) \\[0.3em]
                    \tf{ctrlswap_smc_left}&=&\tf{ctrlswap_smc_right}
                \end{array}
            \end{equation}
        \end{subfigure}

        \vspace{1em}
        \hspace{-1em}\begin{subfigure}{13.1cm}
            \begin{equation}\label{eq:swapconjugationsmc}
                \begin{array}{ccc}
                    \ctrl^A(\tau_{B,C,D,E}\seqop f\seqop\tau_{B,D,C,E}) &=& \tau_{A\parop B,C,D,E}\seqop \ctrl^A(f)\seqop\tau_{A\parop B,D,C,E} \\[0.3em]
                    \tf{swapconjugation_smc_right}&=&\tf{swapconjugation_smc_left}
                \end{array}
            \end{equation}
        \end{subfigure}
    \end{center}\end{figurebox}
    \caption{Coherence laws of control functors where $\tau_{A,B,C,D}\defeq(\id{A}\parop \swap{B,C}\parop \id{D})$. \label{fig:coherencelawscsmc}}
\end{figure}

Notice that \labelcref{eq:ctrlidsmc,eq:ctrlseqsmc} hold by definition of the control as a functor; \labelcref{eq:controlidsmc,eq:controltensorsmc} guarantee that the control functor preserves the monoidal structure on objects; and \labelcref{eq:strenghtsmc,,eq:controlswapsmc,,eq:swapconjugationsmc} respectively correspond to \labelcref{eq:strenght,,eq:controlswap,,eq:swapconjugation}. Moreover, instantiating this generalized definition in the context of PROPs, we recover \cref{def:controlfunctor}.

The notion of points can also be extended to arbitrary SMCs. Moreover, if $\ctrl^A$ and $\ctrl^B$ have points (respectively denoted $\tf{winit}_A,\tf{binit}_A$ and $\tf{winit}_B,\tf{binit}_B$), then $\tf{winit}_A\parop\tf{ginit}_B$ and $\tf{binit}_A\parop\tf{binit}_B$ are points for $\ctrl^{A\parop B}$ for any $\tf{ginit}_B:I\to B$, as well as $\tf{ginit}_A\parop\tf{winit}_B$ and $\tf{binit}_A\parop\tf{binit}_B$ for any $\tf{ginit}_A:I\to A$.

\section*{Conclusion}\label{sec:conc}

In this paper, we introduced controlled PROPs as an extension of the standard PROP formalism enabling diagrammatic reasoning with control as a constructor. Beyond quantum computing, we expect controlled PROPs to be useful in other domains where conditional behavior plays a central role, such as classical reversible computing. Note that the notion of control functor can be straightforwardly generalized to arbitrary symmetric monoidal categories --~as explained in \cref{sec:smc}~-- and also to weaker structures like braided monoidal categories. 

We considered, in this paper, control as a constructor, independently of its actual physical implementation. We leave a full development of this question for future investigation; however, we mention a no-go theorem concerning the possibility of controlling unknown operations using the standard quantum circuit formalism \cite{araujo}. As pointed out by the authors of this result, the no-go theorem does not apply in practice, as any physical implementation of an unknown unitary provides additional information that makes control possible. Moreover, there is evidence that such control can be implemented in practice \cite{friis,zhou}. The authors of the no-go theorem even argue that the quantum circuit formalism should be extended to capture the possibility of controlling unknown unitaries, which is a feature of the graphical language of controllable quantum circuits we introduced (see \cref{sec:cqc}). Furthermore, in the present paper, this no-go theorem does not apply, as we work in a white-box setting, i.e. we assume that the implementation of the unitary as a circuit is known.

Finally, another direction for future research is to consider controllable quantum circuits for quantum circuit optimization tasks, since they provide simpler rewriting rules than those of vanilla quantum circuits.

%% file: appendices.tex

\section{Proof of the equations of \cref{fig:cqcidentities}}\label[appendix]{app:cqcidentities}

\begin{proof}[Proof of \cref{eq:p0} in $\catcuni$]
    \begin{gather*}
        \tf{0_01}
        \eqeqref{eq:p}\tf{0_02}
        \eqeqref{eq:pplus}\tf{0_03}
        \eqeqref{eq:p}\tf{0_04}
    \end{gather*}
\end{proof}

\begin{proof}[Proof of \cref{eq:cph} in $\catcuni$]
    \begin{gather*}
        \tf{heuler_01}
        \eqeqref{eq:hh,eq:p0}\tf{heuler_02}
        \eqeqref{eq:ce}\tf{heuler_03}
    \end{gather*}
\end{proof}

\begin{proof}[Proof of \cref{eq:calphaminus} in $\catcuni$]
    \begin{gather*}
        \tf{xcalpha_01}
        \eqeqref{eq:hh}\tf{xcalpha_02} \\[0.5em]
        \eqeqref{eq:ce}\tf{xcalpha_03} \\[0.5em]
        \eqeqref{eq:p,eq:pplus}\tf{xcalpha_04} \\[0.5em]
        \eqeqref{eq:pplus}\tf{xcalpha_05}
        \eqeqref{eq:cph}\tf{xcalpha_06}
    \end{gather*}
\end{proof}

\begin{proof}[Proof of \cref{eq:cphvar} in $\catcuni$]
    \begin{gather*}
        \tf{heulervar_01}
        \eqeqref{eq:cph}\tf{heulervar_02}
        \eqeqref{eq:pplus}\tf{heulervar_03} \\[0.5em]
        \eqeqref{eq:hh}\tf{heulervar_04}
        \eqeqref{eq:calphaminus}\tf{heulervar_05}
    \end{gather*}
\end{proof}

\begin{proof}[Proof of \cref{eq:chreduce} in $\catcuni$]
    \begin{gather*}
        \tf{chaltdef_01}
        \eqeqref{eq:cphvar}\tf{chaltdef_02} \\[0.5em]
        \eqeqref{eq:pplus}\tf{chaltdef_03} \\[0.5em]
        \eqeqref{eq:conjcalphah}\tf{chaltdef_04} \\[0.5em]
        \eqeqref{eq:pplus,eq:p0}\tf{chaltdef_05}
        \eqeqref{eq:conjhcalpha}\tf{chaltdef_06} \\[0.5em]
        \eqeqref{eq:conjhcalpha}\tf{chaltdef_07}
        \eqeqref{eq:pplus}\tf{chaltdef_08} \\[0.5em]
        \eqeqref{eq:conjcalphah}\tf{chaltdef_09} \\[0.5em]
        \eqeqref{eq:pplus,eq:p0}\tf{chaltdef_10}
        \eqeqref{eq:conjhcalpha}\tf{chaltdef_11}
    \end{gather*}
\end{proof}

\begin{proof}[Proof of \cref{eq:conjhh} in $\catcuni$]
    \begin{gather*}
        \tf{conjhh_01}
        \eqeqref{eq:hh}\tf{conjhh_02} 
        \eqeqref{eq:chreduce}\tf{conjhh_03} \\[0.5em]
        \eqeqref{eq:p,eq:pplus}\tf{conjhh_04} \\[0.5em]
        \eqeqref{eq:hh,eq:p0,eq:pplus}\tf{conjhh_05} \\[0.5em]
        \eqeqref{eq:cphvar}\tf{conjhh_06} \\[0.5em]
        \eqeqref{eq:hh,eq:p0,eq:pplus}\tf{conjhh_07} \\[0.5em]
        \eqeqref{eq:cphvar}\tf{conjhh_08} 
        \eqeqref{eq:chreduce}\tf{conjhh_09}
    \end{gather*}
\end{proof}

\begin{proof}[Proof of \cref{eq:ccalphareduce} in $\catcuni$]
    \begin{gather*}
        \tf{ccalphadef_01}
        \eqeqref{eq:pplus}\tf{ccalphadef_02}
        \eqeqref{eq:hh,eq:p,eq:pplus}\tf{ccalphadef_03} \\[0.5em]
        \eqeqref{eq:calphaminus}\tf{ccalphadef_04}
        \eqeqref{eq:ctrlseq}\tf{ccalphadef_05} \\[0.5em]
        \eqeqref{eq:ctrlseq}\tf{ccalphadef_06} \\[0.5em]
        \eqeqref{eq:pplus}\tf{ccalphadef_07} \\[0.5em]
        \eqeqref{eq:conjcalphah}\tf{ccalphadef_08} \\[0.5em]
        \eqeqref{eq:pplus,eq:p0}\tf{ccalphadef_09}
        \eqeqref{eq:conjhcalpha}\tf{ccalphadef_10}
    \end{gather*}
\end{proof}

\begin{proof}[Proof of \cref{eq:conjhccpivar} in $\catcuni$]
    \begin{gather*}
        \tf{conjhccpi_04}
        \eqeqref{eq:controlswap}\tf{conjhccpi_03}
        \eqeqref{eq:conjhccpi}\tf{conjhccpi_02}
        \eqeqref{eq:controlswap}\tf{conjhccpi_01}
    \end{gather*}
\end{proof}

\begin{proof}[Proof of \cref{eq:conjccpihvar} in $\catcuni$]
    \begin{gather*}
        \tf{conjccpih_04}
        \eqeqref{eq:controlswap}\tf{conjccpih_03}
        \eqeqref{eq:conjccpih}\tf{conjccpih_02}
        \eqeqref{eq:controlswap}\tf{conjccpih_01}
    \end{gather*}
\end{proof}

\begin{proof}[Proof of \cref{eq:cccpireduce} in $\catcuni$]
    \begin{gather*}
        \tf{cccpidef_01}
        \eqeqref{}\tf{cccpidef_02}
        \eqeqref{eq:ccalphareduce}\tf{cccpidef_03} \\[0.5em]
        \eqeqref{}\tf{cccpidef_04}
        \eqeqref{eq:conjhccpivar}\tf{cccpidef_05} \\[0.5em]
        \eqeqref{eq:conjhcalpha}\tf{cccpidef_06} \\[0.5em]
        \eqeqref{eq:p,eq:pplus}\tf{cccpidef_07} \\[0.5em]
        \eqeqref{eq:conjccpihvar}\tf{cccpidef_08} \\[0.5em]
        \eqeqref{eq:pplus,eq:p}\tf{cccpidef_09}
        \eqeqref{eq:conjhcalpha}\tf{cccpidef_10}
    \end{gather*}
\end{proof}

\section{Proof of the equations of \cref{fig:acqcidentities}}\label[appendix]{app:acqcidentities}

\begin{proof}[Proof of \cref{eq:notnot} in $\catcanc$]
    \begin{gather*}
        \tf{notnot_01}
        =\tf{notnot_02}
        \eqeqref{eq:hh,eq:pplus,eq:p}\tf{notnot_03}
    \end{gather*}
\end{proof}

\begin{proof}[Proof of \cref{eq:binitcalpha} in $\catcanc$]
    \begin{gather*}
        \tf{binitcalpha_01}
        =\tf{binitcalpha_03}
        \eqeqref{eq:calphaminus}\tf{binitcalpha_04}
        \eqeqref{eq:ic}\tf{binitcalpha_05}
        =\tf{binitcalpha_07}
    \end{gather*}
\end{proof}

\begin{proof}[Proof of \cref{eq:binitccpi} in $\catcanc$]
    \begin{gather*}
        \tf{binitccpi_01}
        =\tf{binitccpi_02}
        \eqeqref{eq:hh,eq:p,eq:pplus,eq:notnot}\tf{binitccpi_03} \\[0.5em]
        =\tf{binitccpi_04} \\[0.5em]
        =\tf{binitccpi_05} \\[0.5em]
        \eqeqref{eq:ccalphareduce}\tf{binitccpi_06}
        \eqeqref{eq:p}\tf{binitccpi_07}
        \eqeqref{eq:hh}\tf{binitccpi_08}
        \eqeqref{eq:icc}\tf{binitccpi_09}
        =\tf{binitccpi_10}
    \end{gather*}
\end{proof}

\begin{proof}[Proof of \cref{eq:winitch} in $\catcanc$]
    \begin{gather*}
        \tf{winitch_01}
        \eqeqref{eq:chreduce}\tf{winitch_02}
        \eqeqref{eq:icc}\tf{winitch_03} \\[0.5em]
        \eqeqref{eq:hh,eq:pplus,eq:p0}\tf{winitch_04}
    \end{gather*}
\end{proof}

\begin{proof}[Proof of \cref{eq:binitch} in $\catcanc$]
    \begin{gather*}
        \tf{binitch_01}
        \eqeqref{eq:chreduce}\tf{binitch_02}
        \eqeqref{eq:binitccpi}\tf{binitch_03} \\[0.5em]
        \eqeqref{eq:cphvar}\tf{binitch_04}
    \end{gather*}
\end{proof}

\begin{proof}[Proof of \cref{eq:winitccalpha} in $\catcanc$]
    \begin{gather*}
        \tf{winitccalpha_01}
        \eqeqref{eq:ccalphareduce}\tf{winitccalpha_02}
        \eqeqref{eq:ic,eq:icc}\tf{winitccalpha_03} \\[0.5em]
        \eqeqref{eq:hh,eq:pplus,eq:p0}\tf{winitccalpha_04}
    \end{gather*}
\end{proof}

\begin{proof}[Proof of \cref{eq:binitccalpha} in $\catcanc$]
    \begin{gather*}
        \tf{binitccalpha_01}
        \eqeqref{eq:ccalphareduce}\tf{binitccalpha_02} \\[0.5em]
        \eqeqref{eq:binitccpi}\tf{binitccalpha_03}
        \eqeqref{eq:calphaminus}\tf{binitccalpha_04} \\[0.5em]
        \eqeqref{eq:pplus,eq:p0}\tf{binitccalpha_05}
        \eqeqref{eq:hh,eq:pplus,eq:p}\tf{binitccalpha_06}
    \end{gather*}
\end{proof}

\begin{proof}[Proof of \cref{eq:winitcccpi} in $\catcanc$]
    \begin{gather*}
        \tf{winitcccpi_01}
        \eqeqref{eq:cccpireduce}\tf{winitcccpi_02}
        \eqeqref{eq:winitccalpha}\tf{winitcccpi_03} \\[0.5em]
        \eqeqref{eq:hh,eq:pplus,eq:p}\tf{winitcccpi_04}
    \end{gather*}
\end{proof}

\begin{proof}[Proof of \cref{eq:binitcccpi} in $\catcanc$]
    \begin{gather*}
        \tf{binitcccpi_01}
        \eqeqref{eq:cccpireduce}\tf{binitcccpi_02}
        \eqeqref{eq:binitccalpha}\tf{binitcccpi_03} \\[0.5em]
        \eqeqref{eq:ccalphareduce}\tf{binitcccpi_04}
    \end{gather*}
\end{proof}

\section{Generalized cosine-sine decomposition}\label[appendix]{app:gcsd}

We provide a self-contained proof of the \emph{generalized cosine-sine decomposition (GCSD)} (introduced for a similar purpose in \cite{extensions}), which is a generalization of the \emph{cosine-sine decomposition (CSD)} \cite{csdhistory}. All considered matrices are complex, and we write $A:n\times m$ when the matrix $A$ has $n$ rows and $m$ columns. 

First, recall some usual decompositions of an arbitrary square matrix $A:n\times n$. The \emph{RQ-decomposition (RQD)} states that there exist a unitary matrix $Q:n\times n$ and an upper-triangular matrix $R:n\times n$ with real non-negative diagonal elements such that $A=RQ$. The \emph{QL-decomposition (QLD)} states that there exist a unitary matrix $Q:n\times n$ and a lower-triangular matrix $L:n\times n$ with real non-negative diagonal elements such that $A=QL$. The \emph{singular value decomposition (SVD)} states that there exist two unitary matrices $U,V:n\times n$ and a real diagonal matrix $D=\diag(d_1,d_2,\dots,d_n)$ satisfying $d_1\ge d_2\ge\dots\ge d_n\ge 0$ and such that $A=UDV$ \cite{matrixalgebra}.

\begin{lemma}\label{lem:triangulardiagonal}
    If $L:n\times n$ is a lower-triangular matrix and $L^\dagger L$ is diagonal with non-zero diagonal elements, then $L$ is diagonal. If $R:n\times n$ is an upper-triangular matrix and $RR^\dagger$ is diagonal with non-zero diagonal elements, then $R$ is diagonal.
\end{lemma}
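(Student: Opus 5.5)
We prove the statement about $L$ by induction on $n$, and then deduce the statement about $R$ by a reversal/transposition trick.

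\textbf{Induction for $L$.} For $n=1$ there is nothing to prove. For $n\ge 2$, write $L$ in block form
\begin{gather*}
L=\begin{pmatrix} L' & 0\\ v^\dagger & c\end{pmatrix},
\end{gather*}
where $L':(n-1)\times(n-1)$ is lower-triangular, $v$ is a column vector of length $n-1$, and $c$ is a scalar. A direct block computation gives
\begin{gather*}
L^\dagger L=\begin{pmatrix} L'^\dagger L'+vv^\dagger & \bar c\, v\\ c\, v^\dagger & |c|^2\end{pmatrix}.
\end{gather*}
Since $L^\dagger L$ is diagonal, the off-diagonal block gives $\bar c\,v=0$. Since its last diagonal entry is non-zero, $|c|^2\neq 0$, so $c\ne0$ and hence $v=0$. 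The top-left block then becomes $L'^\dagger L'$. It is diagonal with non-zero diagonal elements, because it is a principal block of $L^\dagger L$. By the induction hypothesis $L'$ is diagonal, so $L$ is diagonal.

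The key point is to peel off the \emph{last} row. For a lower-triangular matrix, the last column has a single non-zero entry $c$, and that is exactly what makes the off-diagonal block of $L^\dagger L$ equal to $\bar c v$. Peeling off the first row would not give such a clean block.

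\textbf{Deduction for $R$.} Suppose $R$ is upper-triangular and $RR^\dagger$ is diagonal with non-zero diagonal elements. Set $L\defeq R^\dagger$, which is lower-triangular. Then $L^\dagger L=RR^\dagger$ is diagonal with non-zero diagonal elements. By the first part, $L$ is diagonal, and therefore $R=L^\dagger$ is diagonal.

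The only step needing care is the block bookkeeping of the off-diagonal term. It is routine, and the non-vanishing of the diagonal is used exactly once, to conclude $c\neq0$.
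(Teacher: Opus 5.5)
Your proof is correct and is essentially the paper's argument: induction on $n$ by peeling off the last row and column, using the last column of $L^\dagger L$ (diagonal entry $|c|^2\neq 0$, off-diagonal entries proportional to $c$) to force $v=0$ and then applying the induction hypothesis to the principal block, with the $R$ case reduced to the lower-triangular matrix $R^\dagger$. One cosmetic slip: the off-diagonal blocks of $L^\dagger L$ are $c\,v$ (top right) and $\bar c\,v^\dagger$ (bottom left), not $\bar c\,v$ and $c\,v^\dagger$; this changes nothing, since $c\neq 0$ if and only if $\bar c\neq 0$.
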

\begin{proof}
    We prove the statement for a lower-triangular matrix $L:n\times n$ by induction on $n$ (the statement for an upper-triangular matrix $R$ follows because $R^\dagger$ is a lower-triangular matrix such that $(R^\dagger)^\dagger R^\dagger=RR^\dagger$ is diagonal with non-zero elements). The base case $n=1$ is trivial. For the induction case where $L:(n+1)\times(n+1)$, let $D\defeq L^\dagger L$. Notice that $D_{n+1,n+1}=|L_{n+1,n+1}|^2\ne 0$ and $D_{i,n+1}=\overline{L_{n+1,i}}L_{n+1,n+1}=0$ for any $i\in\natset{n}$, which implies $L_{n+1,n+1}\ne 0$ and $L_{n+1,i}=0$. Hence, all non-diagonal elements of the last row and last column of $L$ are zero, and we can use the IH on the submatrix obtained by deleting the last row and the last column of $L$.
\end{proof}

\begin{lemma}[\cite{csdhistory}]\label{lem:csd}
    Given a unitary matrix $U:2n\times 2n$, the \emph{cosine-sine decomposition (CSD)} states that there exist unitary matrices $A_0,A_1,B_0,B_1:n\times n$, and real diagonal matrices $D_c,D_s:n\times n$ with non-negative elements such that $D_c^2+D_s^2=\idmat{n}$ and 
    \begin{gather*}
        U=\left[\begin{array}{c|c}
            A_0 & 0 \\
            \hline
            0 & A_1
        \end{array}\right]\cdot\left[\begin{array}{c|c}
            D_c & -D_s\\
            \hline
            D_s & D_c
        \end{array}\right]\cdot\left[\begin{array}{c|c}
            B_0 & 0 \\
            \hline
            0 & B_1
        \end{array}\right].
    \end{gather*}
\end{lemma}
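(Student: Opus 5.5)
We would derive the CSD from the singular value decomposition applied to the diagonal blocks, using \cref{lem:triangulardiagonal} to force the off-diagonal blocks into diagonal form. Write $U$ in $n\times n$ blocks as $U=\left[\begin{array}{c|c} U_{00} & U_{01}\\ \hline U_{10} & U_{11}\end{array}\right]$. First, take an SVD of the top-left block, $U_{00}=A_0 D_c B_0$ with $A_0,B_0$ unitary and $D_c$ real diagonal non-negative. Since the first block-column $\left[\begin{smallmatrix}U_{00}\\U_{10}\end{smallmatrix}\right]$ is an isometry, $U_{00}^\dagger U_{00}+U_{10}^\dagger U_{10}=\idmat{n}$, so every singular value lies in $[0,1]$.

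Next, set $W\defeq U_{10}B_0^\dagger$. Then $W^\dagger W=\idmat{n}-D_c^2$ is diagonal. To turn $W$ into $A_1 D_s$ with $D_s=\sqrt{\idmat{n}-D_c^2}$, we would apply a QL-decomposition $W=QL$, so that $L^\dagger L=\idmat{n}-D_c^2$. On the coordinates where $d_i<1$, this product has non-zero diagonal entries, and \cref{lem:triangulardiagonal} makes $L$ diagonal there. Non-negativity of the diagonal of $L$ then gives $L=D_s$ on that part.

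The expected obstacle is the coordinates where $d_i=1$. There the corresponding columns of $W$ vanish, so the lemma's hypothesis fails. To handle this, order the singular values so that those equal to $1$ come last, which makes them the last columns. A lower-triangular $L$ whose last columns are zero, and whose remaining part has a diagonal Gram matrix, can then be treated blockwise: restrict \cref{lem:triangulardiagonal} to the leading principal block, and show the lower-left block is zero via the Gram identity. If this bookkeeping becomes messy, we would instead build $A_1$ directly. Its columns would be $W e_i/s_i$ for $s_i>0$, which are orthonormal because $W^\dagger W$ is diagonal, completed arbitrarily to a unitary. This gives $U_{10}=A_1D_sB_0$.

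Finally, the second block-column has to be fixed. Consider $V\defeq \mathrm{diag}(A_0,A_1)^\dagger\, U\,\mathrm{diag}(B_0,\idmat{n})^\dagger$. Its first block-column is $\left[\begin{smallmatrix}D_c\\D_s\end{smallmatrix}\right]$ and it is unitary, so its second block-column $\left[\begin{smallmatrix}X\\Y\end{smallmatrix}\right]$ is orthogonal to the first. This gives $D_cX+D_sY=0$, together with $X^\dagger X+Y^\dagger Y=\idmat{n}$ and the row-normalisation relations. From these we would show $X=-D_sB_1$ and $Y=D_cB_1$ for a single unitary $B_1$. We would define $B_1$ row by row: row $i$ is $-X_i/s_i$ when $s_i\neq0$ and $Y_i/c_i$ otherwise. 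Row orthogonality of $V$ then forces consistency and unitarity of $B_1$, via an RQ-type argument on the mixed block. Substituting back yields the stated factorisation.
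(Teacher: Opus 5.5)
Your proof goes through, but only via the fallback in your second step; the primary QL route, as you set it up, fails. Placing the singular values equal to $1$ last makes the zero columns of $W$, and hence of $L=Q^\dagger W$, the last ones. Lower-triangularity then puts no constraint on the lower-left block, and neither does the Gram identity. For instance, take $0<s\le 1$ and $W=L=\frac{s}{\sqrt{2}}\left[\begin{smallmatrix}1&0\\1&0\end{smallmatrix}\right]$ with $Q=\idmat{2}$. This $W$ arises from a unitary with $U_{00}=\mathrm{diag}(\sqrt{1-s^2},1)$ and $B_0=\idmat{2}$, and $W=QL$ is a legitimate QL-decomposition. Here $L^\dagger L=\mathrm{diag}(s^2,0)$ is diagonal, yet $L$ is not. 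Your ordering is exactly backwards. Keeping the SVD's decreasing order puts the zero columns first; lower-triangularity then forces $L=0\oplus L_{22}$, and \cref{lem:triangulardiagonal} applies to $L_{22}$ directly. Your fallback is correct and should be the main line: take columns $We_i/s_i$ for $s_i>0$, which are orthonormal because $W^\dagger W=D_s^2$, and complete them to a unitary $A_1$. In the last step no RQ-type argument is needed. Row $i$ of $D_cX+D_sY=0$ reads $c_iX_i+s_iY_i=0$, which is precisely the consistency of your row-wise definition of $B_1$ (it also gives $X_i=0$ when $s_i=0$). Then $X^\dagger X+Y^\dagger Y=B_1^\dagger(D_s^2+D_c^2)B_1=B_1^\dagger B_1=\idmat{n}$ gives unitarity.

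Beyond that, your route genuinely differs from the paper's. The paper does not prove \cref{lem:csd} itself; it cites \cite{csdhistory}. However, its proof of \cref{lem:gcsd} at $k=0$ is a proof of this statement, and it proceeds as follows:
\begin{itemize}
\item it takes the SVD of $U_{00}$;
\item it takes an RQ-decomposition of $A_0^\dagger U_{01}$ \emph{and} a QL-decomposition of $U_{10}B_0^\dagger$;
\item it uses \cref{lem:triangulardiagonal} to make both off-diagonal blocks equal to $D_s$ on the part where $d_i<1$;
\item it uses unitarity to force the lower-right block into the form $V_{00}\oplus(-D_c')$;
\item it sets $B_1=(V_{00}\oplus(-\idmat{n-\ell}))Q$.
\end{itemize}
You never factor the top-right block. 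Once the first block-column is $\left[\begin{smallmatrix}D_c\\D_s\end{smallmatrix}\right]$, you read $B_1$ off the whole second block-column at once. Your version is shorter and more elementary: it needs only the SVD plus completion of an orthonormal family, without \cref{lem:triangulardiagonal} or a separate analysis of $U_{11}$. The paper's symmetric triangular-factor formulation is chosen so that the same computation handles the padded case of \cref{lem:gcsd}. There, the zeros coming from $\idmat{k}$ and the unit singular values are absorbed uniformly by $R$ and $L$.
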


\begin{lemma}\label{lem:gcsd}
    Given a unitary matrix $U:(2n-k)\times (2n-k)$ where $n\ge k\ge0$, the \emph{generalized cosine-sine decomposition (GCSD)} states that there exist unitary matrices $A_0,B_0:(n-k)\times (n-k)$, $A_1,B_1:n\times n$ and real diagonal matrices $D_c,D_s:(n-k)\times (n-k)$ with non-negative elements such that $D_c^2+D_s^2=\idmat{n-k}$ and
    \begin{gather*}
        \left[\begin{array}{c|c}
            \idmat{k} & 0 \\
            \hline
            0 & U
        \end{array}\right]=\left[\begin{array}{c|c|c}
            \idmat{k} & 0 & 0 \\
            \hline
            0 & A_0 & 0 \\
            \hline
            0 & 0 & A_1
        \end{array}\right]\cdot\left[\begin{array}{c|c|c|c}
            \idmat{k} & 0 & 0 & 0 \\
            \hline
            0 & D_c & 0 & -D_s\\
            \hline
            0 & 0 & \idmat{k} & 0\\
            \hline
            0 & D_s & 0 & D_c
        \end{array}\right]\cdot\left[\begin{array}{c|c|c}
            \idmat{k} & 0 & 0 \\
            \hline
            0 & B_0 & 0 \\
            \hline
            0 & 0 & B_1
        \end{array}\right].
    \end{gather*}
\end{lemma}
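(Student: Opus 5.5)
The plan is to build the decomposition directly from the block structure of $U$, using the SVD together with the QL- and RQ-decompositions, and \cref{lem:triangulardiagonal} to force the triangular factors to be diagonal. The block sizes follow the middle factor of the statement. Forgetting the leading $\idmat{k}$, it acts on $(n-k)+n$ coordinates, and its lower $n$ coordinates split as $k+(n-k)$. So we partition
\[
U=\left[\begin{array}{c|c} U_{00} & U_{01}\\ \hline U_{10} & U_{11}\end{array}\right],
\]
where $U_{00}$ is $(n-k)\times(n-k)$, $U_{01}$ is $(n-k)\times n$, $U_{10}$ is $n\times(n-k)$ and $U_{11}$ is $n\times n$. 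The goal is to find unitaries $A_0,B_0,A_1,B_1$ such that
\[
A_0^\dagger U_{00}B_0^\dagger=D_c,\quad
A_1^\dagger U_{10}B_0^\dagger=\left[\begin{array}{c}0\\ \hline D_s\end{array}\right],\quad
A_0^\dagger U_{01}B_1^\dagger=\left[\begin{array}{c|c}0 & -D_s\end{array}\right],\quad
A_1^\dagger U_{11}B_1^\dagger=\left[\begin{array}{c|c}\idmat{k} & 0\\ \hline 0 & D_c\end{array}\right].
\]

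\textbf{Step 1 (SVD of the corner).} Take an SVD $U_{00}=A_0D_cB_0$ with decreasing diagonal. Since $U_{00}$ is a block of a unitary, its singular values lie in $[0,1]$. Define $D_s\defeq(\idmat{n-k}-D_c^2)^{1/2}$.

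\textbf{Step 2 (QL on the lower-left block).} Apply a full QL-decomposition to the tall matrix $U_{10}B_0^\dagger$. This gives a unitary $A_1$ with
\[
U_{10}B_0^\dagger=A_1\left[\begin{array}{c}0\\ \hline L\end{array}\right],
\]
where $L$ is lower triangular with non-negative diagonal. Orthonormality of the first $n-k$ columns of $U$ gives $L^\dagger L=\idmat{n-k}-D_c^2=D_s^2$. When $D_s$ has no zero entries, \cref{lem:triangulardiagonal} makes $L$ diagonal. Non-negativity of its diagonal then forces $L=D_s$.

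\textbf{Step 3 (RQ on the upper-right block).} Symmetrically, apply an RQ-decomposition to $A_0^\dagger U_{01}$, using orthonormality of the first $n-k$ rows of $U$. This yields $B_1$ with $A_0^\dagger U_{01}B_1^\dagger=[\,0\mid -D_s\,]$; the sign is absorbed by replacing $B_1$ with a sign-adjusted unitary.

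\textbf{Step 4 (remaining block).} The block $W\defeq A_1^\dagger U_{11}B_1^\dagger$ is then determined by unitarity of the conjugated matrix. Orthogonality between the columns (resp. rows) already fixed and those of $W$ forces the last $n-k$ rows and columns of $W$ to be $[\,0\mid D_c\,]$, once $D_s$ is invertible. The leftover top-left $k\times k$ block $W'$ is then unitary. We absorb it by replacing $A_1$ with $A_1\,\diag(W',\idmat{n-k})$, which does not disturb the other three blocks.

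\textbf{Main obstacle: zero entries of $D_s$.} These are exactly the singular values of $U_{00}$ equal to $1$, where \cref{lem:triangulardiagonal} does not apply directly. Because $D_c$ is ordered decreasingly, these indices come first. For such an index $i$, the $i$-th column of $U_{00}B_0^\dagger$ has norm $1$, so the $i$-th column of $U_{10}B_0^\dagger$ vanishes, and the same holds for the corresponding row of $A_0^\dagger U_{01}$. I would therefore split off these coordinates, on which the whole matrix is already $\idmat{}$ up to the choice of $A_0,B_0$. Then I would run Steps 2--4 on the complementary block, where $D_s$ is invertible, and enlarge the resulting unitaries by identities. The bookkeeping needed to re-embed these identity coordinates in the block positions of the statement is the delicate part. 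A fallback would be to start instead from \cref{lem:csd} applied to $\diag(\idmat{k},U)$ and repair the leading identity block, but I expect the direct route to be cleaner.
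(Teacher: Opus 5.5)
Your proposal is correct and follows essentially the paper's proof: an SVD of $U_{00}$, a QL-decomposition of the lower-left block and an RQ-decomposition of the upper-right block, \cref{lem:triangulardiagonal} on the part where the singular values are $<1$, and unitarity to fix the remaining block, with only cosmetic differences (you absorb the sign into $B_1$ and the leftover $k\times k$ unitary into $A_1$, whereas the paper absorbs $\diag(V_{00},-\idmat{n-\ell})$ into $B_1$). The bookkeeping you flag as delicate is immediate: if $p$ is the number of unit singular values, the complementary block is the invertible case with $k$ replaced by $k+p$, and padding $D_c$ with $p$ ones and $D_s$ with $p$ zeros places the split-off coordinates exactly where the statement needs them; this is precisely the paper's final padding step, where the paper's $\ell$ equals $k+p$.
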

\begin{proof}
    Let $\hat{U}\defeq \idmat{k}\oplus U$. We can split $U$ into four blocks $U_{00}:(n-k)\times (n-k)$, $U_{01}:(n-k)\times n$, $U_{10}:n\times (n-k)$ and $U_{11}:n\times n$. Notice that the unitarity of $U$ implies the unitarity of $\hat{U}$.
    \begin{gather*}
        \hat{U}=\left[\begin{array}{c|c|c}
            \idmat{k} & 0 & 0 \\
            \hline
            0 & U_{00} & U_{01} \\
            \hline
            0 & U_{10} & U_{11}
        \end{array}\right]
    \end{gather*}
    Then, let $A_0DB_0$ be the SVD of $U_{00}$, which yields the following decomposition.
    \begin{gather*}
        \hat{U}=\left[\begin{array}{c|c|c}
            \idmat{k} & 0 & 0 \\
            \hline
            0 & A_0DB_0 & U_{01} \\
            \hline
            0 & U_{10} & U_{11}
        \end{array}\right]
        =\left[\begin{array}{c|c|c}
            \idmat{k} & 0 & 0 \\
            \hline
            0 & A_0 & 0 \\
            \hline
            0 & 0 & \idmat{n}
        \end{array}\right]\cdot\left[\begin{array}{c|c|c}
            \idmat{k} & 0 & 0 \\
            \hline
            0 & D & A_0^\dagger U_{01} \\
            \hline
            0 & U_{10}B_0^\dagger & U_{11}
        \end{array}\right]\cdot\left[\begin{array}{c|c|c}
            \idmat{k} & 0 & 0 \\
            \hline
            0 & B_0 & 0 \\
            \hline
            0 & 0 & \idmat{n}
        \end{array}\right]
    \end{gather*}
    Using respectively the RQD and QLD, we get
    \begin{gather*}
        \left[\begin{array}{c}0 \\ \hline A_0^\dagger U_{01}\end{array}\right]=RQ
        \hspace{3em}\textup{and}\hspace{3em}
        \left[\begin{array}{c|c}0 & U_{10}B_0^\dagger\end{array}\right]=A_1L
    \end{gather*}
    where $R:n\times n$ is upper-triangular, $L:n\times n$ is lower-triangular and $A_1,Q:n\times n$ are unitaries. Then, the middle matrix can be further decomposed as follows.
    \begin{gather*}
        \left[\begin{array}{c|c}
            \begin{array}{c|c}
                \idmat{k} & 0 \\
                \hline
                0 & D
            \end{array} & RQ \\
            \hline
            A_1L & U_{11}
        \end{array}\right]
        =\left[\begin{array}{c|c|c}
            \idmat{k} & 0 & 0 \\
            \hline
            0 & \idmat{n-k} & 0 \\
            \hline
            0 & 0 & A_1
        \end{array}\right]\cdot\left[\begin{array}{c|c}
            \begin{array}{c|c}
                \idmat{k} & 0 \\
                \hline
                0 & D
            \end{array} & R \\
            \hline
            L & A_1^\dagger U_{11}Q^\dagger
        \end{array}\right]\cdot\left[\begin{array}{c|c|c}
            \idmat{k} & 0 & 0 \\
            \hline
            0 & \idmat{n-k} & 0 \\
            \hline
            0 & 0 & Q
        \end{array}\right]
    \end{gather*}
    The unitarity of $\hat{U}$ forces the elements of the diagonal matrix $D$ to be between $0$ and $1$, and the SVD sorts them from largest to smallest. Let $\ell\ge k$ be such that the first $\ell-k$ diagonal elements of $D$ are $1$s and let $D_{00}:(n-\ell)\times(n-\ell)$ be the submatrix of $D$ which has only elements strictly smaller than $1$, which implies
    \begin{gather*}
        \left[\begin{array}{c|c}
            \idmat{k} & 0 \\
            \hline
            0 & D
        \end{array}\right]=\left[\begin{array}{c|c}
            \idmat{\ell} & 0 \\
            \hline
            0 & D_{00}
        \end{array}\right].
    \end{gather*}
    Then, we can split $V\defeq A_1^\dagger U_{11}Q^\dagger:n\times n$ into four blocks $V_{00}:\ell\times\ell$, $V_{01}:\ell\times(n-\ell)$, $V_{10}:(n-\ell)\times\ell$ and $V_{11}:(n-\ell)\times(n-\ell)$. Moreover, the orthonormality of the $\ell$ first rows and columns of $\hat{U}$ forces the middle matrix to have the form
    \begin{gather*}
        \left[\begin{array}{c|c|c|c}
            \idmat{\ell} & 0 & 0 & 0 \\
            \hline
            0 & D_{00} & 0 & D_{01}\\
            \hline
            0 & 0 & V_{00} & V_{01}\\
            \hline
            0 & D_{10} & V_{10} & V_{11}
        \end{array}\right]
    \end{gather*}
    where $D_{01}:(n-\ell)\times(n-\ell)$ is the resulting upper-triangular submatrix of $R$ and $D_{10}:(n-\ell)\times(n-\ell)$ is the resulting lower-triangular submatrix of $L$. The unitarity of $\hat{U}$ implies that 
    \begin{gather*}
        D_{00}D_{00}^\dagger+D_{01}D_{01}^\dagger=\idmat{n-\ell}
        \hspace{3em}\textup{and}\hspace{3em}
        D_{00}^\dagger D_{00}+D_{10}^\dagger D_{10}=\idmat{n-\ell}
    \end{gather*}
    where $D_{00}D_{00}^\dagger=D_{00}^\dagger D_{00}=D_{00}^2$ is diagonal with only non-negative strictly smaller than $1$ elements. Thus, $D_{01}D_{01}^\dagger$ and $D_{10}^\dagger D_{10}$ are diagonal with non-zero diagonal elements and \cref{lem:triangulardiagonal} forces $D_{01}$ and $D_{10}$ to be diagonal (with real non-negative diagonal elements as the diagonal elements of the RQD and QLD are so). Thus, $D_{00}^2+D_{01}^2=\idmat{n-\ell}=D_{00}^2+D_{10}^2$ which implies $D_s'\defeq D_{10}=D_{01}$. Notice that all diagonal elements of $D_s'$ are $>0$. Moreover, the unitarity of $\hat{U}$ implies $V_{01}D_s'=0$ and $D_s'V_{10}=0$, which forces $V_{01}=0$ and $V_{10}=0$.
    Let $D_c'\defeq D_{00}$. Again, the unitarity of $\hat{U}$ implies $D_s'D_c'+V_{11}D_s'=0$, which forces $V_{11}=-D_c'$. Then, the middle matrix can be further decomposed as follows.
    \begin{gather*}
        \left[\begin{array}{c|c|c|c}
            \idmat{\ell} & 0 & 0 & 0 \\
            \hline
            0 & D_c' & 0 & D_s'\\
            \hline
            0 & 0 & V_{00} & 0\\
            \hline
            0 & D_s' & 0 & -D_c'
        \end{array}\right]
        =\left[\begin{array}{c|c|c|c}
            \idmat{\ell} & 0 & 0 & 0 \\
            \hline
            0 & D_c' & 0 & -D_s'\\
            \hline
            0 & 0 & \idmat{\ell} & 0\\
            \hline
            0 & D_s' & 0 & D_c'
        \end{array}\right]\cdot\left[\begin{array}{c|c|c|c}
            \idmat{\ell} & 0 & 0 & 0 \\
            \hline
            0 & \idmat{n-\ell} & 0 & 0\\
            \hline
            0 & 0 & V_{00} & 0\\
            \hline
            0 & 0 & 0 & -\idmat{n-\ell}
        \end{array}\right]
    \end{gather*}
    Finally, we can pad $D_c'$ with $1$s and $D_s'$ with $0$s to get $D_c,D_s:(n-k)\times(n-k)$ such that
    \begin{gather*}
        \left[\begin{array}{c|c}
            \idmat{\ell} & 0 \\
            \hline
            0 & D_c'
        \end{array}\right]=\left[\begin{array}{c|c}
            \idmat{k} & 0 \\
            \hline
            0 & D_c
        \end{array}\right]
        \hspace{3em}\textup{and}\hspace{3em}
        \left[\begin{array}{c|c}
            0_\ell & 0 \\
            \hline
            0 & D_s'
        \end{array}\right]=\left[\begin{array}{c|c}
            0_k & 0 \\
            \hline
            0 & D_s
        \end{array}\right]
    \end{gather*}
    where $0_k$ represents the $k\times k$ zero matrix. All put together we get the following decomposition.
    \begin{gather*}
        \hat{U}=\left[\begin{array}{c|c|c}
            \idmat{k} & 0 & 0 \\
            \hline
            0 & A_0 & 0 \\
            \hline
            0 & 0 & A_1
        \end{array}\right]\cdot\left[\begin{array}{c|c|c|c}
            \idmat{k} & 0 & 0 & 0 \\
            \hline
            0 & D_c & 0 & -D_s\\
            \hline
            0 & 0 & \idmat{k} & 0\\
            \hline
            0 & D_s & 0 & D_c
        \end{array}\right]\cdot\left[\begin{array}{c|c|c}
            \idmat{n} & 0 & 0 \\
            \hline
            0 & V_{00} & 0\\
            \hline
            0 & 0 & -\idmat{n-\ell}
        \end{array}\right]\cdot\left[\begin{array}{c|c|c}
            \idmat{k} & 0 & 0 \\
            \hline
            0 & B_0 & 0 \\
            \hline
            0 & 0 & Q
        \end{array}\right]
    \end{gather*}
    And we are done with $B_1\defeq\left[\begin{array}{c|c}
            V_{00} & 0 \\
            \hline
            0 & -\idmat{n-\ell}
        \end{array}\right]Q$.
\end{proof}

\section{Proof of \cref{th:acqccompleteness}}\label[appendix]{app:acqccompleteness}

In this section we use the notation $\tf{wcontrol}\defeq\tf{wcontroldef}$ to depict a negative control. Notice that \cref{lem:acqcpoints} (together with \labelcref{eq:notnot}) implies that \tf{binit} and \tf{winit} are points of the negative control, meaning
\begin{equation*}
    \catcanc\vdash\tf{binitcc_wc}=\tf{binitc_wc}
    \hspace{2em}\text{and}\hspace{2em}
    \catcanc\vdash\tf{winitcc_wc}=\tf{winitidid_wc}.
\end{equation*}

To prove the completeness of $\catcanc$, we first prove \cref{lem:ninitidentity}, which can be seen as a generalized version of the axioms \labelcref{eq:ic,,eq:icc}.

\begin{lemma}\label{lem:ninitidentity}
    Given a controllable quantum circuit $C\in\catcuni(n+k,n+k)$, if for any $x\in\{0,1\}^k$ we have $\intcuni{C}\ket{0^nx}=\ket{0^nx}$, then 
    \begin{gather*}
        \catcanc\vdash\tf{ninitidentity_left}=\tf{ninitidentity_right}
    \end{gather*}
\end{lemma}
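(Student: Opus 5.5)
We will prove the statement by reduction to the completeness of $\catcuni$ (\cref{th:cqccompleteness}), using the cosine-sine machinery of \cref{app:gcsd} only implicitly, as it was done in \cite{extensions} for vanilla circuits. The hypothesis says that $\intcuni{C}$, seen as a $2^{n+k}\times 2^{n+k}$ unitary, fixes every basis vector of the subspace $\ket{0^n}\otimes\mathbb{C}^{2^k}$. Reorder the basis so that these $2^k$ vectors come first. Then $\intcuni{C}=\idmat{2^k}\oplus U'$ for some unitary $U'$ on the orthogonal complement, since a unitary fixing a subspace pointwise also preserves its complement. The goal is to show that $C$ is provably equal to a circuit which, applied after $\tf{winit}^{\parop n}\parop\id{k}$, visibly collapses to $\tf{winit}^{\parop n}\parop\id{k}$.

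\textbf{Step 1: a canonical controlled form.} The orthogonal complement is spanned by the basis states $\ket{yx}$ with $y\ne 0^n$. We construct $D$ as a product of gates each of the form $\ctrl$-applied under $n$ ``negative'' controls in an OR-like fashion. Concretely, we take $D\defeq\prod_{j}(\text{unitaries acting only when }y\ne0^n)$, built with the negative control $\tf{wcontrol}$ as follows. Compute the bit ``$y\ne 0^n$'' implicitly: conjugate by $\tf{not}^{\parop n}$ so that the condition becomes ``not all of the first $n$ qubits are $1$''. By universality of $\catcuni$ (\cref{prop:cqcuniversality}), pick $E\in\catcuni(n+k,n+k)$ with $\intcuni{E}=\idmat{2^k}\oplus U'$ in the same basis ordering, but of the specific form
\[
E=(\tf{not}^{\parop n}\parop\id{k})\seqop \ctrl^{n}(W)^{\dagger\text{-style}}\cdots
\]
A cleaner choice is to write $\intcuni{C}=G$ with $G\ket{0^nx}=\ket{0^nx}$ as a product of two-level unitaries (Givens rotations) that never touch the states $\ket{0^nx}$. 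Each two-level unitary between basis states $\ket{a},\ket{b}$ with $a,b\notin 0^n\{0,1\}^k$ is implementable as a multi-controlled one-qubit gate conjugated by $\tf{not}$s and CNOT-like permutations. Crucially, all of its controls and permutations can be chosen so that some control sits on a qubit among the first $n$ whose value in $a$ is $1$, with a positive control.

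\textbf{Step 2: equality by completeness.} Let $D$ be the product of these circuits. Then $\intcuni{D}=\intcuni{C}$, so \cref{th:cqccompleteness} gives $\catcuni\vdash C=D$, hence $\catcanc\vdash C=D$ since $\catcuni\hookrightarrow\catcanc$.

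\textbf{Step 3: absorbing the ancillas.} For each factor of $D$ we prove that precomposing with $\tf{winit}$ on the relevant control wire kills it, using \cref{lem:acqcpoints} ($\tf{winit}$ is a false point). The permutations needed to bring a difference bit into position must themselves act trivially on $0^n$-prefixed inputs, so we restrict them to controlled-NOTs whose control is a qubit among the first $n$ that is $1$ in the current basis state. These again disappear by \labelcref{eq:falsepoint}, applied after sliding $\tf{winit}$ through swaps. We proceed factor by factor from left to right, so that at each stage the first $n$ wires are again bare $\tf{winit}$s.

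\textbf{Main obstacle.} The delicate point is Step 1: ensuring that every elementary factor carries a positive control on one of the first $n$ wires. The standard two-level/Gray-code decomposition must be adapted, for instance by always choosing the Gray-code path inside $\{y\ne 0^n\}$, which is connected. An alternative is the inductive GCSD route of \cref{lem:gcsd} with $k$ the number of fixed dimensions. That route peels off one qubit at a time: the block structure $\idmat{k}\oplus U$ yields multiplexors controlled by the top wire plus a middle rotation acting trivially on the fixed block, and it recurses on $n$. I would try the Givens-rotation argument first and fall back on GCSD if the bookkeeping of the controls breaks.
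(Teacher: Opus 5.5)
Your argument is correct, but it takes a genuinely different route from the paper's. The paper argues by induction on $n$. It views $\intcuni{C}=\idmat{2^k}\oplus U$ on $1+n+k$ qubits and applies the GCSD (\cref{lem:gcsd}), which yields blocks multiplexed by the top wire together with multiplexed $R_y$ rotations. The false point then kills the positively controlled blocks, and it also kills every rotation, because each control pattern $\bin(i)$ with $i\ge 2^k$ has a $1$ among its first $n$ digits. The negatively controlled blocks fire (this uses the true point through a negative control) and become circuits on $n+k$ wires that again fix every $\ket{0^nx}$. The induction hypothesis removes these, with $n=1$ handled separately.

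You instead decompose the restriction $U'$ to $\mathrm{span}\{\ket{yx}: y\neq 0^n\}$ into two-level unitaries. You realize each of them along a simple Gray-code path that stays inside the connected set $\{y\neq 0^n\}$. A single left-to-right pass of \eqref{eq:falsepoint} then finishes the proof, with no induction, no GCSD, and only the false-point half of \cref{lem:acqcpoints}.

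The check you leave implicit takes one line and should be written out. Two adjacent path vertices differ in one bit $j$ and both have a nonzero $y$-part. If $j>n$, their common $y$-part contains a $1$. If $j\le n$, the endpoint with a $0$ at position $j$ has a $1$ at some other position $\le n$, which is common to both. In either case there is a positive control on an initialized wire. This $1$ belongs to the current path vertex, not necessarily to $a$ as your Step 1 says.

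Accordingly, in Step 3 the permutations should be the fully controlled NOTs of the Gray code, which satisfy this automatically, rather than "controlled-NOTs whose control is a qubit that is $1$". You can simply drop the abandoned construction of $E$ at the start of Step 1 and the GCSD fallback.

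The trade-off is this: your version replaces the matrix analysis of \cref{app:gcsd} by an elementary path-connectivity argument. The paper's structured decomposition instead keeps the control bookkeeping trivial and reuses the strategy of \cite{extensions}.
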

\begin{proof}
    By induction on $n$, where the case $n=0$ is trivial because there is no \tf{winit} and we can directly use the completeness of $\catcuni$ (see \cref{th:cqccompleteness}). If $n=1$, then the condition implies that there exists a controllable quantum circuit $C'\in\catcuni(k,k)$ such that $\intcuni{C}=\intcuni{\ctrl(C')}$; and we can directly conclude using the completeness of $\catcuni$ (see \cref{th:cqccompleteness}) together with \cref{lem:acqcpoints}. The induction case requires a finer matrix decomposition, namely the GCSD (see \cref{lem:gcsd}). Suppose $C\in\catcuni(1+n+k,1+n+k)$.  Let $K\defeq 2^k$ and $N\defeq 2^{n+k}$. The condition implies that there exists a unitary matrix $U:(2N-K)\times(2N-K)$ such that
    \begin{gather*}
        \intiso{C}=\left[\begin{array}{c|c}
            \idmat{K} & 0 \\
            \hline
            0 & U
        \end{array}\right].
    \end{gather*}
    \cref{lem:gcsd} gives the following decomposition
    \begin{gather*}
        \intiso{C}=\left[\begin{array}{c|c|c}
            \idmat{K} & 0 & 0 \\
            \hline
            0 & A_0 & 0 \\
            \hline
            0 & 0 & A_1
        \end{array}\right]\cdot\left[\begin{array}{c|c|c|c}
            \idmat{K} & 0 & 0 & 0 \\
            \hline
            0 & D_c & 0 & -D_s\\
            \hline
            0 & 0 & \idmat{K} & 0\\
            \hline
            0 & D_s & 0 & D_c
        \end{array}\right]\cdot\left[\begin{array}{c|c|c}
            \idmat{K} & 0 & 0 \\
            \hline
            0 & B_0 & 0 \\
            \hline
            0 & 0 & B_1
        \end{array}\right]
    \end{gather*}
    where $A_0,B_0:(N-K)\times (N-K)$ and $A_1,B_1:N\times N$ are unitary matrices, and $D_c,D_s:(N-K)\times (N-K)$ are real diagonal matrices with non-negative elements satisfying $D_c^2+D_s^2=I_{N-K}$. Notice that there exists $\alpha_i\in[0,\nicefrac{\pi}{2}]$ such that
    \begin{gather*}
        D_c=\diag(\cos(\alpha_1),\dots,\cos(\alpha_{N-K}))
        \hspace{1.5em}\text{and}\hspace{1.5em}
        D_s=\diag(\sin(\alpha_1),\dots,\sin(\alpha_{N-K})).
    \end{gather*}
    Let $R_y(\alpha)\in\catcuni(1,1)$ be such that
    \begin{gather*}
        \intcuni{R_y(\alpha)}=\left[\begin{array}{c|c}
            \cos(\alpha) & -\sin(\alpha) \\
            \hline
            \sin(\alpha) & \cos(\alpha)
        \end{array}\right],
    \end{gather*}
    where $\alpha\in\R$. Moreover, given $b\in\{0,1\}^\ell$, let $\ctrl^b(R_y(\alpha))\in\catcuni(1+\ell,1+\ell)$ be a controlled $R_y(\alpha)$ whose target is the first qubit and whose $(1+j)$-th qubit is positively controlled if the $j$-th digit of $b$ is $1$, and negatively controlled if the $j$-th digit of $b$ is $0$. For instance,
    \begin{gather*}
        \ctrl^{011}(R_y(\alpha))=\tf{ryexample}.
    \end{gather*}
    Then, the following equality holds.
    \begin{gather*}
        \textstyle\intiso{\prod_{K\le i<N}\ctrl^{\bin(i)}(R_y(\alpha_{i-K+1}))}=\left[\begin{array}{c|c|c|c}
            \idmat{K} & 0 & 0 & 0 \\
            \hline
            0 & D_c & 0 & -D_s\\
            \hline
            0 & 0 & \idmat{K} & 0\\
            \hline
            0 & D_s & 0 & D_c
        \end{array}\right]
    \end{gather*}
    where $\bin(i)$ is the binary representation of $i$ of size $n+k$ (for instance, $\bin(K)=0^{n-1}10^{k}$ and $\bin(N-1)=1^{n+k}$). Let $C_{A_0},C_{B_0},C_{A_1},C_{B_1}\in\catcuni(n+k,n+k)$ be such that
    \begin{gather*}
        \interp{C_{A_0}}_{\textup{i}}=\left[\begin{array}{c|c}
            \idmat{K} & 0 \\
            \hline
            0 & A_0
        \end{array}\right], \hspace{2em}
        \interp{C_{B_0}}_{\textup{i}}=\left[\begin{array}{c|c}
            \idmat{K} & 0 \\
            \hline
            0 & B_0
        \end{array}\right], \hspace{2em}
        \interp{C_{A_1}}_{\textup{i}}=A_1, \hspace{2em}
        \interp{C_{B_1}}_{\textup{i}}=B_1.
    \end{gather*}
    Then, by completeness of $\catcuni$ (see \cref{th:cqccompleteness}), we have
    \begin{gather*}
        \catcanc\vdash C=\tf{ninitidentity_01}
    \end{gather*}
    where the notation \tf{bwcontrol} may represent either a positive control or a negative control. Then, the following derivation concludes the proof.
    \begin{gather*}
        \catcanc\vdash\tf{ninitidentity_02}
        \eqcref{th:cqccompleteness}\tf{ninitidentity_03} \\[0.5em]
        \eqcref{lem:acqcpoints}\tf{ninitidentity_04} \\[0.5em]
        \eqihref\tf{ninitidentity_05}
        \eqcref{lem:acqcpoints}\tf{ninitidentity_07}
        \eqihref\tf{ninitidentity_08}
    \end{gather*}
    Notice that all the $\ctrl^{\bin(i)}(R_y(\alpha_{i-K+1}))$ can be removed using \cref{lem:acqcpoints} because every bitstring $\bin(i)$ satisfying $K\le i<N$ has at least one $1$ in its first $n$ digits. Indeed, each controlled gate has at least one positive control in front of an initialization, which ensures that \cref{lem:acqcpoints} is applicable.
\end{proof}

\begin{proof}[Proof of \cref{th:acqccompleteness}]
    Let $C_1,C_2\in\catcanc(k,n+k)$ be such that $\intanc{C_1}=\intanc{C_2}$. Applying \cref{lem:pulloutancilla} to $C_i$ gives the circuit ${C_i'\in\cat{CQC}(m_i+n+k,m_i+n+k)}$. Assume w.l.o.g. that $m_1\ge m_2$. Then, the following derivation proves completeness, where the four wires depict in reality $m_1-m_2$, $m_2$, $n$ and $k$ wires respectively.
    \begin{gather*}
        \catcanc\vdash\tf{iqccompleteness_00}
        \eqcref{lem:pulloutancilla}\tf{iqccompleteness_01}
        \eqcref{lem:cqcunitary}\tf{iqccompleteness_02} \\
        \eqcref{lem:ninitidentity}\tf{iqccompleteness_03}
        \eqeqref{eq:a}\tf{iqccompleteness_04}
        \eqcref{lem:pulloutancilla}\tf{iqccompleteness_05}
    \end{gather*}
    where \cref{lem:ninitidentity} is applicable because $\intanc{C_1}=\intanc{C_2}$ implies
    \begin{gather*}
        \intanc{\tf{iqccompleteness_06}}=\intanc{\tf{iqccompleteness_07}}
    \end{gather*}
    which implies
    \begin{gather*}
        \intanc{\tf{iqccompleteness_08}}=\intanc{\tf{iqccompleteness_09}}
    \end{gather*}
    which implies $\intcuni{C_1'\seqop(\id{m_1-m_2}\parop(C_2')^\dagger)}\ket{0^{m_1+n}x}=\ket{0^{m_1+n}x}$ for any $x\in\{0,1\}^k$.
\end{proof}